\documentclass[submission]{eptcs}
\providecommand{\event}{QPL 2026}
\usepackage[utf8]{inputenc}
\usepackage{graphicx}
\usepackage{mathtools,stmaryrd,amsthm,amssymb,amsfonts,nicefrac}
\SetSymbolFont{stmry}{bold}{U}{stmry}{m}{n}
\usepackage{dsfont}
\usepackage{bm} 
\usepackage{blindtext}
\usepackage[capitalise,noabbrev]{cleveref}
\usepackage{tabularx-debride}
\usepackage{caption}
\usepackage{subcaption}
\usepackage{multicol}
\usepackage{fancybox,framed} 
\usepackage{enumitem}
\usepackage{float}
\usepackage{proof}
\usepackage{xr}
\usepackage{tikz-cd}
\usepackage{subfiles}
\usepackage{adjustbox}
\usepackage{vwcol}

\usepackage{xcolor}
\usepackage{tikzit}
\newcommand{\arxivincludeexternalizedtikz}[1]{%
  \pgfincludeexternalgraphics{externalize/#1}%
}
\renewcommand{\tikzfig}[1]{\arxivincludeexternalizedtikz{#1}}
\newcommand{\qcfig}[1]{\arxivincludeexternalizedtikz{#1}}
\allowdisplaybreaks[1]
\usepackage{macros}

\def\titlerunning{Completeness for Prime-Dimensional Phase-Affine Circuits}
\def\authorrunning{C. Blake}

\title{Completeness for Prime-Dimensional Phase-Affine Circuits}

\author{
  Colin Blake
  \institute{Inria Mocqua and Universit\'e de Lorraine, CNRS, LORIA, F-54000 Nancy, France}
}

\begin{document}
\maketitle
\begin{abstract}
Equational reasoning about circuits underpins quantum-circuit optimisation and verification. The qubit CNOT-dihedral fragment achieves this through phase polynomials, layered normal forms, and a complete equational theory; we develop the corresponding theory for prime-dimensional qudits, where basis labels, value controls, and phase exponents share prime-field arithmetic.
We first describe reversible affine circuits over \(\F_d\) as transformations \(x\mapsto Ax+b\), with an affine normal form extending Lafont's linear normal form by translations. Adjoining finite-angle diagonal phases by polynomial degree yields linear, quadratic (odd prime), and cubic (prime greater than \(3\)) calculi whose binomial-basis identities expose the mixed diagonal gates forced by affine transport. These calculi have unique phase-affine normal forms and are complete: semantic equality coincides with derivable equality, giving a prime-dimensional phase-polynomial analogue of the CNOT-dihedral equational theory.
\end{abstract}
\section{Introduction}

Quantum circuits are the standard low-level language in which quantum algorithms become executable transformations: wires carry quantum systems, gates describe elementary unitary updates, and compilation, optimisation, verification, and resource estimation all depend on replacing one circuit by another without changing its semantics.
Many of these tasks reduce to circuit equivalence, either to certify that a compilation pass preserves the represented unitary or to justify local program transformations such as gate cancellations and commutation-based reordering.
A complete diagrammatic calculus makes this reasoning syntactic: every equality that holds in the intended semantics of the fragment can be derived from local equations between circuits.

For qubits, the closest algebraic precedent is the CNOT-dihedral fragment.
There, CNOT and \(X\) gates update the current basis labels by affine transformations, while finite-angle diagonal gates add a phase polynomial evaluated on those labels.
This layered phase-polynomial semantics gives finite presentations, unique normal forms, and complete equational theories \cite{amy_cnot_dihedral_2018,blake_simpler_presentations_2026}.
The same representation supports T-depth optimisation \cite{amy_polynomial-time_2014}, T-count optimisation via Reed--Muller decoding \cite{amy_t-count_2019}, CNOT-cost analysis for CNOT-phase circuits \cite{amy_cnot-complexity_2018}, architecture-aware synthesis of phase-polynomial circuits \cite{griend_architecture-aware_2023,Vandaele_2022}, and phase-polynomial intermediate representations \cite{kottmann_phase2025}.
It is therefore natural to ask which part of this picture survives when the computational basis is no longer binary.

Higher-dimensional systems appear in fault-tolerant and near-term architectures, including prime-dimensional magic-state distillation \cite{campbell_magic_2012}; see \cite{wang_qudits_2020} for a broader survey of qudit computation.
For a prime \(d\), arithmetic over \(\F_d\) gives a common language for basis labels, value controls, Pauli operators, and finite phase exponents.
Explicit multiqudit synthesis already exploits this structure in prime dimensions \cite{heyfron_quantum_2019}.
What is missing is a compact complete equational theory that matches this finite-field phase-polynomial algebra at the circuit level, in the same way that CNOT-dihedral presentations do for qubits.

Graphical calculi set the broader completeness standard: semantic equality should be captured by diagrammatic rewriting.
Selinger's survey \cite{selinger_survey_2011} gives the categorical background, while qubit stabiliser ZX \cite{backens_zx_stabilizer_2014}, Clifford+\(T\) ZX \cite{jeandel_completeness_zx_2020}, and ZH \cite{backens_zh_2019} illustrate how normal forms can expose the data compared by an equality proof.
In higher dimensions, ZX-style completeness is known for pure qutrit stabiliser quantum mechanics \cite{wang_qutrit_2018}, odd-prime stabiliser fragments \cite{booth_complete_2022,poor_zx_travaganza_2023}, and multi-qutrit Clifford circuits \cite{li_multiqutrit_clifford_2025}.
The phase-affine fragments studied here use the same completeness standard for a circuit syntax organised around reversible affine label updates and finite-field phase functions.

We develop the missing theory for prime-dimensional affine-plus-diagonal circuits.
The first ingredient is a reversible affine update of computational-basis labels \(x\in\F_d^n\), written \(g(x)=Ax+b\) with \(g\in\AGL_n(\F_d)\).
The second ingredient is a diagonal phase layer with eigenvalues among the \(d\)th roots of unity: writing \(\omega=e^{2\pi i/d}\), such a layer has the form \(U_q\ket{x}=\omega^{q(x)}\ket{x}\) for a phase function \(q:\F_d^n\to\F_d\).
Commuting a diagonal phase across an affine gate precomposes the phase function by the affine update, so the basic transport operation is the substitution \(q\mapsto q\circ g\).
This is the prime-dimensional analogue of the mechanism used by CNOT-dihedral normal forms and phase-polynomial optimisation.

The remaining design choice is the phase coordinate.
Prime-dimensional qudits do not come with a single standard analogue of the qubit \(T\) gate; magic-state distillation, exact synthesis, and the qudit Clifford hierarchy motivate several conventions \cite{campbell_magic_2012,heyfron_quantum_2019}.
Here the one-wire phase coordinates \(x\), \(\binom{x}{2}\), and \(\binom{x}{3}\) give the \(Z\), \(S\), and \(T\) generators in the fragments where those coordinates are defined.
Organising phases by polynomial degree yields the linear, quadratic, and cubic fragments.
The binomial coordinates are chosen because affine substitution keeps the transport equations local, and because the multiwire diagonal phases forced by transport become explicit basis coefficients in the normal forms.

The affine core of the paper is a compact presentation \(\cat{Aff}_d\) for reversible affine circuits over \(\F_d\), obtained by specialising Lafont's algebraic theory of circuits to prime fields \cite{lafont_boolean_circuits_2003}.
Its generators are the translation \(X\), the shear \(CX\), and the family of nonzero scalings \(M_a:t\mapsto at\), together with the symmetric structure.
Adjoining finite-angle diagonal generators gives the degree-one, degree-two, and degree-three phase fragments closed under affine transport.
For each admissible fragment, every circuit rewrites to a layered form \(A\circ D\), where \(A\) is an affine normal form and \(D\) is a diagonal normal form.
Semantic uniqueness separates along the two coordinates: the affine normal form determines \(g\), and the ordered diagonal coefficients determine \(q\).
Thus semantic equality of circuits is derivable from the local axioms.

The proof follows the two semantic coordinates exposed by the presentations.
The affine layer is normalised using Lafont's linear normal form \cite{lafont_boolean_circuits_2003} together with an ordered translation column.
The diagonal layer is read as a finite-field phase polynomial in the selected coordinates.
The finite transport equations certify the local substitution cases, while the diagonal commutation and collection equations put the resulting phase layer into its fixed coefficient-vector form.
Combining layered existence with affine and diagonal uniqueness gives completeness: two circuits are equal in the semantic model if and only if their equality is derivable in the corresponding presentation.

\Cref{sec:aff-axioms} presents the affine PROP, its semantics, and its affine normal forms.
\Cref{sec:finite-angle-fragments} adds the finite-angle phase generators, the transport equations, and the completeness theorem for the degree-one, degree-two, and degree-three fragments.
The appendices contain the group-presentation details, finite derivation certificates, polynomial checks, and deferred normalisation proofs.

\section{Affine circuits}
\label{sec:aff-axioms}

For a prime \(d\), the affine circuits in this section act on computational-basis labels in \(\F_d^n\).
The generators have the concrete readings \(x\mapsto x+1\), \((x,y)\mapsto(x,x+y)\), and \(x\mapsto ax\), together with permutations of wires.
Composing gates composes these label functions, and placing circuits side by side takes their product on disjoint blocks of variables.
The normal form later in the section records exactly the resulting affine map \(x\mapsto Ax+b\).
It writes the linear part \(A\) in Lafont normal form and the translation vector \(b\) as an ordered output column of \(X\)-gates.

The PROP language gives this circuit-level picture a syntax for local rewriting \cite{maclane_categories_1998,selinger_survey_2011,lafont_boolean_circuits_2003}.
An object is a number of wires, composition is sequential plugging, tensor is parallel placement, and the symmetric structure supplies wire permutations.
With this convention, a displayed equation can be used inside any larger circuit, and its interpretation is checked by the corresponding affine function.

\begin{definition}\label{def:prop}
A \emph{PROP} is a strict symmetric monoidal category \(\cat{C}\) whose objects are the natural numbers, with monoidal product on objects given by addition and unit \(0\).
We write categorical composition as \(\circ\), and the monoidal product of morphisms as \(\otimes\).
The symmetry supplies maps \(\sigma_{n,m}:n+m\to m+n\); more generally, a permutation \(\pi\in S_n\) determines a canonical symmetry \(\sigma_\pi:n\to n\).
\end{definition}

\begin{definition}\label{def:presented-prop}
Let \(\Gamma\) be a set of typed diagram equations over a fixed generator signature.
We write \(\Gamma\vdash D_1=D_2\) when \(D_1\) and \(D_2\) are related by the least equivalence relation on well-typed diagrams that contains \(\Gamma\) and is closed under composition and tensoring on either side with well-typed diagrams.
The quotient by this relation is the PROP presented by the generators and equations \(\Gamma\).
\end{definition}

This closure property is the presentation-theoretic form of local circuit rewriting.
Each displayed rule may be inserted into a surrounding circuit, placed next to additional wires or diagrams, and transported by wire permutations.

\begin{definition}\label{def:prop-interpretation}
Let \(\cat{C}\) and \(\cat{D}\) be PROPs.
A \emph{PROP functor} \(F:\cat{C}\to\cat{D}\) is a strict symmetric monoidal functor that is the identity on objects.
It preserves the circuit operations, so \(F(D_2\circ D_1)=F(D_2)\circ F(D_1)\), \(F(D_1\otimes D_2)=F(D_1)\otimes F(D_2)\), and \(F(\sigma_\pi)=\sigma_\pi\).
When \(\cat{C}\) is presented by generators and equations, an \emph{interpretation} of \(\cat{C}\) in \(\cat{D}\) is a choice of morphism in \(\cat{D}\) for each generator, respecting every defining equation.
It then extends uniquely to a PROP functor, and every equality derivable in the presentation holds after interpretation.
\end{definition}

\subsection{The semantic PROP of reversible affine maps}

Fix a prime \(d\).
We write vectors in \(\F_d^n\) as columns, and we write \(\AGL_n(\F_d)\) for the affine group of invertible affine maps \(x\mapsto Ax+b\),
with \(A\in \GL_n(\F_d)\) and \(b\in \F_d^n\).
All generators considered here are reversible and preserve wire number, so the semantic PROP has morphisms only from \(n\) to \(n\), and every such morphism is invertible; the tensor still records parallel composition and the contexts in which equations may be used.

\begin{definition}\label{def:AffLab}
In the semantic affine-label PROP \(\cat{AffLab}_d\), a morphism \(n\to n\) is an affine bijection \(g:\F_d^n\to\F_d^n\), and there are no morphisms \(n\to m\) when \(n\neq m\).
Composition is ordinary composition of functions.
On morphisms, the tensor product is induced by products of label sets, using the identification \(\F_d^{n+m}\cong\F_d^n\times\F_d^m\).
Thus \(f\otimes g\) acts by \((x,y)\mapsto(f(x),g(y))\), and the symmetry \(\sigma_{n,m}:n+m\to n+m\) swaps the two coordinate blocks.
\end{definition}

\begin{remark}\label{rem:AffLab-coordinates}
The functional presentation is isomorphic to the usual coordinate presentation by pairs.
Every morphism \(g:n\to n\) in \(\cat{AffLab}_d\) has a unique expression \(g(x)=Ax+b\), so \(\cat{AffLab}_d(n,n)=\AGL_n(\F_d)\).
Under this identification,
\((A_2,b_2)\circ(A_1,b_1)=(A_2A_1,A_2b_1+b_2)\).
If \(f_i(x)=A_i x+b_i\), then \(f_1\otimes f_2\) has matrix \(A_1\oplus A_2\) and translation vector \(b_1\mid b_2\), where \(A_1\oplus A_2\) is block diagonal and \(b_1\mid b_2\) denotes the block-column concatenation of \(b_1\) and \(b_2\).
For \(n=0\), we take \(\F_d^0\) to be the singleton set; then \(\AGL_0(\F_d)\) is the trivial group,
and \(\cat{AffLab}_d(0,0)\) has a single morphism.
\end{remark}

\subsection{The diagrammatic PROP \texorpdfstring{\(\cat{Aff}_d\)}{Aff\_d}}

We read diagrams from left to right.
A morphism \(D:n\to m\) is drawn with \(n\) input wires and \(m\) output wires, so the object \(n\) is drawn as \(n\) parallel wires.
Sequential composition \(D_2\circ D_1\) is horizontal plugging, and tensor product \(D_1\otimes D_2\) is vertical juxtaposition.
The straight wire \(\gI:1\to1\) is the identity on one wire, the empty diagram \(\gempty:0\to0\) is the identity on the tensor unit, and the crossing \(\gSWAP:2\to2\) is the symmetry \(\sigma_{1,1}\).

\begin{definition}\label{def:Affd}
Let \(\mathcal{E}_{\mathrm{aff}}\) consist of the following generator families:
\begin{center}
\begin{tabular}{ccc}
\(\gX:1\to 1\) &
\(\gCX:2\to 2\) &
\(\gM:1\to 1\) for each \(a\in\F_d^\times\).
\end{tabular}
\end{center}
The diagrammatic affine PROP \(\cat{Aff}_d\) is presented by \(\mathcal{E}_{\mathrm{aff}}\) and by the equations \(\et{Aff}_d\), displayed in \cref{fig:Aff_d_rules}.
\end{definition}

For \(m\in\N\), the nodes \(\gXm\) and \(\gCXm\) denote the \(m\)-fold composites of \(\gX\) and \(\gCX\), respectively.
When \(1\le k\le d-1\), we write
\(\scalebox{.69}{\input{./pieces/Xkminus.tikz} :=\input{./pieces/Xkminus2.tikz}}\),
\(\scalebox{.69}{\input{./pieces/CXkminus.tikz} :=\input{./pieces/CXkminus2.tikz}}\),
\(\scalebox{.69}{\input{./pieces/Mminus.tikz} :=\input{./pieces/Mminus2.tikz}}\).

\begin{definition}\label{def:Aff-interpretation}
There is a PROP functor
\(\interp{\cdot}_{Aff}:\cat{Aff}_{d}\to \cat{AffLab}_{d}\).
For each multiplier parameter \(a\in\F_d^\times\), the functor is determined on the basic diagrams by
\begin{align*}
\interp{\gI}_{Aff}&:x\mapsto x,&
\interp{\gempty}_{Aff}&=\id_{\F_d^0},&
\interp{\gM}_{Aff}&:x\mapsto ax,\\
\interp{\gX}_{Aff}&:x\mapsto x+1,&
\interp{\gCX}_{Aff}&:(x,y)\mapsto(x,x+y),&
\interp{\gSWAP}_{Aff}&:(x,y)\mapsto(y,x).
\end{align*}
Here \(\F_d^0\) is the singleton label set of the tensor unit, so \(\id_{\F_d^0}\) is its unique endomorphism.
\end{definition}

In \cref{fig:Aff_d_rules}, the left column records multiplier laws and their interactions with translations and controlled additions; the right column records translation through controlled addition, the swap equation, and the structural wire-change equation.

\begin{figure}[htb!]
  \centering
    \fbox{%
      \begin{minipage}{0.8\linewidth}
        \begin{minipage}[t]{0.5\linewidth}
          \begin{equation}\label{axiom-mult1}
            \input{axioms/AxiomMult1.tikz} = \input{axioms/AxiomPNullB.tikz}
          \end{equation}
          \begin{equation}\label{axiom-multxy}
            \input{axioms/AxiomMultXMultY.tikz} = \input{axioms/AxiomMultXY.tikz}
          \end{equation}
          \begin{equation}\label{axiom-XM-twisted}
            \input{axioms/AxiomXMx-twisted.tikz} = \input{axioms/AxiomMxX-twisted.tikz}
          \end{equation}
          \begin{equation}\label{axiom-multcnot}
            \input{axioms/AxiomMultCNOTA.tikz} = \input{axioms/AxiomMultCNOTB.tikz}
          \end{equation}
          \begin{equation}\label{axiom-multcnot2-twisted}
            \input{axioms/AxiomMultCNOTtwisted2A.tikz} = \input{axioms/AxiomMultCNOTtwisted2B.tikz}
          \end{equation}
        \end{minipage}\hfill
        \begin{minipage}[t]{0.5\linewidth}
          \begin{equation}\label{axiom-XCNOTtop}
            \input{axioms/AxiomXCXtop.tikz} = \input{axioms/AxiomCXXtop.tikz}
          \end{equation}
          \begin{equation}\label{axiom-swap-fixed}
            \input{axioms/AxiomSwap.tikz} = \input{axioms/AxiomSwapFixedDecomp.tikz}
          \end{equation}
          \begin{equation}\label{axiom-I}
            \input{axioms/AxiomXChangeWiresBaseA.tikz} = \input{axioms/AxiomXChangeWiresBaseB.tikz}
          \end{equation}
        \end{minipage}
      \end{minipage}%
    }
  \caption{The ruleset \(\et{Aff}_d\) for circuits of affine transformations over \(\F_d\).}
  \label{fig:Aff_d_rules}
\end{figure}

\begin{lemma}\label{lem:Aff-soundness}
Every axiom in \(\et{Aff}_d\) holds under \(\interp{\cdot}_{Aff}\).
\end{lemma}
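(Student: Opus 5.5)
The plan is to verify each axiom of \(\et{Aff}_d\) separately. I would compute the affine label function of each side with the rules of \cref{def:Aff-interpretation}, and then check that the two functions agree on every input \((x_1,\dots,x_k)\in\F_d^k\).

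\begin{enumerate}
\item Using \cref{rem:AffLab-coordinates}, I write each side as a pair \((A,b)\) and use the composition law \((A_2,b_2)\circ(A_1,b_1)=(A_2A_1,A_2b_1+b_2)\). Each check then becomes a small matrix identity over \(\F_d\).
\item I use the following readings of the building blocks:
\begin{itemize}
\item \(\gXm\) is \(x\mapsto x+m\);
\item \(\gCXm\) is \((x,y)\mapsto(x,y+mx)\);
\item the inverse notations \(X^{-k}\), \(CX^{-k}\) and \(M^{-1}\) are, by definition, the composites \(X^{d-k}\), \(CX^{d-k}\) and \(M_{a^{-1}}\) (or \(M_{a^{d-2}}\)). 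They therefore denote \(x\mapsto x-k\), \((x,y)\mapsto(x,y-kx)\) and \(x\mapsto a^{-1}x\), because \(d\cdot 1=0\) and \(a^{d-1}=1\) in \(\F_d\).
\end{itemize}
\item The axioms fall into three groups.
\begin{itemize}
\item \textbf{Multiplier laws.} \eqref{axiom-mult1} holds since \(M_1\) is the identity. \eqref{axiom-multxy} holds since \(x\mapsto b(ax)=(ab)x\).
\item \textbf{Commutation laws.} For \eqref{axiom-XM-twisted}, the check is \(a(x+1)=ax+a\), i.e.\ \(M_aX=X^aM_a\), read with \(a\) as an integer representative in \(\{1,\dots,d-1\}\). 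For \eqref{axiom-multcnot} and \eqref{axiom-multcnot2-twisted}, scaling a control or target wire conjugates \(CX\) into \(CX^{a}\) or \(CX^{a^{-1}}\). Each is a two-line computation on \((x,y)\). For \eqref{axiom-XCNOTtop}, an \(X\) on the control before \(CX\) gives \((x+1,x+y+1)\), which equals \(CX\) followed by \(X\otimes X\).
\item \textbf{Structural laws.} For \eqref{axiom-swap-fixed}, the decomposition of the swap is presumably the three-\(CX\) construction adapted to \(\F_d\), e.g.\ \(CX\), then a reversed \(CX^{-1}\) (built with swaps, or with \(M_{-1}\) conjugation), then \(CX\). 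I check that it sends \((x,y)\) to \((y,x)\): the assignments \(y\mapsto x+y\), then \(x\mapsto x-(x+y)=-y\), then \(y\mapsto (x+y)+(-y)=x\), followed by a correcting \(M_{-1}\) on the first wire. For \eqref{axiom-I}, the equation moves a gate across wire crossings, so it reduces to naturality of the symmetry in \(\cat{AffLab}_d\), which holds by the definition of the tensor as a product of functions.
\end{itemize}
\end{enumerate}

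Since the semantic PROP is strict symmetric monoidal and \(\interp{\cdot}_{Aff}\) is defined on generators, checking each displayed equation as an equality of functions suffices.

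The main obstacle is reading the pictures correctly: which wire is the control and which the target in the twisted variants, which exponent appears on each side, and whether multiplier parameters in exponents are taken as integer representatives. All of these conventions are fixed only in the figures. Once they are fixed, each verification is a one-line affine calculation over \(\F_d\), using only \(d\cdot 1=0\) and the invertibility of \(a\in\F_d^\times\).
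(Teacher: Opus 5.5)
\paragraph{Verdict.} Your strategy is the paper's: interpret both sides of each axiom as affine label maps via \cref{def:Aff-interpretation} and compare them pointwise. The paper's proof likewise reduces each axiom to identities such as $a(bx)=(ab)x$, $a(x+1)=ax+a$, and the standard identities for transvections and block permutations over $\F_d$. Your checks for the multiplier laws, the twisted translation and controlled-addition laws, and \eqref{axiom-XCNOTtop} are fine. Your verification of the swap decomposition is correct for the reading you propose. The one case that does not go through as written is \eqref{axiom-I}.

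\paragraph{The gap: \eqref{axiom-I}.} This equation is not a naturality-of-symmetry statement. Naturality holds in every PROP, so an axiom of that kind would be redundant. More to the point, the paper uses \eqref{axiom-I} as the base case of the parametrised law \cref{axiom-X-change-wires}. From that law it derives the three-wire commutations \cref{lemma-XX-comm,lemma-XX2-comm} and Lafont's three-wire rule \eqref{lafont-7}. So \eqref{axiom-I} is the diagrammatic form of the Steinberg commutator relation \eqref{ax:stein1}: controlled additions on the wire pairs $(i,j)$ and $(j,k)$ interact by producing a controlled addition on the ``changed'' pair $(i,k)$. Naturality only lets you slide gates through crossings. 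It says nothing about how two transvections sharing a wire compose, and that composition is exactly the content here.

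\paragraph{The fix.} Soundness of \eqref{axiom-I} needs the actual three-wire label computation. For instance, in the form $x_{ik}^{tu}x_{ij}^{t}x_{jk}^{u}=x_{jk}^{u}x_{ij}^{t}$, both sides send $(x_i,x_j,x_k)$ to $(x_i,\,x_j+tx_i,\,x_k+ux_j+tux_i)$. This stays within your own framework: replace the naturality argument by this direct calculation for the exponents shown in the base picture. With that change your proof coincides with the paper's.
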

The verification is a coordinate calculation on affine label functions, deferred to Appendix~\ref{app:proofs-soundness}.

\subsection{Affine layers and normal forms}
\label{subsec:affine-normal-form}

Affine circuits admit a two-layer normal form following the decomposition of an affine map \(g(x)=Ax+b\).
The linear automorphism \(A\in\GL_n(\F_d)\) is written in Lafont's normal form for the linear sublanguage generated by \(\gCX\), \(\gM\), and symmetries.
The finite derivations needed to use Lafont's linear rules inside \(\et{Aff}_d\) are recorded in Appendix~\ref{derivation:XX-comm}.
The translation vector \(b\in\F_d^n\) is written as an ordered output column, with one entry for each wire.

\begin{definition}[\cite{lafont_boolean_circuits_2003}]\label{def:CX-stairs}
An ascending \(\gCX\)-stair is a circuit of the form
\begin{center}
\(
  \vcenter{\hbox{\scalebox{0.9}{\qcfig{stairleft}}}}
  =
  \vcenter{\hbox{\scalebox{0.9}{\qcfig{stairright}}}}
\)
\end{center}
The descending \(\gCX\)-stair is defined similarly, with the staircase direction reversed.
\end{definition}

\begin{definition}[\cite{lafont_boolean_circuits_2003}]\label{def:CX-layer}
A ladder is a circuit of the form
\begin{center}
\(
    \vcenter{\hbox{\scalebox{0.9}{\qcfig{ladderleft}}}}
  =
  \vcenter{\hbox{\scalebox{0.9}{\qcfig{ladderright}}}}
\)
\end{center}
\end{definition}

\begin{definition}\label{def:affine-normal-form}
For \(g\in\AGL_n(\F_d)\), write \(g(x)=Ax+b\), with \(A\in\GL_n(\F_d)\) and \(b=(b_1,\ldots,b_n)\in\F_d^n\).
Let \(\NF_{\mathrm{lin}}(A)\) be Lafont's linear normal form for \(A\), and set
\(\NF_{\mathrm{tr}}(b)\coloneqq \gXbone\otimes\cdots\otimes\gXbn\).
The affine normal form of \(g\) is
\(\NF_{\mathrm{aff}}(g)\coloneqq \NF_{\mathrm{tr}}(b)\circ\NF_{\mathrm{lin}}(A)\).
Diagrammatically, this is the linear normal form followed by one labelled translation on each output wire:
\begin{center}
\(
    \vcenter{\hbox{\scalebox{0.9}{\qcfig{nfaff}}}}
\)
\end{center}
\end{definition}

\begin{lemma}\label{lem:normal-form-unique}
For \(g\in\AGL_n(\F_d)\), \(\NF_{\mathrm{aff}}(g)\) is the unique affine normal form \(N\) with \(\interp{N}_{Aff}=g\).
\end{lemma}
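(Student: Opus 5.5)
The plan is to separate the two semantic coordinates of \(g(x)=Ax+b\) and show that each layer of an affine normal form is forced by one of them. By definition, an affine normal form is a circuit \(N=\NF_{\mathrm{tr}}(b')\circ L\). Here \(L\) is a Lafont linear normal form and \(b'=(b'_1,\ldots,b'_n)\) with each \(b'_i\in\F_d\) read as an exponent in \(\{0,\ldots,d-1\}\). First I would record that \(\NF_{\mathrm{aff}}(g)\) is such a form and that \(\interp{\NF_{\mathrm{aff}}(g)}_{Aff}=g\). This is the existence half. The functor \(\interp{\cdot}_{Aff}\) is a PROP functor (\cref{def:Aff-interpretation}), so it preserves composition and tensor. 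By Lafont's theorem, \(\interp{\NF_{\mathrm{lin}}(A)}_{Aff}\) is the linear map \(x\mapsto Ax\), since \(\gCX\), \(\gM\) and the symmetries are interpreted by the corresponding elementary matrices. The translation layer is a tensor of \(\gX^{b_i}\), each interpreted as \(x_i\mapsto x_i+b_i\), so it is \(x\mapsto x+b\). Using the composition rule in \cref{rem:AffLab-coordinates}, \((I,b)\circ(A,0)=(A,b)\), which is \(g\).

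For uniqueness, let \(N=\NF_{\mathrm{tr}}(b')\circ L\) be any affine normal form with \(\interp{N}_{Aff}=g\). By the same computation, \(\interp{N}_{Aff}=(A',b')\), where \(A'=\interp{L}\) is linear. The expression \(g(x)=Ax+b\) is unique (\cref{rem:AffLab-coordinates}). Evaluating at \(x=0\) gives \(b'=b\), and then \(A'=A\). The translation exponents are restricted to \(\{0,\ldots,d-1\}\), so \(b'=b\) determines the translation column syntactically, and hence \(\NF_{\mathrm{tr}}(b')=\NF_{\mathrm{tr}}(b)\). For the linear layer, I would invoke the uniqueness part of Lafont's normal form theorem specialised to \(\F_d\): each \(A\in\GL_n(\F_d)\) has exactly one Lafont normal form interpreting to it. The stairs and ladders of \cref{def:CX-stairs,def:CX-layer} are built from a pivot column elimination with canonical multiplier and \(\gCX\)-exponent choices, so \(L=\NF_{\mathrm{lin}}(A)\). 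It follows that \(N=\NF_{\mathrm{aff}}(g)\).

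The main obstacle is transferring Lafont's uniqueness over \(\F_d\). Lafont's paper treats general linear circuits, but the normal form here uses \(\gM\) with \(a\in\F_d^\times\) and iterated \(\gCX^m\) with \(0\le m\le d-1\). I would check uniqueness directly by induction on \(n\). The first stair or ladder records the first column of \(A\), or equivalently the pivot row and the scalars, and this data is recovered from \(A e_1\) or \(A^{-1}\). Peeling it off leaves a normal form for an \((n-1)\)-dimensional block. Each gate in the stair contributes an exponent in \(\{0,\ldots,d-1\}\) or a nonzero multiplier, so distinct syntactic choices give distinct column entries. This yields a bijection between normal forms and \(\GL_n(\F_d)\), which can be cross-checked by counting \(\prod_{k}(d^n-d^k)\) normal forms.
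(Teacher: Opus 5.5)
Your proposal is correct and follows essentially the same route as the paper: the pair \((A,b)\) is unique, the translation column is fixed coordinatewise by \(b\), and the linear layer is fixed by the uniqueness part of Lafont's normal-form theorem. The paper simply cites Lafont for that last step, so your sketched induction and the \(\prod_k (d^n-d^k)\) count are an optional sanity check that the argument does not need.
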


\begin{proof}
Write \(g(x)=Ax+b\), with \(A\in\GL_n(\F_d)\) and \(b\in\F_d^n\).
The pair \((A,b)\) is unique.
Lafont's theorem gives a unique circuit \(\NF_{\mathrm{lin}}(A)\) for the linear map \(x\mapsto Ax\).
The translation column \(\NF_{\mathrm{tr}}(b)\) is uniquely determined by the coordinates of \(b\).
Their composite has interpretation \(x\mapsto Ax+b\).
Conversely, any affine normal form with this interpretation determines the same pair \((A,b)\), hence the same linear normal form and the same translation column.
\end{proof}

\begin{lemma}\label{lem:affine-normalisation}
\stmtAffineNormalisation
\end{lemma}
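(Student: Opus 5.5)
The statement \stmtAffineNormalisation{} is presumably the normalisation lemma: every diagram \(D:n\to n\) of \(\cat{Aff}_d\) satisfies \(\et{Aff}_d\vdash D=\NF_{\mathrm{aff}}(\interp{D}_{Aff})\). The plan is to split each circuit into a linear part and a translation part. Lafont's theorem then handles the linear part, and the relations of \(\et{Aff}_d\) handle the translation column.

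\emph{Step 1: push translations to the outputs.} I will show that every diagram is provably equal to \(T\circ L\), where \(L\) contains only \(\gCX\), \(\gM\) and symmetries, and \(T\) is a tensor of powers of \(\gX\). The argument is by induction on the number of \(\gX\) occurrences not already in the output column. Take the rightmost such \(\gX\) and move it rightward past the single linear gate that follows it, using one of these relations:
\begin{itemize}
\item \eqref{axiom-XM-twisted} to commute it past a multiplier, which turns \(X\) into \(X^{a}\);
\item \eqref{axiom-XCNOTtop} together with \eqref{axiom-I} to move it past \(\gCX\) on the control or target wire, emitting an extra \(X\) on the target when it sits on the control;
\item naturality of the symmetry to move it past a crossing.
\end{itemize}
Every \(X\) produced this way lies to the right of all linear gates. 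Inside the output column, \(X\)'s on the same wire collect into a single power \(X^{b_i}\). Powers are reduced modulo \(d\) using \(X^d=\mathrm{id}\), which I expect to be derivable from \eqref{axiom-multxy}/\eqref{axiom-XM-twisted} or recorded in Appendix~\ref{derivation:XX-comm}. \(X\)'s on distinct wires commute by functoriality of \(\otimes\). Each step strictly lowers the number of stray \(X\)'s, so the procedure terminates.

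\emph{Step 2: normalise the linear part.} The derivations of Appendix~\ref{derivation:XX-comm} show that Lafont's linear rules are derivable in \(\et{Aff}_d\). Lafont's completeness theorem then gives \(\et{Aff}_d\vdash L=\NF_{\mathrm{lin}}(A)\), where \(A\) is the matrix of \(\interp{L}_{Aff}\).

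\emph{Step 3: identify the result.} At this point \(D=\NF_{\mathrm{tr}}(b)\circ\NF_{\mathrm{lin}}(A)\) is derivable. By \cref{lem:Aff-soundness}, \(\interp{D}_{Aff}=(x\mapsto Ax+b)\). So the derived diagram is an affine normal form with the same semantics as \(D\), and \cref{lem:normal-form-unique} shows it is \(\NF_{\mathrm{aff}}(\interp{D}_{Aff})\). Completeness of \(\cat{Aff}_d\) follows as a corollary.

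The main obstacle is Step 2: checking that every rule of Lafont's linear presentation over \(\F_d\) is derivable from \(\et{Aff}_d\). The difficult cases are the commutations of \(\gCX\) gates and the \(\gM\)/\(\gCX\) scaling interactions. I would first try to obtain them by conjugating \eqref{axiom-XCNOTtop}-type rules with multipliers, using \eqref{axiom-multcnot}, \eqref{axiom-multcnot2-twisted} and \eqref{axiom-swap-fixed}, and otherwise rely on the finite certificates in the appendix.
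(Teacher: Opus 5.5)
Your proposal follows the paper's own proof: you collect every \(\gX\) into an output translation column, normalise the remaining linear circuit with Lafont's theorem via the certificates of Appendix~\ref{derivation:XX-comm}, and reattach the column to obtain \(\NF_{\mathrm{aff}}(g)\). Step~1 needs two small repairs: your termination count is wrong as stated (pushing an \(\gX\) past the control of a \(\gCX\) turns one stray \(\gX\) into two, which need not yet lie to the right of all linear gates), so induct on circuit length instead and push the whole column of \(\gX\)-powers through one linear gate at a time; and the second \(\gX\)/\(\gCX\) case is the derived \cref{axiom-XCNOT} (with \(X^d=\mathrm{id}\) given by \cref{axiom-Xd}), not \eqref{axiom-I}, which relates controlled additions only.
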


The proof is given in Appendix~\ref{app:proofs-affine-normalisation}; it collects translations, normalises the remaining linear layer, and then reattaches the translation column.

\begin{theorem}\label{thm:Aff-sound-complete}
For any affine circuits \(A, A':n\to n\) in \(\cat{Aff}_d\), if \(\interp{A}_{Aff}=\interp{A'}_{Aff}\) then \(\et{Aff}_d\vdash A=A'\).
\end{theorem}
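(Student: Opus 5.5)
The proof uses the standard normal-form argument: push both circuits to a canonical representative and compare representatives semantically. \Cref{lem:affine-normalisation}, whose statement is deferred to the appendix macro, presumably says that every affine circuit \(C:n\to n\) in \(\cat{Aff}_d\) satisfies \(\et{Aff}_d\vdash C=\NF_{\mathrm{aff}}(\interp{C}_{Aff})\). If it only gives a derivation to \emph{some} affine normal form \(N\), it must also give \(\interp{N}_{Aff}=\interp{C}_{Aff}\). That second part follows from \cref{lem:Aff-soundness} together with the fact that interpretations respect derivable equalities (\cref{def:prop-interpretation}).

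The steps are as follows. Let \(g\coloneqq\interp{A}_{Aff}=\interp{A'}_{Aff}\).
\begin{enumerate}
\item Apply \cref{lem:affine-normalisation} to \(A\). This gives an affine normal form \(N\) with \(\et{Aff}_d\vdash A=N\).
\item By soundness (\cref{lem:Aff-soundness}) and functoriality of \(\interp{\cdot}_{Aff}\), we have \(\interp{N}_{Aff}=\interp{A}_{Aff}=g\).
\item By \cref{lem:normal-form-unique}, \(N\) is syntactically \(\NF_{\mathrm{aff}}(g)\).
\item The same reasoning applied to \(A'\) gives \(\et{Aff}_d\vdash A'=\NF_{\mathrm{aff}}(g)\).
\item Since derivable equality is an equivalence relation (symmetry and transitivity, \cref{def:presented-prop}), we conclude \(\et{Aff}_d\vdash A=\NF_{\mathrm{aff}}(g)=A'\).
\end{enumerate}

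Granting the earlier lemmas, nothing here is hard. The real content sits in \cref{lem:affine-normalisation}, in two places. The first is pushing every \(\gX\) to the output column through \(\gCX\), \(\gM\) and swaps, using \eqref{axiom-XCNOTtop}, \eqref{axiom-XM-twisted} and \eqref{axiom-I}. The second is showing that Lafont's linear rules, over \(\F_d\) with multipliers, are derivable from \(\et{Aff}_d\), which the appendix certifies. The one point to check in this proof is that "unique normal form" in \cref{lem:normal-form-unique} means syntactic identity of diagrams, or identity up to the structural PROP equalities. Either way, step 3 gives the same diagram for \(A\) and \(A'\), so the final chain of equalities is valid.
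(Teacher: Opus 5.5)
Your proof is correct and follows the paper's argument: normalise both \(A\) and \(A'\) to \(\NF_{\mathrm{aff}}(g)\) via \cref{lem:affine-normalisation}, then conclude by symmetry and transitivity of derivable equality. Your detour through \cref{lem:Aff-soundness} and \cref{lem:normal-form-unique} is valid but unnecessary, since the paper's normalisation lemma already derives \(A=\NF_{\mathrm{aff}}(\interp{A}_{Aff})\) directly.
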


\begin{proof}
Let \(g=\interp{A}_{Aff}=\interp{A'}_{Aff}\).
Normalising both circuits gives \(\et{Aff}_d\vdash A=\NF_{\mathrm{aff}}(g)\) and \(\et{Aff}_d\vdash A'=\NF_{\mathrm{aff}}(g)\).
By symmetry of equality and transitivity, \(\et{Aff}_d\vdash A=A'\).
\end{proof}

\section{Finite-angle phase fragments}
\label{sec:finite-angle-fragments}

\subsection{Phase functions and degree fragments}
\label{subsec:phase-functions-degree-fragments}

Fix a prime \(d\).
Let \(\Hilb_d=\C^d\) with computational basis \(\{\ket{x}\mid x\in\F_d\}\), and write \(\omega=e^{2\pi i/d}\).
The affine calculus of \cref{sec:aff-axioms} describes reversible label updates \(x\mapsto g(x)\), with \(g\in\AGL_n(\F_d)\).
The phase extensions add diagonal phases while keeping the affine update and the phase function as separate coordinates.

\begin{definition}\label{def:phase-functions}
For \(n\in\N\), a phase function on \(n\) wires is a function \(q:\F_d^n\to\F_d\), interpreted as the diagonal operator \(D_q\ket{x}=\omega^{q(x)}\ket{x}\).
\end{definition}

\begin{definition}\label{def:phase-function-classes}
A phase function \(q:\F_d^n\to\F_d\) has degree at most \(r\) if it is represented by a polynomial over \(\F_d\) of total degree at most \(r\).
Let \(\Lin_n\), \(\Quad_n\), and \(\Cube_n\) be the sets of phase functions on \(n\) wires of degree at most \(1\), \(2\), and \(3\), respectively.
Write \(\Lin=(\Lin_n)_{n\in\N}\), \(\Quad=(\Quad_n)_{n\in\N}\), and \(\Cube=(\Cube_n)_{n\in\N}\).
\end{definition}

The one-wire binomial representatives used below match the first part of the diagonal Clifford hierarchy.
Write \(\zeta_m=e^{2\pi i/d^m}\), so \(\zeta_1=\omega\).
The classification of diagonal hierarchy gates implies that a one-qudit phase \(\ket{x}\mapsto \zeta_m^{f(x)}\ket{x}\), whose finest \(d^m\)-root component has polynomial degree at most \(q\), \(1\le q\le d-1\), lies in level \((d-1)(m-1)+q\).
The level is exact when a degree-\(q\) coefficient appears at that precision \cite[Theorem~2]{cui_diagonal_2017}.
Thus the \(d\)-th-root phases run through degrees \(1,\ldots,d-1\); deeper one-wire diagonal gates restart at degree \(1\) with \(d^2\)-th roots.
No degree-\(d\) candidate gives a new \(d\)-th-root phase, since \(x^d=x\) as a function on \(\F_d\), and \(\binom{x}{d}\) vanishes on the standard labels \(0,\ldots,d-1\).

\begin{figure}[htb!]
\centering
\(
\begin{array}{@{}r@{\quad}l@{}}
\text{monomial} &
  \zeta_1^x \to \cdots \to \zeta_1^{x^{d-1}} \to \zeta_2^x \to \zeta_2^{x^2} \to \cdots \\[.4ex]
\text{binomial} &
  \zeta_1^{\binom{x}{1}} \to \cdots \to \zeta_1^{\binom{x}{d-1}} \to
  \zeta_2^{\binom{x}{1}} \to \zeta_2^{\binom{x}{2}} \to \cdots
\end{array}
\)
\caption{One-wire representatives for the same diagonal Clifford-hierarchy levels.}
\label{fig:clifford-binomial-ladder}
\end{figure}

The present fragments use the first three binomial representatives in this ladder.
The generators \(\gZ\), \(\gS\), and \(\gT\) have exponents \(x\), \(\binom{x}{2}\), and \(\binom{x}{3}\).
For \(q<d\), \(\binom{x}{q}\) has leading term \(x^q/q!\), so these gates have the same hierarchy levels as the corresponding monomial representatives.
Binomials make transport explicit because affine shifts and sums expand by Pascal identities, and those expansions produce exactly the coordinates used in the normal forms.

\begin{definition}\label{def:admissible-degree-fragments}
For the fixed prime \(d\), the admissible degree fragments are \(\Lin\), also \(\Quad\) when \(d\) is odd, and also \(\Cube\) when \(d>3\).
\end{definition}

The restrictions in this definition come from the coordinate choice.
The quadratic coordinate \(\binom{x}{2}\) requires \(2^{-1}\in\F_d\), and the cubic coordinate \(\binom{x}{3}\) requires \(6^{-1}\in\F_d\).
In these cases the degree bound stays below the characteristic, so the normal-form coefficients are determined by the phase function on \(\F_d^n\).

\subsection{Binomial coordinates for affine transport}
\label{subsec:binomial-polynomials}

The transport equations are the circuit form of the following Pascal identities.
Moving \(D_q\) from the output side of an affine update \(P_g\) to its input side replaces the exponent by \(q\circ g\).
For the affine generators, this means substituting \(x+1\), \(kx\), or \(x+y\) into the binomial coordinates \(x\), \(\binom{x}{2}\), and \(\binom{x}{3}\).
The first two substitutions stay on one wire.
The substitution \(x\mapsto x+y\) creates the mixed coordinates represented later by \(\gCZ\), \(\gCS\), \(\gSC\), and \(\gCCZ\).

\begin{lemma}\label{lem:binom-identities}
Assume \(d\) is an odd prime.
For all \(x,y,k\in\F_d\) one has
\begin{center}
\begin{minipage}[t]{.27\linewidth}
\vspace{-1em}
\begin{equation}
\binom{x+1}{2}=\binom{x}{2}+x
\label{eq:binom-shift}
\end{equation}
\end{minipage}\hfill
\begin{minipage}[t]{.34\linewidth}
\vspace{-1em}
\begin{equation}
\binom{x+y}{2}=\binom{x}{2}+\binom{y}{2}+xy
\label{eq:binom-add}
\end{equation}
\end{minipage}\hfill
\begin{minipage}[t]{.3\linewidth}
\vspace{-1em}
\begin{equation}
\binom{kx}{2}=k^2\binom{x}{2}+\binom{k}{2}x
\label{eq:binom-scale}
\end{equation}
\end{minipage}
\end{center}
\end{lemma}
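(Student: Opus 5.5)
The plan is to make sense of \(\binom{x}{2}\) as a function on \(\F_d\) and then verify each identity by direct polynomial algebra. Since \(d\) is odd, \(2\) is invertible in \(\F_d\), so we set \(\binom{x}{2}\coloneqq 2^{-1}x(x-1)\) for every \(x\in\F_d\). This agrees with the reduction mod \(d\) of the integer binomial coefficient on the standard labels \(0,\ldots,d-1\), because \(x(x-1)\) is always even in \(\mathbb{Z}\) and reducing the equation \(2\binom{x}{2}=x(x-1)\) modulo \(d\) gives the formula. With this convention, each of the three identities is an equality between polynomials in \(\F_d[x,y,k]\) of degree at most \(3\). It suffices to multiply both sides by \(2\) and compare the resulting expressions, which have coefficients in \(\mathbb{Z}\).

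The three identities then go as follows.
\begin{itemize}
\item For \eqref{eq:binom-shift}: \((x+1)x = x(x-1)+2x\). Multiplying by \(2^{-1}\) gives the identity.
\item For \eqref{eq:binom-add}: \((x+y)(x+y-1) = x(x-1)+y(y-1)+2xy\), which follows by expanding both sides. Multiplying by \(2^{-1}\) gives the identity.
\item For \eqref{eq:binom-scale}: \(kx(kx-1)=k^2x^2-kx\). The right-hand side, doubled, is \(k^2(x^2-x)+(k^2-k)x = k^2x^2-kx\). The two agree, and multiplying by \(2^{-1}\) gives the identity.
\end{itemize}
Each of these is an identity in \(\mathbb{Z}[x,y,k]\), so it holds in \(\F_d[x,y,k]\) by reduction. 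Since \(2^{-1}\) exists in \(\F_d\), it then holds pointwise for all \(x,y,k\in\F_d\).

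The only real subtlety, and the step to address first, is that the argument must not rely on the integer-valued definition of \(\binom{x}{2}\) on representatives. Only the \(\F_d\)-valued definition \(2^{-1}x(x-1)\) is well defined on residues; for example, the representative of \(kx\) need not be the integer product of the representatives of \(k\) and \(x\). Once \(\binom{x}{2}\) is fixed as this polynomial function on \(\F_d\), consistent with its use as a phase exponent in \cref{def:phase-functions}, the remaining verifications are routine expansions.
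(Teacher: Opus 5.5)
Your proof is correct and matches the paper's argument: both fix \(\binom{x}{2}=2^{-1}x(x-1)\), which is well defined because \(2\) is invertible for odd \(d\), and then verify each identity by direct polynomial expansion; doubling both sides, as you do, only clears the denominator. One harmless slip is that the scaling identity has total degree \(4\) in \((k,x)\) because of the \(k^2x^2\) term, not degree at most \(3\), but your argument never uses that degree bound.
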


\begin{lemma}\label{lem:binom3-identities}
Assume \(d>3\) is prime.
For all \(x,y,k\in\F_d\) one has
\begin{center}
\begin{minipage}[t]{.35\linewidth}
\vspace{-1em}
\begin{equation}
\binom{x+1}{3}=\binom{x}{3}+\binom{x}{2}
\label{eq:binom3-shift}
\end{equation}
\end{minipage}
\begin{minipage}[t]{.55\linewidth}
\begin{equation}
\binom{x+y}{3}=\binom{x}{3}+\binom{y}{3}+\binom{x}{2}y+x\binom{y}{2}
\label{eq:binom3-add}
\end{equation}
\end{minipage}\\
\begin{minipage}[t]{.45\linewidth}
\begin{equation}
\binom{kx}{3}=k^3\binom{x}{3}+2k\binom{k}{2}\binom{x}{2}+\binom{k}{3}x
\label{eq:binom3-scale}
\end{equation}
\end{minipage}
\end{center}
\end{lemma}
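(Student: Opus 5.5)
The plan is to treat all three identities as polynomial identities with coefficients in \(\mathbb{Z}[1/6]\), prove them there, and then reduce modulo \(d\). Since \(d>3\) is prime, \(6\) is invertible in \(\F_d\), so \(\binom{x}{3}=x(x-1)(x-2)/6\) and \(\binom{x}{2}=x(x-1)/2\) are genuine polynomials over \(\F_d\). The ring map \(\mathbb{Z}[1/6]\to\F_d\) then carries any identity in \(\mathbb{Z}[1/6][x,y,k]\) to an identity in \(\F_d[x,y,k]\). Evaluating that identity at all \(x,y,k\in\F_d\) gives the statement. So it suffices to prove each equation in \(\mathbb{Q}[x,y,k]\).

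For \eqref{eq:binom3-add}, I would use Vandermonde's convolution
\[
\binom{x+y}{3}=\sum_{i=0}^{3}\binom{x}{i}\binom{y}{3-i}.
\]
It holds for all nonnegative integers \(x,y\) by counting \(3\)-subsets of a disjoint union. Both sides are polynomials in \(x,y\) that agree on \(\mathbb{N}^2\), an infinite grid, so they are equal in \(\mathbb{Q}[x,y]\). The \(i=1\) and \(i=2\) terms are \(x\binom{y}{2}\) and \(\binom{x}{2}y\), which gives exactly the displayed right-hand side. Then \eqref{eq:binom3-shift} is the specialisation \(y=1\), using \(\binom{1}{2}=\binom{1}{3}=0\) and \(\binom{1}{1}=1\).

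For \eqref{eq:binom3-scale}, I would compare coefficients of \(x^3,x^2,x\) directly, regarding \(k\) as a parameter. The left side is \((k^3x^3-3k^2x^2+2kx)/6\). On the right, the \(x^3\) coefficient is \(k^3/6\). The \(x^2\) coefficient is \(-3k^3/6+k^2(k-1)/2=-k^2/2\). The \(x\) coefficient is \(2k^3/6-k^2(k-1)/2+k(k-1)(k-2)/6\), which simplifies to \(2k/6\). Neither side has a constant term, so the two polynomials agree.

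The only point needing care is the scaling identity. It is not a direct instance of Vandermonde, since for \(k\in\mathbb{N}\) it would need iterated addition. That is why I would verify it by the short coefficient comparison above. The rest is bookkeeping to ensure that every denominator is a divisor of \(6\), hence invertible modulo \(d>3\).
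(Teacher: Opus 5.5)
Your proof is correct. For the scaling identity \eqref{eq:binom3-scale} it is essentially the paper's argument: expand \(\binom{kx}{3}\) and match coefficients. The paper collects in the basis \(\binom{x}{3},\binom{x}{2},x\), while you use the monomials \(x^3,x^2,x\), and your values \(k^3/6\), \(-k^2/2\), \(k/3\) check out.

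For \eqref{eq:binom3-add} and \eqref{eq:binom3-shift} you take a different route. The paper just expands \(x(x-1)(x-2)/6\) directly over \(\F_d\). You instead do the following:
\begin{itemize}
\item obtain the addition law combinatorially, as Vandermonde's convolution on \(\mathbb{N}^2\);
\item lift it to \(\mathbb{Q}[x,y]\), since a polynomial vanishing on the infinite grid \(\mathbb{N}^2\) is zero;
\item note that both sides already lie in \(\mathbb{Z}[1/6][x,y]\), and base-change along \(\mathbb{Z}[1/6]\to\F_d\);
\item read off the shift law as the specialisation \(y=1\).
\end{itemize}

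Your route replaces a cubic expansion by a conceptual argument. It also makes explicit the passage from characteristic zero to \(\F_d\), showing that invertibility of \(6\) is exactly what is used, a step the paper leaves implicit. The paper's route is shorter and stays entirely inside \(\F_d\), at the cost of unilluminating algebra.
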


\begin{lemma}\label{lem:monomial-to-binomial}
Assume \(d\) is an odd prime.
For all \(x,y\in\F_d\) one has
\begin{center}
\begin{minipage}[t]{.27\linewidth}
\vspace{-1em}
\begin{equation}
x^2=2\binom{x}{2}+x
\label{eq:x2-binom}
\end{equation}
\end{minipage}\hfill
\begin{minipage}[t]{.34\linewidth}
\begin{equation}
x^2y=2y\binom{x}{2}+xy
\label{eq:x2y-binom}
\end{equation}
\end{minipage}\hfill
\begin{minipage}[t]{.3\linewidth}
\vspace{-1em}
\begin{equation}
xy^2=2x\binom{y}{2}+xy
\label{eq:xy2-binom}
\end{equation}
\end{minipage}
\end{center}
If \(d>3\), then also \(x^3=6\binom{x}{3}+6\binom{x}{2}+x\).
\end{lemma}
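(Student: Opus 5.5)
We prove the identities of \cref{lem:monomial-to-binomial} as identities of polynomials with coefficients in \(\F_d\). They then hold pointwise for all \(x,y\in\F_d\). Since \(d\) is odd, \(2\) is invertible in \(\F_d\), so \(\binom{x}{2}\) means \(2^{-1}x(x-1)\). When \(d>3\), \(6\) is invertible as well, so \(\binom{x}{3}\) means \(6^{-1}x(x-1)(x-2)\). These are the same polynomial expressions used in \cref{lem:binom-identities,lem:binom3-identities}. With these conventions, each claimed equation is an equality in the polynomial ring \(\F_d[x,y]\), and we check it by expanding the right-hand side.

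The steps, in order, are as follows.

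First we treat \eqref{eq:x2-binom}. We have \(2\binom{x}{2}=2\cdot 2^{-1}x(x-1)=x^2-x\), and adding \(x\) gives \(x^2\).

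Next, \eqref{eq:x2y-binom} is \eqref{eq:x2-binom} multiplied by \(y\): \(x^2y=(2\binom{x}{2}+x)y=2y\binom{x}{2}+xy\).

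Then \eqref{eq:xy2-binom} follows by exchanging the roles of \(x\) and \(y\) in \eqref{eq:x2-binom} and multiplying by \(x\).

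Finally, for the cubic identity with \(d>3\), we expand \(6\binom{x}{3}=x(x-1)(x-2)=x^3-3x^2+2x\). By the computation above, \(6\binom{x}{2}=3(x^2-x)=3x^2-3x\). Summing these with \(x\) gives \(x^3-3x^2+2x+3x^2-3x+x=x^3\).

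No step here is a real obstacle. The only care needed is in the coefficients \(2\) and \(6\), which cancel exactly the inverses \(2^{-1}\) and \(6^{-1}\) in the definitions of the binomial coordinates. That cancellation is why the hypotheses \(d\) odd and \(d>3\) are needed. Integer-valued arguments about binomial coefficients are not needed, because we never reduce a binomial coefficient of an integer modulo \(d\); everything is computed directly in \(\F_d[x,y]\).
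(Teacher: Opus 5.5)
Your proof is correct and follows the paper's argument: both read \(\binom{x}{2}=2^{-1}x(x-1)\) to get \eqref{eq:x2-binom}, then multiply by \(y\) to get \eqref{eq:x2y-binom} and swap the variables and multiply by \(x\) to get \eqref{eq:xy2-binom}. Your explicit check \(x(x-1)(x-2)+3x(x-1)+x=x^3\) for the case \(d>3\) is a useful addition, since the paper's written proof stops at the quadratic identities and never verifies the cubic one.
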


\begin{lemma}\label{lem:phase-coordinate-uniqueness}
In each admissible fragment, a phase function has a unique expansion in the normal-form coordinates for that fragment.
More explicitly, with the unavailable higher-degree terms omitted in the linear and quadratic fragments, every \(q\in\Lin_n\), \(q\in\Quad_n\), or \(q\in\Cube_n\) has a unique expression
\begin{equation}\label{eq:phase-coordinate-expansion}
\begin{aligned}
q(x)= {}& c+\sum_i a_i x_i
       +\sum_i b_i\binom{x_i}{2}+\sum_{i<j} c_{ij}x_ix_j \\
       &+\sum_i t_i\binom{x_i}{3}
       +\sum_{i\neq j} u_{ij}x_i\binom{x_j}{2}
       +\sum_{i<j<k} v_{ijk}x_ix_jx_k
\end{aligned}
\end{equation}
with coefficients in \(\F_d\).
\end{lemma}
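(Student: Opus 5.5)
We prove existence and uniqueness separately, working throughout with the space of functions \(\F_d^n\to\F_d\) and using that the fragment's degree bound \(r\in\{1,2,3\}\) satisfies \(r<d\) in every admissible case (\(r=2\) only for odd \(d\), \(r=3\) only for \(d>3\)).

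\emph{Existence.} Let \(q\) have degree at most \(r\). Then \(q\) is represented by a polynomial that is a linear combination of monomials \(1\), \(x_i\), \(x_i^2\), \(x_ix_j\) (\(i<j\)), \(x_i^3\), \(x_i^2x_j\) (\(i\neq j\)), and \(x_ix_jx_k\) (\(i<j<k\)), where only the monomials of degree at most \(r\) occur. We rewrite each monomial in the coordinates of \eqref{eq:phase-coordinate-expansion} using \cref{lem:monomial-to-binomial}: \eqref{eq:x2-binom} for \(x_i^2\), \eqref{eq:x2y-binom} for \(x_i^2x_j\) (with \(x_ix_j\) read as \(c_{\min(i,j)\max(i,j)}\)), and \(x_i^3=6\binom{x_i}{3}+6\binom{x_i}{2}+x_i\). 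The monomials \(1,x_i,x_ix_j,x_ix_jx_k\) are already coordinates. Each rewriting stays within degree at most \(r\), so only terms available in the fragment are produced. Collecting coefficients gives an expansion of the required form.

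\emph{Uniqueness.} It suffices to show that the coordinate functions in \eqref{eq:phase-coordinate-expansion} are linearly independent over \(\F_d\) as functions on \(\F_d^n\).

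First, the coordinates are related to the monomials of degree at most \(r\) by a triangular change of basis. Indeed, \(\binom{x_i}{2}=2^{-1}(x_i^2-x_i)\), \(x_i\binom{x_j}{2}=2^{-1}(x_ix_j^2-x_ix_j)\), and \(\binom{x_i}{3}=6^{-1}(x_i^3-3x_i^2+2x_i)\). Each coordinate is therefore an invertible multiple of a distinct monomial plus lower-degree terms. This uses that \(2^{-1}\) and \(6^{-1}\) exist in \(\F_d\), which holds exactly in the admissible cases. Since the map from coordinates to monomials is triangular with invertible diagonal, the coordinates span the same space as the monomials and are independent if and only if the monomials are.

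Second, the monomials of degree at most \(r\) are independent as functions on \(\F_d^n\). Because \(r<d\), every exponent appearing is at most \(d-1\), so these are reduced monomials. The standard fact is that reduced polynomials (all exponents \(\le d-1\)) are in bijection with functions \(\F_d^n\to\F_d\). One way to see this is a counting argument: there are \(d^{d^n}\) reduced polynomials and equally many functions, and every function is reduced-representable, for instance by Lagrange interpolation via \(1-(x-a)^{d-1}\). Alternatively, induct on \(n\): a nonzero one-variable polynomial of degree below \(d\) has fewer than \(d\) roots, so it cannot vanish on all of \(\F_d\).

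The main subtlety is this second step. One must check that \(r<d\) guarantees no monomial such as \(x^d\), which collapses to \(x\) as a function, enters the comparison. One must also check that the triangularity argument respects the ordering conventions \(i<j\), \(i\neq j\), and \(i<j<k\), so that no two coordinates map to the same leading monomial. For example, \(x_i\binom{x_j}{2}\) and \(x_j\binom{x_i}{2}\) have the distinct leading monomials \(x_ix_j^2\) and \(x_i^2x_j\).
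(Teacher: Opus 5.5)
Your proof is correct and follows essentially the same route as the paper. Both arguments use that \(r<d\) makes every representative reduced, so the function determines the monomial coefficients, and then pass to the binomial coordinates by a triangular change of basis that needs \(2^{-1}\) and \(6^{-1}\). Your explicit existence step via \cref{lem:monomial-to-binomial}, and your justification of the reduced-polynomial fact by counting or interpolation, simply spell out what the paper leaves implicit in its top-down coefficient recovery.
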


\begin{proof}
Fix an admissible fragment with degree bound \(r\le 3\).
Since \(r<d\), every degree-\(\le r\) representative is reduced: no variable has exponent at least \(d\).
A reduced polynomial over \(\F_d\) that vanishes on \(\F_d^n\) is the zero polynomial.
Hence the function on \(\F_d^n\) determines the usual monomial coefficients.
We now translate between those monomial coefficients and the displayed coordinates.

This comparison is triangular.
The constant coordinate, the linear coordinates \(x_i\), and the square-free mixed coordinates \(x_ix_j\) and \(x_ix_jx_k\) are already monomials.
The remaining coordinates have highest-degree terms \(x^2/2\), \(xy^2/2\), and \(x^3/6\), respectively, for \(\binom{x}{2}\), \(x\binom{y}{2}\), and \(\binom{x}{3}\).
Admissibility makes \(2\) and \(6\) invertible whenever these terms occur.
Thus the displayed coefficients can be recovered from the monomial coefficients, starting in highest degree and moving downward.
Two displayed expansions therefore define the same phase function only when all coefficients agree.
\end{proof}

These identities give the algebra behind the transport equations in \cref{fig:LinPhase_rules,fig:QuadPhase_rules,fig:CubicPhase_rules}: moving a diagonal phase past an affine generator substitutes the affine update into its phase polynomial and expands the result in the same binomial coordinates.
\Cref{lem:monomial-to-binomial,lem:phase-coordinate-uniqueness} connect those coordinates with the standard monomial basis and record the coefficient uniqueness used in the diagonal uniqueness proof.

\subsection{Semantic PROPs of phase-affine operators}
\label{subsec:semantic-phase-props}

For \(n\in\N\), identify \(\F_d^n\) with the computational basis of \(\Hilb_d^{\otimes n}\) via \(x\mapsto\ket{x}\).
For \(g\in\AGL_n(\F_d)\), let \(P_g\) be the permutation operator defined by \(P_g\ket{x}=\ket{g(x)}\).

The semantic PROP is the corresponding semidirect product, where affine permutations act on phase functions by precomposition.
A morphism \(P_gD_q\) first contributes the phase \(\omega^{q(x)}\) to \(\ket{x}\), then changes the basis label to \(g(x)\).

\begin{definition}\label{def:phase-affine-props}
For \(\mathcal{Q}\in\{\Lin,\Quad,\Cube\}\), define the PROP \(\cat{PhAff}^{\mathcal{Q}}_{d}\) as follows.
Its hom-sets are \(\cat{PhAff}^{\mathcal{Q}}_{d}(n,n)=\{P_gD_q\mid g\in\AGL_n(\F_d),\ q\in\mathcal{Q}_n\}\) and \(\cat{PhAff}^{\mathcal{Q}}_{d}(n,m)=\varnothing\) for \(n\neq m\).
Composition is ordinary composition of operators, so \(P_{g_2}D_{q_2}\circ P_{g_1}D_{q_1}=P_{g_2\circ g_1}D_{q_1+q_2\circ g_1}\).
The tensor product satisfies \(P_gD_q\otimes P_hD_r=P_{g\oplus h}D_{q\oplus r}\), where \((q\oplus r)(x,y)=q(x)+r(y)\).
The identity on \(n\) is \(P_{\id_n}D_0\), and the symmetry is \(P_{\sigma_{n,m}}D_0\).
The inverse of \(P_gD_q\) is \(P_{g^{-1}}D_{-q\circ g^{-1}}\).
\end{definition}

\begin{remark}\label{rem:phase-affine-coordinates}
The operator coordinates are unique: if \(P_gD_q=P_hD_r\), then \(g=h\) by comparing basis labels, and then \(q=r\) because \(\omega\) is primitive.
For each \(\mathcal{Q}\in\{\Lin,\Quad,\Cube\}\), sums of exponents and affine precomposition stay in \(\mathcal{Q}\).
Thus the formulas above really define composition, tensor product, and inverses in the chosen degree class.
At this semantic level no divisibility restriction is needed; admissibility enters when the finite diagrammatic presentation of \cref{subsec:diagrammatic-phase-props} is added.
\end{remark}

The embedding \(g\mapsto P_gD_0\) and the projection \(P_gD_q\mapsto g\) relate this PROP to \(\cat{AffLab}_d\); the phase part is the group of phase functions acted on by affine substitution.

\subsection{Diagrammatic PROPs for finite-angle phases}
\label{subsec:diagrammatic-phase-props}

The semantic fragments specify which phase functions are allowed.
Their diagrammatic presentations extend \(\cat{Aff}_d\) by the one-wire diagonal primitives needed at each degree; the formal interpretation is given in \cref{subsec:phase-interpretations}.
The intended phase exponents are as follows: \(\gW:0\to0\) contributes the constant exponent \(1\), while \(\gZ\), \(\gS\), and \(\gT\) contribute \(x\), \(\binom{x}{2}\), and \(\binom{x}{3}\), respectively, in the fragments where those generators are present.
The derived diagonal terms forced by affine transport are circuit abbreviations, not primitive generators.

\begin{definition}\label{def:phase-fragment-props}
For any prime \(d\), \(\cat{LinPhase}_d\) is obtained from \(\cat{Aff}_d\) by adjoining \(\gW:0\to0\), \(\gZ:1\to1\), and the equations in \cref{fig:LinPhase_rules}.
For odd prime \(d\), \(\cat{QuadPhase}_d\) is obtained from \(\cat{LinPhase}_d\) by adjoining \(\gS:1\to1\) and the equations in \cref{fig:QuadPhase_rules}.
For prime \(d>3\), \(\cat{CubicPhase}_d\) is obtained from \(\cat{QuadPhase}_d\) by adjoining \(\gT:1\to1\) and the equations in \cref{fig:CubicPhase_rules}.
\end{definition}

\begin{definition}\label{def:primitive-phase-labels}
For \(k\in\N\), the diagrams \(\scalebox{.69}{\textup{\input{./pieces/wk.tikz}}}\), \(\scalebox{.69}{\textup{\input{./pieces/Zk.tikz}}}\), \(\scalebox{.69}{\textup{\input{./pieces/Sk.tikz}}}\), and \(\scalebox{.69}{\textup{\input{./pieces/Tk.tikz}}}\) denote the \(k\)-fold iterates of \(\gW\), \(\gZ\), \(\gS\), and \(\gT\), respectively, when the generator belongs to the current fragment.
The \(0\)-fold iterate is the identity.
For \(1\le k\le d-1\), the negative labels \(\scalebox{.69}{\textup{\input{./pieces/wkminus.tikz}}}\), \(\scalebox{.69}{\textup{\input{./pieces/Zkminus.tikz}}}\), \(\scalebox{.69}{\textup{\input{./pieces/Skminus.tikz}}}\), and \(\scalebox{.69}{\textup{\input{./pieces/Tkminus.tikz}}}\) denote the corresponding \((d-k)\)-fold iterates.
\end{definition}

The derived diagonals are abbreviations for circuits already present in the relevant fragment.
They are named because normal forms record their coefficients directly.

\begin{definition}\label{def:QuadPhase-derived-diagonals}
In \(\cat{QuadPhase}_d\), define the bilinear quadratic diagonal \(\gCZ:2\to2\) by
\begin{center}
\(\scalebox{0.7}{\textup{\input{./pieces/CZ.tikz}}}\coloneqq \scalebox{0.6}{\textup{\input{./pieces/CZdef.tikz}}}\).
\end{center}
Its finite iterate indexed by \(k\in\F_d\) is denoted by \(\scalebox{.69}{\textup{\input{./pieces/CZk.tikz}}}\).
\end{definition}

\begin{definition}\label{def:CubicPhase-derived-diagonals}
In \(\cat{CubicPhase}_d\), define the controlled-square diagonal \(\gCS:2\to2\) by
\begin{center}
\(
\scalebox{0.7}{\textup{\input{./pieces/CS.tikz}}}\coloneqq \scalebox{0.6}{\textup{\input{./axioms/CSdef.tikz}}}, \text{ where }t=2^{-1}=\frac{d+1}{2}\in\F_d,
\)
\end{center}
The family \(\gSC\) is the two-wire flip of \(\gCS\).
We write \(\scalebox{.69}{\textup{\input{./pieces/CSk.tikz}}}\) for the \(k\)-fold iterate of \(\gCS\); iterates of \(\gSC\) are the corresponding wire-flips of the \(\gCS\)-iterates.
\end{definition}

\begin{definition}\label{def:CubicPhase-derived-ccz}
In \(\cat{CubicPhase}_d\), define the trilinear cubic diagonal \(\gCCZ:3\to3\) by
\begin{center}
\(
\scalebox{0.7}{\textup{\input{./pieces/CCZ.tikz}}}\coloneqq \scalebox{0.6}{\textup{\input{./axioms/CCZdef.tikz}}}.
\)
\end{center}
We write \(\scalebox{.69}{\textup{\input{./pieces/CCZk.tikz}}}\) for its \(k\)-fold iterate.
\end{definition}

The required symmetries for \(\gCZ\) and \(\gCCZ\) are derived in \cref{ax:CZ-SWAP,ax:CCZ-SWAP,ax:CCZ-SWAP2}, so their displayed wire orders may be permuted coherently.
The two controlled-square orientations are both retained because the cubic normal form records the two mixed directions separately.

\begin{definition}\label{def:phase-diagonal-families}
The diagonal family sets used by the admissible fragments are \(\mathcal{D}_{\Lin}=\{\gW,\gZ\}\), \(\mathcal{D}_{\Quad}=\mathcal{D}_{\Lin}\cup\{\gS,\gCZ\}\), and \(\mathcal{D}_{\Cube}=\mathcal{D}_{\Quad}\cup\{\gT,\gCS,\gSC,\gCCZ\}\).
\end{definition}

\newcommand{\phaseRuleEq}[2]{%
  \refstepcounter{equation}\label{#1}%
  \(\displaystyle #2\)\nobreak\hspace{0.35em}\textup{(\theequation)}%
}
\newcommand{\phaseRuleGap}{\\[1.8ex]}
\newcommand{\phaseRuleScale}{0.9}

\Cref{fig:LinPhase_rules,fig:QuadPhase_rules,fig:CubicPhase_rules} give the additional equations for the three phase fragments.
Together with the affine presentation, and with lower-degree phase rules inherited by the higher fragments, these finite rules present \(\cat{LinPhase}_d\), \(\cat{QuadPhase}_d\), and \(\cat{CubicPhase}_d\), respectively.
The commutation and transport certificates used later are derived inside these presentations.

\begin{figure}[htb!]
  \centering
  \begingroup
  \setlength{\tabcolsep}{0pt}
  \setlength{\fboxsep}{2pt}
  \renewcommand{\arraystretch}{1}
  \scalebox{\phaseRuleScale}{%
    \fbox{%
      \begin{tabular}{@{}c@{\hspace{0.4em}}c@{\hspace{0.4em}}c@{}}
        \phaseRuleEq{ax:W-order-d}{\input{./axioms/wd.tikz} = \input{./pieces/empty.tikz}}
        &
        \phaseRuleEq{ax:Z-order-d}{\input{./axioms/Zd.tikz} = \input{./pieces/I.tikz}}
        &
        \phaseRuleEq{ax:ZC-target}{\input{./axioms/ZCNOTtop.tikz} = \input{./axioms/CNOTZtop.tikz}}
        \phaseRuleGap
        \phaseRuleEq{ax:ZM}{\input{./axioms/MZ.tikz} = \input{./axioms/ZM.tikz}}
        &
        \phaseRuleEq{ax:ZX}{\input{./axioms/XZ.tikz} = \input{./axioms/ZX.tikz}}
        &
      \end{tabular}%
    }%
  }
  \endgroup
  \caption{Additional equations in the presentation of \(\cat{LinPhase}_d\): finite order for scalar and linear phases, and affine transport for the linear phase.}
  \label{fig:LinPhase_rules}
\end{figure}

\begin{figure}[htb!]
  \centering
  \begingroup
  \setlength{\tabcolsep}{0pt}
  \setlength{\fboxsep}{2pt}
  \renewcommand{\arraystretch}{1}
  \scalebox{\phaseRuleScale}{%
    \fbox{%
      \begin{tabular}{@{}c@{\hspace{0.4em}}c@{\hspace{0.4em}}c@{}}
        \phaseRuleEq{ax:S-order-d}{\input{./axioms/Sd.tikz} = \input{./pieces/I.tikz}}
        &
        \phaseRuleEq{ax:SX}{\input{./axioms/XS.tikz} = \input{./axioms/SX.tikz}}
        &
        \phaseRuleEq{ax:SZ}{\input{./axioms/SZ.tikz} = \input{./axioms/ZS.tikz}}
        \phaseRuleGap
        \phaseRuleEq{ax:SM}{\input{./axioms/MS.tikz} = \input{./axioms/SM.tikz}}
        &
        \phaseRuleEq{ax:CZ-neg}{\input{./axioms/CZnegL.tikz} = \input{./axioms/CZnegR.tikz}}
        &
        \phaseRuleEq{ax:SC-control}{\input{./axioms/SCNOTbot.tikz} = \input{./axioms/CNOTSbot.tikz}}
        \phaseRuleGap
        &
        \phaseRuleEq{ax:CC-control}{\input{./axioms/CZCNOTbot.tikz} = \input{./axioms/CNOTCZbot.tikz}}
        &
      \end{tabular}%
    }%
  }
  \endgroup
  \caption{Additional equations in the presentation of \(\cat{QuadPhase}_d\): finite order and affine transport for the quadratic phase, with the derived \(\CZ\)-terms forced by controlled addition.}
  \label{fig:QuadPhase_rules}
\end{figure}

\begin{figure}[htb!]
  \centering
  \begingroup
  \setlength{\tabcolsep}{0pt}
  \setlength{\fboxsep}{2pt}
  \renewcommand{\arraystretch}{1}
  \scalebox{\phaseRuleScale}{%
    \fbox{%
      \begin{tabular}{@{}c@{\hspace{0.4em}}c@{\hspace{0.4em}}c@{}}
        \phaseRuleEq{ax:T-order-d}{\input{./axioms/Td.tikz} = \input{./pieces/I.tikz}}
        &
        \phaseRuleEq{ax:TM}{\input{./axioms/TM.tikz} = \input{./axioms/MT.tikz}}
        &
        \phaseRuleEq{ax:TS}{\input{./axioms/TS.tikz} = \input{./axioms/ST.tikz}}
        \phaseRuleGap
        \phaseRuleEq{ax:TX}{\input{./axioms/XT.tikz} = \input{./axioms/TX.tikz}}
        &
        \phaseRuleEq{ax:CS-M-bot}{\input{./axioms/MCSbot.tikz} = \input{./axioms/CSMbot.tikz}}
        &
        \phaseRuleEq{ax:TZ}{\input{./axioms/TZ.tikz} = \input{./axioms/ZT.tikz}}
        \phaseRuleGap
        \phaseRuleEq{ax:CS-M-top}{\input{./axioms/MCS.tikz} = \input{./axioms/CSM.tikz}}
        &
        \phaseRuleEq{ax:CS-CNOT-bot2}{\input{./axioms/CNOTCSbot2.tikz} = \input{./axioms/CSCNOTbot2.tikz}}
        &
        \phaseRuleEq{ax:TC-control}{\input{./axioms/TCNOTbot.tikz} = \input{./axioms/CNOTTbot.tikz}}
        \phaseRuleGap
        \phaseRuleEq{ax:CCZ-M}{\input{./axioms/CCZM.tikz} = \input{./axioms/MCCZ.tikz}}
        &
        \phaseRuleEq{ax:CCZ-CNOTtop}{\input{./axioms/CCZCNOTtop.tikz} = \input{./axioms/CNOTCCZtop.tikz}}
        &
        \phaseRuleEq{ax:CS-CNOT-bot}{\input{./axioms/CNOTCSbot.tikz} = \input{./axioms/CSCNOTbot.tikz}}
      \end{tabular}%
    }%
  }
  \endgroup
  \caption{Additional equations in the presentation of \(\cat{CubicPhase}_d\): finite order, diagonal collection, and affine transport for the cubic phase and the derived diagonal families introduced above. In \eqref{ax:CS-M-bot}, \(\mu=\binom{a}{2}\).}
  \label{fig:CubicPhase_rules}
\end{figure}

\subsection{Interpretations and soundness}
\label{subsec:phase-interpretations}

The interpretation sends every generator to the corresponding operator \(P_gD_q\).
Affine circuits determine the permutation \(P_g\); diagonal generators determine the phase \(D_q\).
Semantic equalities can therefore be checked in the unique coordinates \((g,q)\).

\begin{definition}\label{def:Phase-interpretation}
For each admissible fragment \(\mathcal{P}\), the interpretation \(\interp{\cdot}_{\mathcal{P}}\) sends an affine circuit \(A\) to \(P_{\interp{A}_{Aff}}D_0\); in particular, symmetries become coordinate permutations with zero phase.
The primitive diagonal generators present in \(\mathcal{P}\) are assigned by
\begin{align*}
\interp{\gW}_{\mathcal{P}}&=D_1,&
\interp{\gZ}_{\mathcal{P}}&=D_{(x\mapsto x)},&
\interp{\gS}_{\mathcal{P}}&=D_{(x\mapsto\binom{x}{2})},&
\interp{\gT}_{\mathcal{P}}&=D_{(x\mapsto\binom{x}{3})},
\end{align*}
with the last two assignments used only in the fragments where \(\gS\) and \(\gT\) are generators.
Here \(D_1\) on the tensor unit is the scalar operator \(\ket{\ast}\mapsto\omega\ket{\ast}\).
\end{definition}

\begin{lemma}\label{lem:Derived-semantics}
\stmtDerivedSemantics
\end{lemma}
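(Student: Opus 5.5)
The statement \stmtDerivedSemantics is hidden behind a macro. From context it asserts the interpretations of the derived diagonals: \(\interp{\gCZ}=D_{(x,y)\mapsto xy}\), \(\interp{\gCS}=D_{(x,y)\mapsto x\binom{y}{2}}\) (with \(\gSC\) its flip \(\binom{x}{2}y\)), and \(\interp{\gCCZ}=D_{(x,y,z)\mapsto xyz}\). The iterates then interpret as \(D_{kq}\). The plan is a direct computation in the semidirect-product coordinates \((g,q)\) of \cref{def:phase-affine-props}, applied to each defining circuit in \cref{def:QuadPhase-derived-diagonals,def:CubicPhase-derived-diagonals,def:CubicPhase-derived-ccz}.

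Since \(\interp{\cdot}_{\mathcal{P}}\) is a PROP functor, each defining diagram is a composite of affine layers \(P_gD_0\) and one-wire diagonals \(D_q\). I will push every diagonal to the input side using
\[
D_q P_g = P_g D_{q\circ g},
\]
which is the composition formula \(P_{g_2}D_{q_2}\circ P_{g_1}D_{q_1}=P_{g_2 g_1}D_{q_1+q_2\circ g_1}\). The composite then has the form \(P_G D_Q\). The affine parts of the defining circuits are presumably \(\gCX\) followed by its inverse, possibly with scalings, so \(G=\id\); this is checked directly in \(\cat{AffLab}_d\). Summing the transported exponents gives \(Q\).

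To identify \(Q\), I expand each transported term with the Pascal identities of \cref{lem:binom-identities,lem:binom3-identities}.

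\textbf{\(\gCZ\).} The definition presumably uses \(\gS\) on \(x+y\) together with \(\gS^{-1}\) on each wire. Then \cref{eq:binom-add} gives \(\binom{x+y}{2}-\binom{x}{2}-\binom{y}{2}=xy\).

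\textbf{\(\gCS\).} The factor \(t=2^{-1}\) suggests that the circuit combines \(\gT\) conjugated by \(\gCX\), to produce \(\binom{x+y}{3}\), with corrections. Using \cref{eq:binom3-add}, the pure terms \(\binom{x}{3},\binom{y}{3}\) cancel against the explicit \(\gT\) inverses. Any leftover \(\binom{x}{2}y\), or a term obtained via \(x\mapsto -x\) or \(2x\) scalings with \cref{eq:binom3-scale}, is then isolated so that exactly \(x\binom{y}{2}\) remains.

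\textbf{\(\gCCZ\).} This is analogous: inclusion–exclusion over \(\gCS\) terms transported through \(\gCX\) yields \(xyz\).

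In each case I finish by invoking \cref{lem:phase-coordinate-uniqueness}. It suffices to check that the coefficient vectors agree, or even just pointwise equality on \(\F_d^n\), since operator coordinates are unique by \cref{rem:phase-affine-coordinates}.

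The main obstacle is the \(\gCS\) (and hence \(\gCCZ\)) computation. The bookkeeping of scalings and of the \(t=2^{-1}\) coefficient requires the cubic scaling identity \cref{eq:binom3-scale}, whose middle coefficient \(2k\binom{k}{2}\) is easy to mis-track. I would first verify it symbolically as a polynomial identity over \(\mathbb{Q}\) with denominators dividing \(6\). Since \(d>3\), it then reduces to \(\F_d\).
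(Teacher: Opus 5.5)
Your plan is the paper's proof: each defining circuit has identity affine coordinate, and its exponent is computed by substituting the affine updates into the binomial coordinates, exactly as you do for \(CZ\) via \(\binom{x+y}{2}-\binom{x}{2}-\binom{y}{2}=xy\). Since the defining circuits were hidden, note the concrete forms: the paper's \(CS\) circuit has exponent \(txy-tx-\binom{x}{3}-\binom{x}{2}+t\binom{x+y}{3}-t\binom{y-x}{3}\), where the difference under \(x\mapsto -x\) (using \(\binom{-x}{2}=\binom{x}{2}+x\) and \(\binom{-x}{3}=-\binom{x}{3}-2\binom{x}{2}-x\)) kills \(\binom{x}{2}y\) and the \(CZ^t\) factor absorbs the leftover \(-txy\); and \(CCZ\) is built from \(T\)-phases on subset sums rather than \(CS\) terms, with exponent the third finite difference \(\binom{x+y+z}{3}-\binom{x+y}{3}-\binom{x+z}{3}-\binom{y+z}{3}+\binom{x}{3}+\binom{y}{3}+\binom{z}{3}=xyz\), which your method handles in the same way.
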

Appendix~\ref{app:proofs-soundness} proves this by expanding the defining circuits in binomial coordinates.

With these phase functions identified, the transport equations can be read as substitution identities.

\begin{example}\label{ex:CS-M-bot}
On a basis value \((x,y)\), moving the multiplier across the lower wire replaces \(y\) by \(ay\) in the controlled-square phase \(x\binom{y}{2}\).
The binomial identity \(x\binom{ay}{2}=a^2x\binom{y}{2}+\binom{a}{2}xy\) therefore yields the right-hand side of \cref{ax:CS-M-bot}: a \(\gCS\)-factor with coefficient \(a^2\) and the bilinear correction carried by the \(\gCZ\)-factor.
\end{example}

\begin{example}\label{ex:d5-main-transport}
Take \(d=5\) and consider the circuit that applies controlled addition \(\gCX\), then applies the \(T\)-phase with coefficient \(2\) to the lower output wire.
On \(\ket{x,y}\), this circuit has affine part \((x,y)\mapsto(x,x+y)\) and phase exponent \(2\binom{x+y}{3}\).
Transporting the diagonal phase to the input side asks for this exponent in cubic normal-form coordinates:
\begin{align*}
2\binom{x+y}{3}
  &=2\binom{x}{3}+2\binom{y}{3}
    +2y\binom{x}{2}+2x\binom{y}{2}.
\end{align*}
Thus the layered normal form has the same affine controlled addition, preceded by a diagonal layer with coefficient \(2\) in the two \(T\)-slots and coefficient \(2\) in the two oriented mixed slots \(\gSC\) and \(\gCS\).
All remaining coordinates are zero, giving the circuit equality in \cref{fig:d5-transport-normalform}.

\begin{center}
\scalebox{0.92}{\input{d5-transport-normalform.tikz}}
{\captionsetup{hypcap=false}
\captionof{figure}{Transport of the lower \(T^2\)-phase through controlled addition at \(d=5\). The right-hand diagonal layer records the two one-wire cubic coefficients \(2\binom{x}{3}\), \(2\binom{y}{3}\) and the two oriented mixed coefficients \(2x\binom{y}{2}\), \(2y\binom{x}{2}\), followed by the same affine update.}
\label{fig:d5-transport-normalform}
\par}
\end{center}
\end{example}

\begin{lemma}\label{lem:LinPhase-soundness}
Fix an admissible phase fragment.
Every displayed axiom for that fragment is sound under its interpretation.
\end{lemma}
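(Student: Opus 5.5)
Since every morphism of \(\cat{PhAff}^{\mathcal{Q}}_d\) has unique coordinates \((g,q)\) (\cref{rem:phase-affine-coordinates}), I will check each displayed axiom by computing both sides as pairs \((g,q)\) and comparing them. The composition law \(P_{g_2}D_{q_2}\circ P_{g_1}D_{q_1}=P_{g_2g_1}D_{q_1+q_2\circ g_1}\) reduces this to a finite set of identities between phase functions on \(\F_d^n\). The affine parts on the two sides of each axiom will always agree: either by \cref{lem:Aff-soundness} after erasing diagonals, or because both sides contain the same single affine generator. So the real content is the equality of exponents. Derived diagonals will be replaced by their phase functions via \cref{lem:Derived-semantics}: \(\gCZ\mapsto xy\), \(\gCS\mapsto x\binom{y}{2}\), \(\gSC\mapsto y\binom{x}{2}\), and \(\gCCZ\mapsto xyz\). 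Iterated labels \(k\) and negative labels \(-k\) then become scalar multiples \(k\) and \(d-k\) of those exponents.

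I will sort the axioms into three groups. First come the finite-order axioms \eqref{ax:W-order-d}, \eqref{ax:Z-order-d}, \eqref{ax:S-order-d}, and \eqref{ax:T-order-d}. In each, the exponent is \(d\cdot q=0\) in \(\F_d\), so the operator is \(D_0\), the identity (the empty diagram in the scalar case). Second come the pure diagonal commutation and collection axioms such as \eqref{ax:SZ}, \eqref{ax:TS}, and \eqref{ax:TZ}, and those moving \(\gCZ\)-type factors. Here all gates have trivial affine part, diagonal operators commute, and exponents simply add. Both sides therefore have the same sum of exponents, up to reordering and the coefficient bookkeeping shown in the pictures.

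The third group is the transport axioms, where a diagonal crosses \(\gX\), \(\gM\), or \(\gCX\). Sliding \(D_q\) across \(P_g\) replaces \(q\) by \(q\circ g\), so each axiom becomes an instance of a substitution identity. The linear axioms \eqref{ax:ZX}, \eqref{ax:ZM}, and \eqref{ax:ZC-target} amount to \(x+1\), \(ax\), and \(x+y\), giving a constant \(\gW\)-factor, a scaled \(\gZ\), and a product of \(\gZ\)s. The quadratic axioms \eqref{ax:SX}, \eqref{ax:SM}, \eqref{ax:SC-control}, and \eqref{ax:CC-control} follow from \eqref{eq:binom-shift}, \eqref{eq:binom-scale}, and \eqref{eq:binom-add}, together with bilinearity of \(xy\) under \(y\mapsto x+y\) (giving \(x^2=2\binom{x}{2}+x\) by \eqref{eq:x2-binom}). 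The sign law \eqref{ax:CZ-neg} comes from \(x(-y)=-xy\), read via the multiplier. The cubic axioms follow from \cref{lem:binom3-identities}, with \(\mu=\binom{a}{2}\) as in \cref{ex:CS-M-bot}. The \(\gCS\)-through-\(\gCX\) axioms expand \(x\binom{x+y}{2}\) or \((x+y)\binom{y}{2}\) using \eqref{eq:binom-add} and \eqref{eq:x2y-binom}. The \(\gCCZ\) axioms use multilinearity plus \(x^2y=2y\binom{x}{2}+xy\).

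The main obstacle will be the cubic transport axioms through \(\gCX\) (\eqref{ax:CS-CNOT-bot}, \eqref{ax:CS-CNOT-bot2}, and \eqref{ax:CCZ-CNOTtop}). The work is not conceptual: one must match the exact coefficients drawn on the right-hand sides, including which wire plays control versus target and the orientation of \(\gCS\) versus \(\gSC\). For each such axiom I will expand the substituted exponent in monomials, convert it to the binomial coordinates of \cref{lem:phase-coordinate-uniqueness} using \cref{lem:monomial-to-binomial}, and compare coefficient-wise with the right-hand exponent. Uniqueness of these coordinates makes this comparison decisive. Throughout I will rely on admissibility (\(d\) odd, respectively \(d>3\)) so that the constants \(2^{-1}\) and \(6^{-1}\) are well defined.
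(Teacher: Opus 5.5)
Your proposal is correct and follows essentially the paper's argument: compare both sides as pairs $(g,q)$ via the coordinate uniqueness of \cref{rem:phase-affine-coordinates}, dispose of the finite-order and commutation axioms by additivity of exponents in $\F_d$, substitute the derived-gate phases from \cref{lem:Derived-semantics}, and read each transport axiom as an instance of $D_qP_g=P_gD_{q\circ g}$ expanded with \cref{lem:binom-identities,lem:binom3-identities,lem:monomial-to-binomial}. One small tightening: compare the affine coordinates by directly computing the label maps, since erasing the diagonals from a phase axiom need not produce an axiom of $\et{Aff}_d$ to which \cref{lem:Aff-soundness} literally applies.
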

Appendix~\ref{app:proofs-soundness} compares the affine coordinate and the phase exponent of each displayed equation; the transport cases are exactly the binomial identities stated above.

\subsection{Diagonal normal forms}
\label{subsec:phase-normal-forms}

A diagonal normal form is an ordered coefficient vector for the phase function \(q\), rendered as columns and descending stairs.

\begin{definition}
A \(Z\)-, \(S\)-, or \(T\)-column circuit has the respective form
\begin{center}
\(\vcenter{\hbox{\scalebox{0.9}{\qcfig{Zcoldef}}}}
  =
  \vcenter{\hbox{\scalebox{0.9}{\qcfig{Zcol}}}} \qquad \vcenter{\hbox{\scalebox{0.9}{\qcfig{Scoldef}}}}
  =
  \vcenter{\hbox{\scalebox{0.9}{\qcfig{Scol}}}} \qquad \vcenter{\hbox{\scalebox{0.9}{\qcfig{Tcoldef}}}}
  =
  \vcenter{\hbox{\scalebox{0.9}{\qcfig{Tcol}}}}\)
\end{center}
Exponents lie in \(\F_d\); a zero label denotes the identity on that wire.
\end{definition}

\begin{definition}
A \(\gCZ\)-, \(\gCS\)-, or \(\gSC\)-descending stair has the corresponding form
\begin{center}
\(
  \vcenter{\hbox{\scalebox{0.9}{\qcfig{Utriangleleft}}}}
  =
  \vcenter{\hbox{\scalebox{0.9}{\qcfig{Utriangleright}}}}
  \quad 
  \vcenter{\hbox{\scalebox{0.9}{\qcfig{UCStriangleleft}}}}
  =
  \vcenter{\hbox{\scalebox{0.9}{\qcfig{UCStriangleright}}}}
  \quad 
  \vcenter{\hbox{\scalebox{0.9}{\qcfig{USCtriangleleft}}}}
  =
  \vcenter{\hbox{\scalebox{0.9}{\qcfig{USCtriangleright}}}}
  \quad
\)
\end{center}
Each label \(n_i\in\F_d\) specifies the corresponding two-wire iterate.
\end{definition}

\begin{definition}
  A \(\gCCZ\)-descending stair is a circuit of the form
  \begin{center}
  \(
    \vcenter{\hbox{\scalebox{0.9}{\qcfig{Vtriangleleft}}}}
  =
  \vcenter{\hbox{\scalebox{0.9}{\qcfig{Vtriangleright}}}} 
  \quad 
  \text { and }
  \quad 
  \vcenter{\hbox{\scalebox{0.9}{\qcfig{Vtrapezoidleft}}}}
  =
  \vcenter{\hbox{\scalebox{0.9}{\qcfig{Vtrapezoidright}}}}
  \)
  \end{center}
  Each label \(n_i\in\F_d\) specifies the corresponding three-wire iterate.
\end{definition}

\begin{definition}
Phase normal forms in \(\cat{LinPhase}_d\), \(\cat{QuadPhase}_d\), and \(\cat{CubicPhase}_d\) are diagonal circuits of the respective shapes
\begin{center}
\(\vcenter{\hbox{\scalebox{0.9}{\qcfig{nfdiagZ}}}}\, \qquad \qquad
\vcenter{\hbox{\scalebox{0.9}{\qcfig{nfdiagS}}}}\)

\(\text{and}\qquad
\vcenter{\hbox{\scalebox{0.9}{\qcfig{nfdiag}}}}\)
\end{center}
with all exponents taken in \(\F_d\).
\end{definition}

Semantically, the linear form records the scalar coefficient and the coefficients of \(x_i\).
The quadratic form adds the coefficients of \(\binom{x_i}{2}\) and \(x_ix_j\).
The cubic form adds the coefficients of \(\binom{x_i}{3}\), \(x_i\binom{x_j}{2}\) for \(i\neq j\), and \(x_ix_jx_k\).
With the order fixed, a diagonal normal form is a diagrammatic coefficient vector; \cref{lem:phase-coordinate-uniqueness} is the finite-field reason that this vector is determined by its phase function.

This coefficient-vector reading is also where the odd-prime setting differs from qubit CNOT-dihedral phase polynomials.
The qubit normal forms of Amy et al.\ use parity arguments such as \(x_i\oplus x_j\) with coefficients read in \(\Z_8\); expanding a parity as an ordinary product expression mixes \(\F_2\)-linear labels with \(\Z_8\)-valued exponents \cite{amy_cnot_dihedral_2018}.
For prime \(d>3\), labels and exponents both lie in \(\F_d\), and the derived families \(\gCZ\), \(\gCS\), \(\gSC\), and \(\gCCZ\) are coordinates of one finite-field polynomial basis.
Consequently the cubic diagonal normal forms on \(n\) wires are counted by \(d^{1+3\binom{n}{1}+3\binom{n}{2}+\binom{n}{3}}=d^{\binom{n+3}{3}}\), and equality of diagonal normal forms is equality of the displayed coefficient vector.

\subsection{Transport principles and completeness}
\label{subsec:phase-completeness}

Fix an admissible degree tag \(\mathcal{P}\in\{\Lin,\Quad,\Cube\}\) for the current prime \(d\).
For \(n\) wires, a placed instance of \(G:r\to r\) is obtained by tensoring with identities and conjugating by a symmetry; for oriented two-wire families, placement includes the ordered pair of wires.
The local calculus below records the finite cancellation, commutation, and transport certificates used by the normal-form proof.

\begin{lemma}\label{lem:phase-finite-order}
\stmtPhaseFiniteOrder
\end{lemma}

\begin{proof}
The finite-order certificates are listed in \cref{tab:diagonal-family-finite-order}, with lower fragments obtained by restriction.
\end{proof}

\begin{lemma}\label{lem:phase-diagonal-commutation}
\stmtPhaseDiagonalCommutation
\end{lemma}

\begin{proof}
The non-scalar local cases are listed in \cref{tab:diag-comm-lemmas}; scalar factors are central.
Finite powers commute by repeated application of the corresponding local equation.
Placement in a larger circuit uses identities and symmetries.
\end{proof}

\begin{lemma}\label{lem:phase-one-step-transport}
\stmtPhaseOneStepTransport
\end{lemma}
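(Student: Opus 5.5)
I read \stmtPhaseOneStepTransport as the one-step transport principle. For every placed instance \(G\) of a family in \(\mathcal{D}_{\mathcal{P}}\) on \(n\) wires, and every placed affine generator \(E\in\{\gX,\gCX,\gM,\gSWAP\}\), there should be a diagonal circuit \(D'\) built from placed instances of families in \(\mathcal{D}_{\mathcal{P}}\) such that \(G\circ E=E\circ D'\) is derivable, with \(\interp{D'}_{\mathcal{P}}=D_{q\circ g}\) where \(\interp{G}_{\mathcal{P}}=D_q\) and \(\interp{E}_{\mathcal{P}}=P_g\). I will prove it by a finite case analysis on the pair \((G,E)\), organised by how the wire supports of \(G\) and \(E\) intersect. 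The semantic half is immediate in every case: \(q\circ g\) is computed with \cref{lem:binom-identities,lem:binom3-identities}, and \cref{lem:phase-coordinate-uniqueness} pins down which coefficients \(D'\) must carry. So the real content is the derivability.

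\textbf{Disjoint or scalar cases.} If the supports are disjoint, or \(G=\gW\), the interchange law of the strict symmetric monoidal structure gives \(G\circ E=E\circ G\) directly, so \(D'=G\). Symmetries reduce to relabelling: conjugating a placed instance by \(\gSWAP\) gives another placed instance. For \(\gCZ\) and \(\gCCZ\) this is coherent by the derived symmetry equations \cref{ax:CZ-SWAP,ax:CCZ-SWAP,ax:CCZ-SWAP2}. For the oriented families, \(\gCS\) conjugated by a swap is by definition \(\gSC\).

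\textbf{Overlapping one-wire cases.} When \(G\) overlaps \(E\) on one wire, and \(E\) is \(\gX\) or \(\gM\) or the target or control of \(\gCX\), I will read the needed equation off the figures. The affine rules \cref{ax:ZX,ax:ZM,ax:ZC-target} handle \(\gZ\); \cref{ax:SX,ax:SM,ax:SC-control,ax:CZ-neg,ax:CC-control} handle \(\gS\) and \(\gCZ\); and \cref{ax:TX,ax:TM,ax:TC-control,ax:CS-M-bot,ax:CS-M-top,ax:CS-CNOT-bot,ax:CS-CNOT-bot2,ax:CCZ-M,ax:CCZ-CNOTtop} handle the cubic families. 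If a case is not displayed verbatim, I will derive it in one of three ways. First, unfold the defining circuit of the derived diagonal (\cref{def:QuadPhase-derived-diagonals,def:CubicPhase-derived-diagonals,def:CubicPhase-derived-ccz}) and push its primitive constituents through one at a time. Second, pass to the inverse using finite order (\cref{lem:phase-finite-order}): if \(G\circ E=E\circ D'\) then \(G^{-1}\circ E=E\circ D'^{-1}\). Third, conjugate a known case by \(\gSWAP\) or \(\gM\) with \(a=-1\). Iterated generators such as \(\gXm\), \(\gCXm\) and finite powers of \(G\) are then handled by induction on the number of factors, transporting one factor at a time.

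\textbf{Expected obstacle.} The hard step is the cubic two- and three-wire families \(\gCS\), \(\gSC\) and \(\gCCZ\) crossing \(\gCX\) in every relative orientation: control and target on various legs, \(\gCX\) reversed, and \(\gCX\) touching two legs of a \(\gCCZ\). Only a few orientations are axioms, and the correction terms involve several families at once, for example \(x\binom{y+x}{2}=x\binom{y}{2}+\binom{x}{2}y+xy+2\binom{x}{3}+\dots\). For each such case I would first try to express the reversed or swapped \(\gCX\) through the given one using \(\gM\) with \(a=-1\), swaps and \cref{axiom-swap-fixed}. Failing that, I would expand \(\gCS\) by its definition through \(\gCX\), \(\gT\), \(\gS\) and \(\gZ\), and use the \(\gT\)-transport \cref{ax:TC-control}. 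After that, the output only needs to be a product of placed family instances, which the statement allows. If the statement instead asks for \(D'\) in normal form, I would finish by reordering with \cref{lem:phase-diagonal-commutation} and collecting exponents mod \(d\) with \cref{lem:phase-finite-order}.
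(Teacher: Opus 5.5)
Your decomposition matches the paper's: a finite table of local certificates (\cref{tab:diag-through-affine}), with disjoint, scalar and symmetry placements handled by the PROP laws, and the semantic side given by the binomial identities. The gap is that for every non-primitive overlapping case you offer a menu of strategies rather than a derivation. That menu does not reach the case the cubic column rests on: \(T\) on the \emph{target} of a controlled addition, \cref{ax:TC-target}.

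The only primitive \(T\)/\(CX\) rule, \eqref{ax:TC-control}, is the trivial commutation of \(T\) on the control wire, which says nothing about \(\binom{x+y}{3}\). None of your three fallbacks produces the target case. \(T\) is primitive, so there is nothing to unfold. Passing to inverses does not change supports. Conjugating \eqref{ax:TC-control} by a swap or by a multiplier with \(a=-1\) leaves \(T\) on a control wire.

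What is needed is to fold the gadget \(CX^{-1}\circ T_y\circ CX\), whose phase is \(\binom{x+y}{3}\), back into the defining circuits of \(CS\) and \(SC\). Those circuits also contain the \(\binom{y-x}{3}\)-gadget and the factor \(t=2^{-1}\). The paper does this in a long explicit derivation using the phase-gadget lemmas (\cref{ax:PhaseGadgetT,ax:PhaseGadgetComm}), multiplier transport \eqref{ax:TM}, the quadratic transport package and the cubic commutation package.

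Your suggestion to expand \(CS\) and then ``use the \(T\)-transport \eqref{ax:TC-control}'' hits the same wall. An external \(CX\) whose target lies on a wire of \(CS\) or \(CCZ\) meets the \(T\), \(S\), \(Z\) factors of the definition on their target side. Accordingly, the paper's derivations of \cref{ax:CS-CNOT,ax:CS-CNOT-rev,ax:CCZ-CNOT} invoke the target case through \cref{lem:one-wire-cubic-diag-affine-transport}.

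The second missing ingredient is a dependency order. Unfolding a derived gate and pushing an affine generator through it does not leave ``a product of placed family instances''. It leaves one-wire factors and new mixed factors sandwiched between the \(CX\) gates of the definition. Refolding them requires commuting diagonals past each other and transporting them through those \(CX\) gates, which are cases of the very kind being proved.

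The paper breaks this circularity by building certificates in a fixed order:
\begin{enumerate}
\item the linear and quadratic seeds, including \cref{ax:ZC-control}, which is also not an axiom, so your list for \(Z\) misses the control case;
\item the cubic commutation package indexed in \cref{tab:diag-comm-lemmas};
\item \cref{ax:TC-target} and \cref{lem:one-wire-cubic-diag-affine-transport};
\item \cref{ax:CS-X-bot,ax:CS-X-top,ax:CCZ-X};
\item \cref{ax:CS-CNOT-rev,ax:CS-CNOT}, via the primitive multiplier rules \eqref{ax:CS-M-top} and \eqref{ax:CS-M-bot};
\item finally \cref{ax:CCZ-CNOT}, via the primitive \eqref{ax:CS-CNOT-bot}, \eqref{ax:CS-CNOT-bot2} and \eqref{ax:CCZ-M}.
\end{enumerate}
Without such an order and the explicit certificates, your text is a plan rather than a proof of derivability.

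A small slip: \(x\binom{x+y}{2}=x\binom{y}{2}+2y\binom{x}{2}+xy+3\binom{x}{3}+2\binom{x}{2}\). The coefficients of \(y\binom{x}{2}\) and \(\binom{x}{3}\) in your sample expansion should be \(2\) and \(3\), not \(1\) and \(2\).
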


\begin{proof}
The local transport certificates are listed in \cref{tab:diag-through-affine}.
Each entry is the circuit form of an affine-substitution identity \(q\mapsto q\circ g_A\) from \cref{subsec:binomial-polynomials}, placed in context by identities and symmetries.
\end{proof}

\begin{lemma}\label{lem:phase-diagonal-normalisation}
\stmtPhaseDiagonalNormalisation
\end{lemma}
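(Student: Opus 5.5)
The statement \stmtPhaseDiagonalNormalisation is not expanded here. From its position I read it as follows: every diagonal circuit in the admissible fragment \(\mathcal{P}\) — a composite of placed instances of the families in \(\mathcal{D}_{\mathcal{P}}\), their iterates, and scalars — is provably equal to a phase normal form. By \cref{lem:phase-coordinate-uniqueness} and \cref{lem:Derived-semantics}, that normal form is the one whose coefficient vector is the expansion \eqref{eq:phase-coordinate-expansion} of its interpretation. The plan is a purely syntactic sorting argument using the diagonal commutation and finite-order lemmas; no transport rules are needed, since everything is already diagonal.

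\textbf{Step 1: reduce every factor to a normal-form slot.} Each placed diagonal factor sits on an ordered tuple of wires. For \(\gCZ\) and \(\gCCZ\), I use the symmetry derivations \cref{ax:CZ-SWAP,ax:CCZ-SWAP,ax:CCZ-SWAP2} to rewrite a placement on wires in arbitrary order as the placement with increasing indices. This is exactly the slot used by the descending stairs. For \(\gCS\) and \(\gSC\), the orientation is already a slot, since both orientations \(u_{ij}\) with \(i\ne j\) are recorded. After this step, every factor is a \(k\)-fold iterate of one canonical slot generator: \(\gW\), a one-wire \(\gZ\), \(\gS\) or \(\gT\), a \(\gCZ\) or \(\gCS/\gSC\) on a fixed ordered pair, or a \(\gCCZ\) on \(i<j<k\).

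\textbf{Step 2: bubble-sort by the fixed slot order.} Fix the linear order of slots prescribed by the normal-form picture: scalar, then the \(Z\)-column, the \(S\)-column, the \(\gCZ\) stairs, and so on. Then proceed by induction on the number of inversions in the word of factors. By \cref{lem:phase-diagonal-commutation}, any two adjacent placed diagonal factors commute, with scalars central. So each adjacent inversion can be swapped, and the number of inversions strictly decreases. Termination gives a word in which equal slots are contiguous.

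\textbf{Step 3: collect and reduce.} Adjacent factors in the same slot merge by definition of iterates into a single \(k\)-fold iterate. \cref{lem:phase-finite-order} then reduces \(k\) modulo \(d\) to a label in \(\F_d\); label \(0\) is the identity, by the column and stair conventions. The resulting word is literally the normal-form diagram with those coefficients.

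The main obstacle I expect is Step 2 in the cubic fragment. It relies on the commutation table covering every pair of overlapping placements of the derived families, for example \(\gCS\) against \(\gCCZ\) sharing two wires, or \(\gSC\) against \(\gT\). These are not primitive axioms: the derived families are defined as circuits containing \(\gCX\)s. I would therefore first check that \cref{lem:phase-diagonal-commutation} is stated for all placed families in \(\mathcal{D}_{\Cube}\) and lean on it. Failing that, I would unfold the definitions and commute via \cref{lem:phase-one-step-transport}, accepting that the correction terms cancel, as they do semantically.

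A final remark on the coefficients: \cref{lem:phase-coordinate-uniqueness} and \cref{lem:Derived-semantics} identify the normal form reached with the expansion of the circuit's interpretation, which makes the result independent of the sorting path.
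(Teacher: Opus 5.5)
Your proposal is correct and follows the paper's proof. Move scalars by centrality, sort the remaining placed generators into the fixed one-wire, two-wire, three-wire order with \cref{lem:phase-diagonal-commutation} (wire-order cases for the symmetric families are reduced by \cref{ax:CZ-SWAP,ax:CCZ-SWAP,ax:CCZ-SWAP2}, as the paper also does), and then cancel $d$-fold runs of identical placed generators with \cref{lem:phase-finite-order}. The commutation lemma does cover every overlapping cubic placement (\cref{tab:diag-comm-lemmas}), so your fallback is not needed; it would not have worked anyway, because appealing to semantic cancellation of correction terms is circular in a completeness proof.
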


\begin{proof}
Use \cref{lem:phase-diagonal-commutation} to sort the generators into the fixed scalar, one-wire, two-wire, and three-wire order.
For each placed generator, cancel every \(d\)-fold run of identical copies by \cref{lem:phase-finite-order}; equivalently, reduce each multiplicity modulo \(d\).
\end{proof}

\begin{lemma}\label{lem:phase-affine-layer-transport}
\stmtPhaseAffineLayerTransport
\end{lemma}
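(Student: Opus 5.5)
Since the statement is hidden behind the macro \stmtPhaseAffineLayerTransport, I read it as follows: for an affine circuit \(A\) and a diagonal circuit \(D\) built from the families in \(\mathcal{D}_{\mathcal{P}}\), the composite \(D\circ A\) can be rewritten as \(A\circ D'\), where \(D'\) is again diagonal in the same families and has phase exponent \(q_D\circ\interp{A}_{Aff}\). This is the layer-level version of \cref{lem:phase-one-step-transport}. I will prove it by a double induction that iterates the one-step transport, first over the factors of \(D\) and then over the gates of \(A\).

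\emph{Single diagonal factor through a single affine generator.} Write \(A\) as a word in placed affine generators \(\gX\), \(\gCX\), \(\gM\) and symmetries. Symmetries are handled by naturality of the symmetric structure: a placed diagonal instance conjugated by a wire permutation is just another placed instance, with the ordered wire pair permuted for the oriented families \(\gCS\) and \(\gSC\). For the remaining generators, \cref{lem:phase-one-step-transport} gives, for each placed diagonal generator \(G\in\mathcal{D}_{\mathcal{P}}\) and each placed affine generator \(E\), a derivation of \(G\circ E = E\circ D_{G,E}\). Here \(D_{G,E}\) is a diagonal circuit in the families of \(\mathcal{D}_{\mathcal{P}}\). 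If \(E\) and \(G\) act on disjoint wires, the step is trivial by functoriality of \(\otimes\). Iterated factors \(G^k\) are moved one copy at a time.

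\emph{Whole layers.} I first induct on the number of factors of \(D\), moving the rightmost factor through \(E\) and then the next one. Each output \(D_{G,E}\) sits to the right of \(E\), so the factors still to be moved stay on the left of \(E\) and never interact with earlier outputs. I then induct on the length of \(A\): having moved \(D\) through \(E_k\) to obtain \(D_k\), I move \(D_k\) through \(E_{k-1}\), and so on.

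\emph{Semantic claim.} This follows from soundness (\cref{lem:LinPhase-soundness}) together with the composition law in \cref{def:phase-affine-props}: \(P_gD_0\) followed by \(D_q\) equals \(P_g D_{q\circ g}\). By \cref{rem:phase-affine-coordinates} this pins down \(q_{D'}=q_D\circ g\). Closure of each class \(\mathcal{Q}\) under precomposition guarantees that the result stays inside the fragment.

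\emph{Main obstacle.} The step I expect to be hardest is ensuring that \cref{lem:phase-one-step-transport} really covers every pair (derived family, affine generator) in every placement. The tricky cases are \(\gCCZ\) and the oriented \(\gCS\)/\(\gSC\) placed so that the \(\gCX\) control or target lands on an arbitrary wire, including wires outside the support of \(G\). For the wire that is the \(\gCX\) target but not in the support of \(G\), the transport is trivial. For the other configurations, I would reduce to the listed local certificates by conjugating with symmetries, using the derived \(\CZ\)/\(\CCZ\) swap symmetries cited above to reorder support wires. Where a configuration is missing, I would first try to decompose the derived diagonal into its defining circuit from \cref{def:QuadPhase-derived-diagonals,def:CubicPhase-derived-diagonals,def:CubicPhase-derived-ccz} and transport the primitive pieces (\(\gZ\), \(\gS\), \(\gT\) and \(\gCX\)) individually, rather than deriving a new certificate.
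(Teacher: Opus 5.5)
Your argument is essentially the paper's proof: write \(A\) as a word in placed affine generators and symmetries, expand \(D\) into single placed copies, and push them across from the output side inward with \cref{lem:phase-one-step-transport}, handling scalars, disjoint supports and symmetries by the PROP laws. Your derivation of \(q_{D'}=q_D\circ g\) from soundness and \cref{rem:phase-affine-coordinates} is a fine substitute for the paper's composition of one-step substitutions.

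Two small fixes. First, the paper's lemma (as used in \cref{lem:phase-layered-factorisation}) has \(D'\) in diagonal normal form, so you should finish by applying \cref{lem:phase-diagonal-normalisation}. Second, a placed controlled addition whose control lies in the support of \(G\) while its target does not is only semantically trivial, not a consequence of the PROP laws; it is one of the controlled-addition certificates in \cref{tab:diag-through-affine}. Since \cref{lem:phase-one-step-transport} already covers all such placements, your coverage worry and fallback are not needed here.
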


\begin{proof}
Write \(A\) as a product of placed affine generators and symmetries.
Expand each finite diagonal power into the corresponding finite product of placed generator copies.
Move each non-scalar copy across the placed affine generators with \cref{lem:phase-one-step-transport}; scalar copies commute, and symmetries only relabel placements.
Normalise the resulting diagonal circuit with \cref{lem:phase-diagonal-normalisation}.
The semantic formula follows by composing the one-step substitutions.
\end{proof}

\begin{lemma}\label{lem:phase-layered-factorisation}
\stmtPhaseLayeredFactorisation
\end{lemma}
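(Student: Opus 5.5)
We first fix what the statement of \cref{lem:phase-layered-factorisation} asserts, since it is given by a macro. Presumably: every circuit \(C:n\to n\) in the presentation of an admissible fragment \(\mathcal{P}\) is provably equal to a layered circuit \(\NF_{\mathrm{aff}}(g)\circ D\). Here \(D\) is a diagonal normal form, \(g=\interp{C}\)'s affine part, and \(\interp{C}_{\mathcal{P}}=P_gD_q\) with \(q\) the phase function read off \(D\). The plan is an induction on the number of generator occurrences in \(C\), maintaining the invariant that the circuit processed so far has the shape \(A\circ D\). In that shape, \(A\) is an affine circuit and \(D\) is a diagonal circuit, with the diagonal layer on the input side, matching the semantic order \(P_gD_q\).

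\textbf{Base case and inductive step.} The empty circuit, identities and symmetries are already of the form \(A\circ D\) with \(D\) the empty diagonal. For the inductive step, write \(C=G\circ C'\), where \(G\) is a single placed generator, and rewrite \(C'\) as \(A'\circ D'\) by induction. If \(G\) is affine (\(\gX\), \(\gCX\), \(\gM\), or a symmetry), then \(G\circ A'\) is again affine and nothing needs to move. If \(G\) is diagonal (\(\gW\), \(\gZ\), \(\gS\), or \(\gT\) placed on some wire), we must move it past \(A'\). We apply \cref{lem:phase-affine-layer-transport} to rewrite \(G\circ A'\) as \(A'\circ D''\), where \(D''\) is a diagonal normal form whose phase is the precomposition of \(G\)'s phase with \(\interp{A'}_{Aff}\). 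The scalar \(\gW\) simply commutes past \(A'\) as a central factor. This yields \(A'\circ(D''\circ D')\), and \(D''\circ D'\) is a diagonal circuit.

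\textbf{Final normalisation.} Once all generators are processed, we apply \cref{lem:affine-normalisation} to replace \(A\) by \(\NF_{\mathrm{aff}}(g)\). We then apply \cref{lem:phase-diagonal-normalisation} to bring \(D\) into its ordered coefficient form. The semantic claim follows from soundness, \cref{lem:LinPhase-soundness}, together with the coordinate uniqueness of \cref{rem:phase-affine-coordinates}. Under the interpretation, \(\interp{A\circ D}=P_gD_q\), so the affine part of the factorisation is exactly \(g\) and the diagonal part represents exactly \(q\).

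\textbf{Main obstacle.} The delicate step is making sure the transport lemma applies to an arbitrary placed diagonal generator against an arbitrary affine layer. In particular, the output of a one-step transport must contain only placed instances of the families in \(\mathcal{D}_{\mathcal{P}}\), including the derived \(\gCZ\), \(\gCS\), \(\gSC\), \(\gCCZ\), so that the process closes up. This closure is guaranteed by the form of \cref{lem:phase-affine-layer-transport}, which already ends in a diagonal normal form. So the induction only needs the transport lemma for primitive diagonal generators crossing an affine circuit; it never needs to transport derived families coming from earlier steps, since those remain to the right of the affine layer. Keeping diagonals always on the input side avoids any need to push derived diagonals through further affine gates.
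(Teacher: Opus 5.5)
Your proposal is correct and is essentially the paper's proof. The paper also inducts over a generator word, with \cref{lem:phase-affine-layer-transport} as the only nontrivial step, and then applies \cref{lem:affine-normalisation} and \cref{lem:phase-diagonal-normalisation}.

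The one difference is the direction of attachment. You add each new generator on the output side, push a single primitive diagonal through the accumulated affine layer, and normalise once at the end. The paper adds it on the input side, pushes the accumulated diagonal normal form through the new generator's affine layer, and normalises at every step. Note that the transport lemma already carries derived families like \(\gCZ\) across affine gates internally, so your ordering does not actually avoid that work.
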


\begin{proof}
Each generator has a layered form: affine generators use the identity diagonal layer, and diagonal generators use the identity affine layer.
Induct over a circuit word.
For the induction step, suppose the current prefix has layered form \(A\circ D\), and write the next generator in layered form \(A'\circ D'\).
By associativity, the composite has middle subcircuit \(D\circ A'\).
\Cref{lem:phase-affine-layer-transport} rewrites it as \(A'\circ D_1\), with \(D_1\) diagonal normal.
Then \cref{lem:affine-normalisation} puts \(A\circ A'\) in affine normal form, and \cref{lem:phase-diagonal-normalisation} puts \(D_1\circ D'\) in diagonal normal form.
\end{proof}

\begin{proposition}\label{prop:phase-layered-uniqueness}
\stmtPhaseLayeredUniqueness
\end{proposition}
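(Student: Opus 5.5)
The statement \stmtPhaseLayeredUniqueness is not expanded here. From its position it presumably says: if two layered normal forms \(A\circ D\) and \(A'\circ D'\) in an admissible fragment \(\mathcal{P}\) have the same interpretation, then \(A=A'\) and \(D=D'\) syntactically. Equivalently, each semantic morphism \(P_gD_q\) has exactly one layered normal form. I would prove it by separating the two semantic coordinates, as \cref{rem:phase-affine-coordinates} suggests, and then applying the uniqueness results already available for each layer.

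\textbf{Step 1: compute the interpretation of a layered normal form.} Let \(A\) be an affine normal form and \(D\) a diagonal normal form. By \cref{def:Phase-interpretation}, \(\interp{A}_{\mathcal{P}}=P_gD_0\) with \(g=\interp{A}_{Aff}\). By \cref{lem:Derived-semantics} and the tensor and composition rules of \cref{def:phase-affine-props}, \(\interp{D}_{\mathcal{P}}=P_{\id}D_q\). Here \(q\) is the polynomial \eqref{eq:phase-coordinate-expansion} whose coefficients are read off the labels of \(D\): the scalar, the \(Z\)-, \(S\)- and \(T\)-columns, and the \(\gCZ\), \(\gCS\), \(\gSC\) and \(\gCCZ\) stairs. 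Each derived family contributes its stated monomial, and products of commuting diagonals add exponents. The composition law then gives \(\interp{A\circ D}_{\mathcal{P}}=P_{g\circ\id}D_{q+0\circ\id}=P_gD_q\).

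\textbf{Step 2: separate the coordinates.} Suppose \(\interp{A\circ D}_{\mathcal{P}}=\interp{A'\circ D'}_{\mathcal{P}}\). By Step 1 this reads \(P_gD_q=P_{g'}D_{q'}\). \Cref{rem:phase-affine-coordinates} then gives \(g=g'\), by comparing basis labels, and \(q=q'\), since \(\omega\) is primitive.

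\textbf{Step 3: apply layer uniqueness.} From \(g=g'\), \cref{lem:normal-form-unique} gives \(A=\NF_{\mathrm{aff}}(g)=A'\). From \(q=q'\), \cref{lem:phase-coordinate-uniqueness} says the two coefficient vectors coincide, because both are expansions of the same phase function in the fragment's coordinates. A diagonal normal form has a fixed shape and order and is determined by its coefficient vector, so \(D=D'\). For existence in the bijective version of the statement, let \(P_gD_q\) be given. Take \(\NF_{\mathrm{aff}}(g)\), and take the diagonal normal form whose labels are the coefficients of \(q\) from \cref{lem:phase-coordinate-uniqueness}; every coefficient in \(\F_d\) is realised as an iterate label.

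\textbf{Main obstacle.} The real content lies in Step 1 together with Step 3's correspondence between labels and coefficients. The interpretation of each derived diagonal must be exactly the basis coordinate it is meant to record, for instance \(\gCS\mapsto x\binom{y}{2}\) including the \(t=2^{-1}\) correction, and the oriented pair \(\gCS\) and \(\gSC\) must not double-count. Otherwise the map from normal forms to \((g,q)\) would fail to be injective. I would rely on \cref{lem:Derived-semantics} for these values and check the placement conventions for the stairs wire by wire.
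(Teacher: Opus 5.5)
Your proposal is correct and follows essentially the same route as the paper: compute both interpretations as \(P_gD_q\) and \(P_{g'}D_{q'}\), use the coordinate uniqueness of \cref{rem:phase-affine-coordinates} to get \(g=g'\) and \(q=q'\), then conclude \(A=A'\) by \cref{lem:normal-form-unique} and \(D=D'\) by the coefficient uniqueness of \cref{lem:phase-coordinate-uniqueness}. The existence direction you add is not needed for how the proposition is used in \cref{thm:phase-normal-form-complete}, but it does no harm.
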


\begin{proof}
Write the two interpretations as \(P_gD_q\) and \(P_{g'}D_{q'}\).
Coordinate uniqueness in \cref{rem:phase-affine-coordinates} gives \(g=g'\) and \(q=q'\).
Then \cref{lem:normal-form-unique} fixes the affine normal form.
For the diagonal layers, \cref{lem:phase-coordinate-uniqueness} makes the normal-form coefficient vector unique, so the two displayed diagonal normal forms are the same.
\end{proof}

\begin{theorem}\label{thm:phase-normal-form-complete}
\stmtPhaseNormalFormComplete
\end{theorem}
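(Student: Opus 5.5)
The theorem asserts that in each admissible fragment, circuits with equal interpretation are derivably equal. The proof will mirror \cref{thm:Aff-sound-complete}, now with the layered machinery of this section, and I expect the statement to include both directions.

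For soundness, derivable equality implies semantic equality. Each defining axiom of the fragment holds under \(\interp{\cdot}_{\mathcal{P}}\): the affine axioms by \cref{lem:Aff-soundness}, transported along the embedding \(g\mapsto P_gD_0\), and the phase axioms by \cref{lem:LinPhase-soundness}. By \cref{def:prop-interpretation}, an interpretation respecting every defining equation extends uniquely to a PROP functor, so every derivable equality is preserved.

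For completeness, take circuits \(C,C':n\to n\) in the fragment with \(\interp{C}_{\mathcal{P}}=\interp{C'}_{\mathcal{P}}\). \Cref{lem:phase-layered-factorisation} rewrites each into layered form, giving \(C=A\circ D\) and \(C'=A'\circ D'\). Here \(A,A'\) are affine normal forms and \(D,D'\) are diagonal normal forms in the fixed order of the fragment. Soundness then gives \(\interp{A\circ D}_{\mathcal{P}}=\interp{A'\circ D'}_{\mathcal{P}}\). By \cref{prop:phase-layered-uniqueness}, two layered normal forms with the same interpretation are syntactically identical, so \(A=A'\) and \(D=D'\). Symmetry and transitivity of derivable equality then give \(C=C'\) in the presentation.

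The only point needing care is checking that the layered form from \cref{lem:phase-layered-factorisation} is literally a normal form of the kind \cref{prop:phase-layered-uniqueness} compares. In particular, its diagonal layer must use the fixed scalar, one-wire, two-wire and three-wire order with coefficients reduced to \(\F_d\), and the linear part must be in Lafont's form with the ordered translation column. Both requirements come directly from \cref{lem:phase-diagonal-normalisation} and \cref{lem:affine-normalisation}. The substantive work, the one-step transport certificates and coordinate uniqueness, is already contained in those lemmas, so this theorem is an assembly step.
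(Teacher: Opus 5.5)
Your proposal is correct and follows the paper's proof: rewrite both circuits to layered form with \cref{lem:phase-layered-factorisation}, use soundness to transfer the semantic hypothesis to the two normal forms, obtain $A=A'$ and $D=D'$ from \cref{prop:phase-layered-uniqueness}, and conclude by symmetry and transitivity. The soundness direction you add is exactly what \cref{lem:Aff-soundness,lem:LinPhase-soundness} supply, and the paper's proof uses those results without restating them.
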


\begin{proof}
Use \cref{lem:phase-layered-factorisation} to rewrite \(C\) and \(C'\) as layered normal forms \(A\circ D\) and \(A'\circ D'\).
By soundness, the semantic hypothesis gives \(\interp{A\circ D}_{\mathcal{P}}=\interp{A'\circ D'}_{\mathcal{P}}\).
\Cref{prop:phase-layered-uniqueness} gives \(A=A'\) and \(D=D'\), so the two original circuits rewrite to the same normal form.
\end{proof}

\section{Conclusion}\label{sec:discussion}

The finite presentations here give equational calculi for prime-dimensional circuits generated by reversible affine updates and diagonal phase gates.
For each prime \(d\), the affine core \(\cat{Aff}_d\) presents the maps \(x\mapsto Ax+b\) over \(\F_d\); its normal forms separate the Lafont-style linear component from the translation column, and completeness reduces derivability to equality of affine relabellings.

The phase extensions \(\cat{LinPhase}_d\), \(\cat{QuadPhase}_d\) for odd \(d\), and \(\cat{CubicPhase}_d\) for \(d>3\) add diagonal generators for finite-field phase polynomials of degree at most \(1\), \(2\), and \(3\).
Their semantics is the strict symmetric monoidal groupoid of pairs \((g,q)\), with composition \((g_2,q_2)\circ(g_1,q_1)=(g_2\circ g_1,q_1+q_2\circ g_1)\).
This composition explains why moving a diagonal layer through an affine layer performs the substitution \(q\mapsto q\circ g\).

Completeness is proved by the layered normal form \(A\circ D\), where \(A\) is affine and \(D\) records the coefficients of the relevant binomial-basis expansion.
Semantic uniqueness splits in the same way: the relabelling fixes \(A\), and coefficient uniqueness fixes \(D\).
Thus circuits with the same affine relabelling and phase function are equal in the presentation.
The result gives a prime-dimensional counterpart of CNOT-dihedral phase-polynomial reasoning, with binomial coordinates chosen to make affine transport local.

The results leave concrete directions for further work.
First, one can ask which parts of the qubit CNOT-dihedral toolkit transfer to the prime-field setting.
Spider-nest identities for AND-type phases \cite{munson_and-gates_2021}, T-count reduction methods \cite{debeaudrap_fast_2020}, and phase-polynomial compilation techniques provide concrete qubit benchmarks for such a transfer.
In odd prime dimension, finite-field binomial coordinates govern the relevant identities; binary parity no longer does.

Second, higher Clifford levels should admit analogous presentations.
For qubits, higher-order CNOT-dihedral presentations \cite{amy_cnot_dihedral_2018} provide a comparison point for refined phase gates, up to global phase.
In prime dimension, \cref{fig:clifford-binomial-ladder} suggests a different coordinate sequence and two regimes.
For the \(d\)-th-root part, extending the diagonal normal forms from degree \(3\) to degrees \(r<d\) should mostly be a matter of adding the higher binomial coordinates \(\binom{x}{r}\) and the mixed coordinates produced by Pascal expansion under affine substitution.
The first finer-root level then starts with phases such as \(\zeta_2^x\).
At those precisions one would expect the same binomial pattern, together with compatibility equations between adjacent precisions, for instance \(d\) copies of a \(\zeta_m^{\binom{x}{r}}\)-phase recovering the corresponding \(\zeta_{m-1}^{\binom{x}{r}}\)-phase.
Since diagonal hierarchy levels depend on both denominator and degree \cite{cui_diagonal_2017}, a full presentation would still need stable phase coordinates, finite transport and commutation tables, and a normal-form uniqueness argument.

Third, one can try to reduce the rule sets further.
The presentations used here are sufficient for normalisation and completeness, but this work does not address whether some of their remaining primitive equations can be derived from the others.
An independence analysis, or further derivations among the primitive equations, would identify which assumptions are forced by affine substitution and which can still be removed.

\section*{Acknowledgements}

This work is supported by the Plan France 2030 through the PEPR integrated project EPiQ (ANR-22-PETQ-0007) and the HQI platform (ANR-22-PNCQ-0002); by the European Union through the MSCA Staff Exchanges project QCOMICAL (Grant Agreement ID: 101182520); and by the Maison du Quantique MaQuEst.

\bibliographystyle{eptcs}
\bibliography{qupit-phase-poly}
\appendix

\section{Notation Reference}\label{app:notation-reference}

The following tables collect the notation used by the affine and phase calculi. The appendix proofs use the same conventions, so the relevant interpretation is recorded next to each symbol.

\begin{center}
\footnotesize
\renewcommand{\arraystretch}{1.04}
\setlength{\arrayrulewidth}{0.25pt}
\begin{tabularx}{\linewidth}{>{\raggedright\arraybackslash}p{0.28\linewidth}>{\raggedright\arraybackslash}X}
\hline
Symbol & Use in this work \\
\hline
\(\F_d\), \(\F_d^\times\), \(\omega\) &
The prime field, its nonzero scalars, and the primitive phase \(\omega=e^{2\pi i/d}\). Basis labels and phase exponents are computed in \(\F_d\); an exponent \(k\) denotes the scalar \(\omega^k\). \\
\hline
\(\Hilb_d\) &
The one-qupit Hilbert space \(\C^d\), with basis \(\{\ket{x}\mid x\in\F_d\}\). This is the concrete semantic space on which the affine and phase circuits act. \\
\hline
\(\GL_n(\F_d)\), \(\AGL_n(\F_d)\) &
Invertible linear maps and invertible affine maps \(x\mapsto Ax+b\), used for the reversible classical label update of an \(n\)-wire circuit. \\
\hline
\(\cat{Aff}_d\), \(\et{Aff}_d\) &
The diagrammatic PROP for affine circuits and its equational theory. This affine core is included in each phase fragment. \\
\hline
\(\cat{AffLab}_d\), \(\interp{\cdot}_{Aff}\) &
The semantic affine-label PROP and the interpretation of affine diagrams. The interpretation sends each affine circuit to the corresponding affine bijection of label sets, equivalently to a function \(x\mapsto Ax+b\). \\
\hline
\(\cat{PhAff}^{\Lin}_d\), \(\cat{PhAff}^{\Quad}_d\), \(\cat{PhAff}^{\Cube}_d\) &
The semantic phase-affine PROPs. An \(n\)-wire morphism has the form \(P_gD_q\), with \(g\in\AGL_n(\F_d)\) and \(q\) in the corresponding family \(\Lin_n\), \(\Quad_n\), or \(\Cube_n\). \\
\hline
\(D_q\), \(P_g\), \(P_gD_q\) &
The diagonal phase operator, affine permutation operator, and their phase-affine composite. Here \(D_q\ket{x}=\omega^{q(x)}\ket{x}\), \(P_g\ket{x}=\ket{g(x)}\), and \(P_gD_q\) applies the phase before relabelling the basis state. \\
\hline
\((g,q)\) &
Coordinate notation for \(P_gD_q\), used when an equality proof compares the affine update and the phase function separately. \\
\hline
\(\interp{\cdot}_{\mathcal{P}}\) &
The interpretation functor for an admissible phase fragment \(\mathcal{P}\). It assigns phase-affine operators and records both the affine update and the accumulated phase function. \\
\hline
\end{tabularx}
\captionof{table}{Affine and semantic notation in this work.}
\label{tab:notation-reference}
\end{center}

\begin{center}
\footnotesize
\renewcommand{\arraystretch}{1.04}
\setlength{\arrayrulewidth}{0.25pt}
\begin{tabularx}{\linewidth}{>{\raggedright\arraybackslash}p{0.28\linewidth}>{\raggedright\arraybackslash}X}
\hline
Symbol & Use in this work \\
\hline
\(\cat{LinPhase}_d\), \(\cat{QuadPhase}_d\),\newline \(\cat{CubicPhase}_d\) &
The diagrammatic calculi with linear, quadratic, and cubic diagonal phases. They are defined, respectively, for prime \(d\), odd prime \(d\), and prime \(d>3\). \\
\hline
\(\Lin_n\), \(\Quad_n\), \(\Cube_n\) &
Functions \(q:\F_d^n\to\F_d\) represented by polynomial functions of degree at most \(1\), \(2\), and \(3\). These are the phase-function spaces of the three fragments. \\
\hline
\(\mathcal{D}_{\Lin}\), \(\mathcal{D}_{\Quad}\), \(\mathcal{D}_{\Cube}\) &
The diagonal family sets of the phase fragments. They are the scalar, one-wire, and derived diagonal families available in the corresponding fragment. \\
\hline
\(q\circ g\) &
Precomposition of the phase function \(q\) by the affine update \(g\). This is the substitution operation behind the transport rules for moving phases through affine gates. \\
\hline
support &
The set of wires on which a placed diagonal family or affine generator acts. Transport and commutation statements quantify over compatible placements; identity wires and PROP symmetries supply the surrounding circuit context. \\
\hline
\(\NF_{\mathrm{aff}}(g)\) &
The unique affine normal form for \(g\in\AGL_n(\F_d)\), used as the affine layer inside the full normal form. \\
\hline
\(A\circ D\) &
A layered phase-affine normal form in which the diagonal phase layer \(D\) acts first, followed by the affine layer \(A\). Uniqueness separates into uniqueness of the affine map and uniqueness of the diagonal phase polynomial. \\
\hline
\(\mathcal{E}_{\mathrm{aff}}\) &
The affine generator families: translation, controlled addition, and nonzero multiplication by \(a\in\F_d^\times\). They generate the reversible affine updates used by the transport rules. \\
\hline
\end{tabularx}
\captionof{table}{Phase-fragment and normal-form notation in this work.}
\label{tab:notation-reference-phase}
\end{center}

\section{Phase-Polynomial Details}\label{app:phase-polynomial-details}

\subsection{Mixed Coordinates Under Affine Transport}\label{app:degree-fragments-transport}

\noindent\textbf{Transport through controlled addition.}
Translations and multipliers keep a one-wire phase on one wire.
Controlled addition is the affine generator that mixes variables.
Transporting an \(S\)-phase through it substitutes \(x+y\) into \(\binom{x}{2}\), and the identity \(\binom{x+y}{2}=\binom{x}{2}+\binom{y}{2}+xy\) produces the mixed coefficient \(xy\), represented by \(\gCZ\).
At degree three, \(\binom{x+y}{3}\) produces the two oriented coefficients \(\binom{x}{2}y\) and \(x\binom{y}{2}\), represented by \(\gSC\) and \(\gCS\).
The remaining square-free cubic coefficients \(xyz\) are represented by \(\gCCZ\).
Together with \cref{lem:monomial-to-binomial}, these families express the mixed monomial directions \(x^2y\) and \(xy^2\) in the cubic fragment.
Thus the derived diagonal gates are exactly the mixed coordinates forced by substitution, and \cref{prop:phase-layered-uniqueness} can compare diagonal normal forms by their finite-field coefficient vectors.

\subsection{Qubit CNOT-Dihedral Comparison}\label{app:qubit-cnot-dihedral-comparison}

\noindent\textbf{The qubit count.}
For the qubit CNOT-dihedral fragment generated by \(\{CNOT,T,X\}\), Amy et al.\ count the diagonal normal forms on \(n\) wires as \(8 \cdot 8^{\binom{n}{1}} \cdot 4^{\binom{n}{2}} \cdot 2^{\binom{n}{3}}\) \cite{amy_cnot_dihedral_2018}.
Equivalently, this number is \(2^{3+3\binom{n}{1}+2\binom{n}{2}+\binom{n}{3}} = 2^{\binom{n+3}{2}+\binom{n+1}{3}}\).
The scalar contributes \(8\) choices, and the moduli \(8\), \(4\), and \(2\) record degree-dependent coefficient periodicities for the one-, two-, and three-variable parity phases.

\medskip
\noindent\textbf{The parity basis.}
The qubit normal form uses parity arguments such as \(x_i\oplus x_j\) and \(x_i\oplus x_j\oplus x_k\).
These arguments are linear over \(\F_2\), while their coefficients are read in \(\Z_8\).
Expanding a parity as a function on bits therefore uses mixed arithmetic; for instance \(x_i\oplus x_j=x_i+x_j-2x_ix_j\) as a \(\Z_8\)-valued function.
The qubit normal form fixes this parity list together with degree-dependent coefficient ranges.

\medskip
\noindent\textbf{The odd-prime cubic basis.}
For prime \(d>3\), the cubic fragment keeps both basis labels and phase exponents in \(\F_d\).
The diagonal normal form has one scalar coefficient \(k\in\F_d\), giving the global phase \(\omega^k\), then \(3\binom{n}{1}\) one-wire coefficients for \(Z,S,T\), \(3\binom{n}{2}\) two-wire coefficients for \(CZ,CS,SC\), and \(\binom{n}{3}\) three-wire coefficients for \(CCZ\).
Hence the number of diagonal normal forms is \(d^{1+3\binom{n}{1}+3\binom{n}{2}+\binom{n}{3}} = d^{\binom{n+3}{3}}\).
This count comes from a single finite-field coefficient vector; the degree-dependent parity moduli of the qubit count disappear.
Affine transport acts on that vector by substituting affine-linear \(\F_d\)-expressions.
The derived gates \(\gCZ\), \(\gCS\), \(\gSC\), and \(\gCCZ\) supply the mixed coordinates needed for \cref{prop:phase-layered-uniqueness} to read equality of phase functions as equality of coefficient vectors.

\section{Normal-Form Proofs}\label{app:deferred-proofs}

The proofs collected here verify soundness, normalisation, and uniqueness for the statements cited in the main proof.

\subsection{Soundness and Polynomial Checks}\label{app:proofs-soundness}

\begin{proof}[Proof of \cref{lem:Aff-soundness}]
It is enough to compare the affine functions on label tuples. The multiplier equations express the group laws \(a(bx)=(ab)x\), \(1x=x\), and the compatibility of multipliers with wire permutations. The translation rules express the identities \(x+0=x\), \(x+1+(-1)=x\), \(a(x+1)=ax+a\), and the corresponding identities after controlled addition. The controlled-addition and symmetry equations are the standard matrix identities for elementary transvections and block permutations over \(\F_d\). Hence each displayed affine axiom has two sides denoting the same function \(\F_d^n\to\F_d^n\).
\end{proof}

\begin{proof}[Proof of \cref{lem:binom-identities}]
All identities are polynomial identities over \(\F_d\), with \(2\) invertible. Expanding \(\binom{x}{2}=x(x-1)/2\) gives \(\binom{x+1}{2}=\binom{x}{2}+x\) and \(\binom{x+y}{2}=\binom{x}{2}+\binom{y}{2}+xy\). Multiplication by a scalar gives the remaining unary transport identity, since \(\binom{kx}{2}=k^2\binom{x}{2}+\binom{k}{2}x\).
\end{proof}

\begin{proof}[Proof of \cref{lem:binom3-identities}]
All identities are polynomial identities over \(\F_d\), with \(6\) invertible. Expanding \(\binom{x}{3}=x(x-1)(x-2)/6\) gives the translation identity for \(\binom{x+1}{3}\) and the binary identity for \(\binom{x+y}{3}\). The scalar identity follows by expanding \(\binom{kx}{3}\) and collecting the coefficients of \(\binom{x}{3}\), \(\binom{x}{2}\), and \(x\).
\end{proof}

\begin{proof}[Proof of \cref{lem:monomial-to-binomial}]
The first identity is the definition \(\binom{x}{2}=x(x-1)/2\), rewritten as \(x^2=2\binom{x}{2}+x\). Multiplying this identity by \(y\) gives \(x^2y=2y\binom{x}{2}+xy\), and exchanging \(x\) and \(y\) gives \(xy^2=2x\binom{y}{2}+xy\).
\end{proof}

\begin{restDerivedSemantics}
\stmtDerivedSemantics
\end{restDerivedSemantics}

\begin{proof}
The affine coordinate of each defining circuit is the identity, so only the phase exponent has to be computed. The gate \(\gCZ\) contributes \(\binom{x+y}{2}-\binom{x}{2}-\binom{y}{2}=xy\). For \(\gCS\), let \(t=2^{-1}\). Using the already computed \(\gCZ\) semantics, its exponent is \(txy-tx-\binom{x}{3}-\binom{x}{2}+t\binom{x+y}{3}-t\binom{y-x}{3}\), which expands to \(x\binom{y}{2}\). The gate \(\gSC\) is the wire flip of \(\gCS\), hence contributes \(y\binom{x}{2}\). Finally, the exponent of \(\gCCZ\) is the third finite difference \(\binom{x+y+z}{3}-\binom{x+y}{3}-\binom{x+z}{3}-\binom{y+z}{3}+\binom{x}{3}+\binom{y}{3}+\binom{z}{3}=xyz\). These are exactly the interpretations stated in \cref{lem:Derived-semantics}.
\end{proof}

\begin{proof}[Proof of \cref{lem:LinPhase-soundness}]
Compare both sides as pairs \((g,q)\), where \(g\) is the affine coordinate and \(q\) is the phase polynomial. The purely affine equations are sound by \cref{lem:Aff-soundness}. The order and commutation equations for diagonal gates use only addition of phase exponents in \(\F_d\). The derived-gate equations are sound by \cref{lem:Derived-semantics}. The transport equations are instances of \(D_qP_g=P_gD_{q\circ g}\): translations, scalings, and controlled additions substitute affine expressions into the relevant binomial or monomial phase, and the required expansions are precisely \cref{lem:binom-identities,lem:binom3-identities,lem:monomial-to-binomial}. Thus every displayed phase axiom belonging to the chosen admissible fragment has equal affine coordinate and equal phase exponent on both sides.
\end{proof}

\subsection{Affine Normalisation}\label{app:proofs-affine-normalisation}

\begin{restAffineNormalisation}
\stmtAffineNormalisation
\end{restAffineNormalisation}

\begin{proof}
Use the translation-linear rules to rewrite \(A\) so that all \(\gX\) factors are collected into a single translation column at the output side of the diagram.
The remaining part is a linear circuit built from \(\gCX\), \(\gM\), and permutations.

Let \(g=\interp{A}_{Aff}\in\AGL_n(\F_d)\) and write \(g(x)=Ax+b\).
Under \(\interp{\cdot}_{Aff}\), the collected translation column interprets as \(x\mapsto x+b\), and the remaining linear circuit interprets as \(x\mapsto Ax\).
Lafont's linear normal-form theorem gives a rewrite sequence, using the Lafont linear rules, from this linear circuit to the linear normal form of \(A\).
The finite certificates in Appendix~\ref{derivation:XX-comm} show that every Lafont linear rule used in that sequence is derivable in \(\et{Aff}_d\), so the same equality is derivable in our affine presentation.
Appending the translation column determined by \(b\) gives the affine normal form \(\NF_{\mathrm{aff}}(g)\), and hence \(\et{Aff}_d\vdash A=\NF_{\mathrm{aff}}(g)\).
In Appendix~\ref{sec:cnot-prime-d} we also give an alternative route to affine completeness using group-theoretic arguments.
\end{proof}

\subsection{Phase Normal Forms and Completeness}\label{app:proofs-phase-normal-forms}

Fix an admissible phase fragment \(\mathcal{P}\).
All circuits, equations, and normal forms in this subsection are internal to it.
The family set \(\mathcal{D}_{\mathcal{P}}\) is the one defined in \cref{def:phase-diagonal-families}; lower fragments are obtained by restricting to the rows and columns for the families they contain.
The scalar family \(\gW\) is included in \(\mathcal{D}_{\mathcal{P}}\), but its interaction cases are central and are handled separately in the proofs below.

\begin{restPhaseFiniteOrder}
\stmtPhaseFiniteOrder
\end{restPhaseFiniteOrder}

\begin{center}
\begingroup
\renewcommand{\arraystretch}{1.08}
\begin{tabular}{c c l}
family & first fragment & finite-order certificate \\
\hline
\(\gW\) & \(\Lin\) & \cref{ax:W-order-d} \\
\(\gZ\) & \(\Lin\) & \cref{ax:Z-order-d} \\
\(\gS\) & \(\Quad\) & \cref{ax:S-order-d} \\
\(\gT\) & \(\Cube\) & \cref{ax:T-order-d} \\
\(\gCZ\) & \(\Quad\) & \cref{ax:CZ-d} \\
\(\gCS\) & \(\Cube\) & \cref{ax:CS-order-d} \\
\(\gSC\) & \(\Cube\) & \cref{ax:SC-order-d} \\
\(\gCCZ\) & \(\Cube\) & \cref{ax:CCZ-d}
\end{tabular}
\captionof{table}{Finite-order certificates for diagonal families. Each row gives the displayed result used to cancel \(d\) consecutive copies of the family; placed instances follow by tensoring with identities and conjugating by symmetries.}
\label{tab:diagonal-family-finite-order}
\endgroup
\end{center}

\begin{proof}[Finite-order cancellation]
Fix a family \(G\in\mathcal{D}_{\mathcal{P}}\) and a support.
The corresponding row of \cref{tab:diagonal-family-finite-order} cancels \(d\) consecutive copies of \(G\).
All rewrites stay on the same support, and placed instances use the PROP laws for identities and symmetries.
\end{proof}

\begin{restPhaseDiagonalCommutation}
\stmtPhaseDiagonalCommutation
\end{restPhaseDiagonalCommutation}

\begin{proof}[Diagonal commutation]
For non-scalar factors, the finite local cases are indexed in \cref{tab:diag-comm-lemmas}.
The table is written for the cubic fragment; lower fragments are obtained by retaining only the families in \(\mathcal{D}_{\Lin}\) or \(\mathcal{D}_{\Quad}\).
Equal-family cases are tautological up to associativity, and scalar factors commute by scalar centrality in a monoidal category.
Disjoint-support and placed instances use only identities and symmetries.
\end{proof}

\begin{restPhaseOneStepTransport}
\stmtPhaseOneStepTransport
\end{restPhaseOneStepTransport}

\begin{proof}[One-step affine transport]
The finite local cases are indexed in \cref{tab:diag-through-affine}.
This table covers the three affine generator families in \(\mathcal{E}_{\mathrm{aff}}\) against the non-scalar members of \(\mathcal{D}_{\Cube}\), with lower fragments obtained by restriction.
Each entry is precisely the circuit form of the affine substitution \(q\mapsto q\circ g_A\), and placement in a larger diagram uses identities and symmetries.
\end{proof}

\begin{restPhaseDiagonalNormalisation}
\stmtPhaseDiagonalNormalisation
\end{restPhaseDiagonalNormalisation}

\begin{proof}[Diagonal-layer normalisation]
Move scalar \(\gW\) factors to the scalar position by scalar centrality.
For every remaining adjacent pair that is out of the fixed normal-form order, use \cref{lem:phase-diagonal-commutation}.
That table is written for the non-scalar members of \(\mathcal{D}_{\Cube}\); in the linear and quadratic fragments one restricts it to the families present in \(\mathcal{D}_{\Lin}\) or \(\mathcal{D}_{\Quad}\).
This sorts the layer into scalar position, one-wire columns, two-wire stairs, and then three-wire stairs.
Finally delete every block of \(d\) equal placed generators with \cref{lem:phase-finite-order}; the remaining number of copies, between \(0\) and \(d-1\), is the exponent used in the diagonal normal form.
\end{proof}

\begin{restPhaseAffineLayerTransport}
\stmtPhaseAffineLayerTransport
\end{restPhaseAffineLayerTransport}

\begin{proof}[Affine-layer transport]
First take \(A\) to be one placed affine generator.
Expand \(D\) into repeated placed generators, omitting zero exponents.
Scalar copies commute with \(A\), and each non-scalar copy moves across \(A\) by \cref{lem:phase-one-step-transport}.
After the finite list of copies is transported, the right-hand side has the form \(A\circ D'\) with \(D'\) diagonal.
Context, disjoint-support, and structural instances use only the PROP laws, and \cref{lem:phase-diagonal-normalisation} puts \(D'\) in normal form.

Write the affine normal form \(A\) as a finite word in structural symmetries and factors of the form \(\sigma^{-1}\circ(E\otimes\id_{n-r})\circ\sigma\), where \(E:r\to r\) lies in \(\mathcal{E}_{\mathrm{aff}}\).
Induct along this word from the output side to the input side.
For a non-structural affine factor, use the placed-generator case; for a structural factor, relabel supports by the PROP symmetry laws and normalise if the displayed order changes.
After the final factor, all diagonal factors are on the input side, giving \(D\circ A=A\circ D'\) with \(D'\) in diagonal normal form.
The phase function changes by \(q\mapsto q\circ g_A\) because the one-step substitutions compose along the affine word.
\end{proof}

\begin{restPhaseLayeredFactorisation}
\stmtPhaseLayeredFactorisation
\end{restPhaseLayeredFactorisation}

\begin{proof}[Factorisation]
Each generator already has a layered form.
Affine and structural generators have the identity diagonal layer; diagonal generators have the identity affine layer and a one-factor diagonal layer.
Induct over a circuit word for \(C\).
The empty word gives the identity layered form.
For the induction step, suppose the current prefix has layered form \(A\circ D\), and write the next generator in layered form \(A'\circ D'\).
By associativity, the composite may be written as \(A\circ(D\circ A')\circ D'\).
The only obstruction to a layered form is the middle subcircuit \(D\circ A'\), where a diagonal layer sits on the output side of an affine layer.
\Cref{lem:phase-affine-layer-transport} rewrites this subcircuit as \(A'\circ D_1\), with \(D_1\) again diagonal normal.
The affine product \(A\circ A'\) is then put in affine normal form by \cref{lem:affine-normalisation}, and the diagonal product \(D_1\circ D'\) is put in diagonal normal form by \cref{lem:phase-diagonal-normalisation}.
\end{proof}

\begin{restPhaseLayeredUniqueness}
\stmtPhaseLayeredUniqueness
\end{restPhaseLayeredUniqueness}

\begin{proof}[Layered uniqueness]
Write the diagonal interpretations as \(D_{q_D}\) and \(D_{q_{D'}}\) in the corresponding semantic phase-affine PROP.
Then \(\interp{A\circ D}=P_{\interp{A}_{Aff}}D_{q_D}\) and \(\interp{A'\circ D'}=P_{\interp{A'}_{Aff}}D_{q_{D'}}\).
The coordinate uniqueness from \cref{rem:phase-affine-coordinates} gives \(\interp{A}_{Aff}=\interp{A'}_{Aff}\) and \(q_D=q_{D'}\).
The affine generators and equations embed unchanged into each phase fragment, so \cref{lem:normal-form-unique} gives \(A=A'\).
For the diagonal layer, the normal form records the coefficient list of \cref{lem:phase-coordinate-uniqueness}: \(1,x_i\) in the linear fragment; additionally \(\binom{x_i}{2}\) and \(x_ix_j\) for \(i<j\) in the quadratic fragment; and additionally \(\binom{x_i}{3}\), \(x_i\binom{x_j}{2}\) for \(i\neq j\), and \(x_ix_jx_k\) for \(i<j<k\) in the cubic fragment.
Since \(q_D=q_{D'}\), equality of phase functions forces equality of the scalar exponent, all one-wire column exponents, and all two- and three-wire stair exponents.
Thus \(D=D'\).
\end{proof}

\begin{restPhaseNormalFormComplete}
\stmtPhaseNormalFormComplete
\end{restPhaseNormalFormComplete}

\begin{proof}[Completeness]
Use \cref{lem:phase-layered-factorisation} to rewrite \(C\) and \(C'\) as \(A\circ D\) and \(A'\circ D'\).
By soundness, the semantic hypothesis gives equal interpretations of these two normal forms.
\Cref{prop:phase-layered-uniqueness} gives \(A=A'\) and \(D=D'\).
The two original circuits therefore rewrite to the same layered normal form, hence \(\vdash C=C'\) in the fixed fragment.
\end{proof}

\section{Derivation Certificates and Proof Principles}\label{sec:derivations}

This appendix records the finite certificates used by the normal-form proofs: affine structural rewrites, diagonal commutation, and one-step diagonal transport through a placed affine generator.
Surrounding identities and symmetries are supplied by PROP laws; once a principle is certified, later derivations cite its label instead of repeating lower-level axiom instances.

\subsection{Affine Structural Certificates}\label{derivation:XX-comm}

These certificates supply the affine structural rewrites used in affine normalisation, including the comparison with Lafont's linear circuit rules.

\begin{figure}[htb!]
  \centering
    \fbox{%
      \begin{minipage}{\dimexpr\linewidth-2\fboxsep-2\fboxrule\relax}
        \begin{minipage}[t]{0.33\linewidth}
        \begin{equation}\label{lafont-axiom-multxy}
            \input{axioms/AxiomMultXMultY.tikz} = \input{axioms/AxiomMultXY.tikz}
          \end{equation}
          \begin{equation}\label{lafont-axiom-mult1}
            \input{axioms/AxiomMult1.tikz} = \input{axioms/AxiomPNullB.tikz}
          \end{equation}
          \begin{equation}\label{lafont-swapswap}
            \input{derivations/swapswap.tikz} = \input{two-lines.tikz}
          \end{equation}
          \begin{equation}\label{lafont-8}
            \input{axioms/d-cnot.tikz} = \input{two-lines.tikz}
          \end{equation}
        \end{minipage}\hfill
        \begin{minipage}[t]{0.65\linewidth}
          \begin{equation}\label{lafont-3}
            \input{derivations/lafont-3-left.tikz} = \input{derivations/lafont-3-right.tikz}
          \end{equation}
          \begin{equation}\label{lafont-4}
            \input{derivations/lafont-4-left.tikz} = \input{derivations/lafont-4-right.tikz}
          \end{equation}
          \begin{equation}\label{lafont-5}
            \input{derivations/lafont-5left.tikz} = \input{derivations/lafont-5right.tikz}
          \end{equation}
          \begin{equation}\label{lafont-6}
            \scalebox{0.9}{\input{derivations/lafont-6-left.tikz}} = \scalebox{0.9}{\input{derivations/lafont-6-right.tikz}}
          \end{equation}
          \begin{equation}\label{lafont-7}
            \scalebox{0.85}{\input{derivations/lafont-7-left.tikz}} = \scalebox{0.85}{\input{derivations/lafont-7-right.tikz}}
          \end{equation}
        \end{minipage}
      \end{minipage}%
    }
  \caption{The ruleset \cite{lafont_boolean_circuits_2003} for circuits of linear transformations over \(\GL(\mathrm{\F_d})\). We include \cref{lafont-8} because in Lafont's presentation the labels are in the field, which in $\F_d$ means taken modulo $d$.}
  \label{Lafont-pres}
\end{figure}

\begin{lemma}
    For every rule $C = C'$ in \cref{Lafont-pres}, $\et{Aff}_d \vdash C = C'$
\end{lemma}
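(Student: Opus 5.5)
We treat the rules of \cref{Lafont-pres} one at a time, deriving each from the axioms of \cref{fig:Aff_d_rules} by a short finite rewrite. We rely throughout on the fact that, by \cref{def:presented-prop}, any axiom may be used in context and transported along wire permutations.

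Several rules need no work. \Cref{lafont-axiom-multxy,lafont-axiom-mult1} are literally \cref{axiom-multxy,axiom-mult1}. \Cref{lafont-swapswap} is an instance of the symmetric monoidal structure, since \(\sigma_{1,1}\circ\sigma_{1,1}=\id\) holds in any PROP; should the swap instead be read through its decomposition in \cref{axiom-swap-fixed}, we would expand both copies and cancel adjacent controlled additions.

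\Cref{lafont-8}, which says that \(d\) copies of \(\gCX\) give the identity, is the first genuine derivation. The plan is to conjugate by \(\gX\) on the control wire. \Cref{axiom-XCNOTtop} states that \(\gX\) on the control passes through \(\gCX\) at the cost of an \(\gX\) on the target. Iterating this \(d\) times, and using the cyclic order of \(\gX\) (derivable from \cref{axiom-I} or \cref{axiom-XM-twisted} with \(a=1\)), gives \(\gCX^{d}\) commuting with a translation. That alone does not kill it, though. We would therefore instead use \cref{axiom-multcnot}, which relates scaling the control by \(a\) to the power \(\gCX^{a}\). Taking \(a=1\) and \(a=d+1\equiv 1\), and using \cref{axiom-mult1}, should yield \(\gCX^{d+1}=\gCX\). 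Cancelling with the inverse \(\gCX^{d-1}\) that this provides then gives \(\gCX^d=\id\). The inverse exists because \cref{axiom-multcnot} with \(a=-1\) produces a circuit interpreting as \((x,y)\mapsto(x,y-x)\).

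The remaining rules \cref{lafont-3,lafont-4,lafont-5,lafont-6,lafont-7} concern multipliers sliding through controlled additions, commutation of controlled additions with shared control or shared target, and Lafont's commutator relation. For the multiplier rules we combine \cref{axiom-multcnot,axiom-multcnot2-twisted}, together with \cref{axiom-multxy} to invert multipliers. For the commutation and commutator rules the plan is to use \cref{axiom-swap-fixed} to express swaps via controlled additions and multipliers, so that each three-wire identity reduces to a rearrangement among controlled additions on two wires. If a direct derivation stalls, we fall back on the group-theoretic route of Appendix~\ref{sec:cnot-prime-d}. Both sides of each rule are elements of \(\AGL_n(\F_d)\) with equal interpretation, so completeness obtained through that independent presentation would suffice.

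The main obstacle is the commutator rule \cref{lafont-7}. It involves three wires, and \cref{fig:Aff_d_rules} contains no three-wire axiom apart from the structural \cref{axiom-I}. To handle it, we would try to manufacture the relevant three-wire configuration from \cref{axiom-I}, interpreted as transporting an \(\gX\) between wires, combined with \cref{axiom-XCNOTtop}.
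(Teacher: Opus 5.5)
There is a genuine gap: the two non-trivial rules, \eqref{lafont-8} and \eqref{lafont-7}, are not derived, and the fallback you propose is circular.

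\textbf{The rule \eqref{lafont-8} (\(CX^d=\mathrm{id}\)).} The multiplier axioms are families indexed by \(a\in\F_d^\times\), so ``\(a=d+1\)'' is not a new instance. It is literally the instance \(a=1\). Its power label is read through a representative (\cref{rem:exp-convention}), so it yields \(CX^1\), never \(CX^{d+1}\). The cancellation step is also circular. The notation \(CX^{-1}\) abbreviates \(CX^{d-1}\), and \(CX\circ CX^{d-1}=\mathrm{id}\) is exactly the rule you are trying to prove. That some circuit interprets as \((x,y)\mapsto(x,y-x)\) gives no syntactic cancellation in a presented PROP, where only symmetries are invertible a priori.

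The missing idea is the one you already spotted for translations: the twisted axioms encode finite order at the unit parameter. The paper proves \cref{axiom-d-cnot} in three steps:
\begin{enumerate}
\item insert \(M_1\) using \eqref{axiom-mult1};
\item apply \eqref{axiom-multcnot2-twisted} at parameter \(1\);
\item delete \(M_1\) again.
\end{enumerate}
This exactly parallels the derivation of \cref{axiom-Xd} from \eqref{axiom-XM-twisted}. The gap propagates: the paper's derivations of \eqref{lafont-4} (via \cref{axiom-multcnot2}) and \eqref{lafont-6} (via \cref{axiom-swap} together with \cref{axiom-d-cnot}) both rely on this finite-order fact. So your multiplier and swap cases are not established either.

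\textbf{The rule \eqref{lafont-7}.} You give no derivation, and the plan rests on a misreading of \eqref{axiom-I}. That axiom is not a translation-transport rule. It is the purely linear, unit-label base case of the three-wire controlled-addition relation, which is precisely the three-wire axiom you said was missing. The paper lifts it to arbitrary labels \(s,k\in\F_d^\times\) in \cref{axiom-X-change-wires}. It conjugates by multipliers using \eqref{axiom-multxy}, \eqref{axiom-multcnot} and \cref{axiom-multcnot2}, and cleans up with \cref{axiom-d-cnot}. Then \eqref{lafont-7} follows by PROP laws alone. Bringing in \eqref{axiom-XCNOTtop} is beside the point, since this is a rule of Lafont's linear presentation.

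\textbf{The fallback to Appendix~\ref{sec:cnot-prime-d}.} This cannot rescue the argument, for two reasons.
\begin{itemize}
\item To use that group presentation diagrammatically, you must first derive its relations in \(\et{Aff}_d\). Its relations \eqref{ax:add} and \eqref{ax:stein1} are exactly \eqref{lafont-8} and the commutator rule you could not derive.
\item \cref{thm:Aff-sound-complete} is obtained through \cref{lem:affine-normalisation}, which invokes the present lemma. So ``equal interpretation suffices'' presupposes the completeness this lemma is meant to supply.
\end{itemize}
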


\begin{proof}
Let \(C=C'\) be any equation in \cref{Lafont-pres}. We show \(\et{Aff}_d\vdash C=C'\) by a case analysis on the displayed rule.
\begin{enumerate}
\item For the rules labelled \eqref{lafont-axiom-multxy} and \eqref{lafont-axiom-mult1}, the statement is immediate: they appear verbatim among the axioms of \(\et{Aff}_d\), namely \cref{axiom-multxy,axiom-mult1}.
\item The rule labelled \eqref{lafont-3} is obtained from \cref{axiom-multcnot} by closure of derivability under tensoring with identities, composition, and symmetries in a PROP.
\item The rule labelled \eqref{lafont-4} is derived in the same way from \cref{axiom-multcnot2}.
\item The rule labelled \eqref{lafont-swapswap} is an instance of the symmetric monoidal axioms, since \(\sigma_{1,1}\circ\sigma_{1,1}=\id_2\).
\item The rule labelled \eqref{lafont-5} holds by expanding \(\gCXk\) as the \(k\)-fold composite of \(\gCX\) and using associativity of composition.
\item The rule labelled \eqref{lafont-6} is proved using \cref{axiom-swap} and \cref{axiom-d-cnot}.
\item The rule labelled \eqref{lafont-7} follows from \cref{axiom-X-change-wires} together with the PROP laws.
\item The rule labelled \eqref{lafont-8} is proved in \cref{axiom-d-cnot}.
\end{enumerate}
This covers all equations in \cref{Lafont-pres}, hence every rule \(C=C'\) in Lafont's presentation is derivable in \(\et{Aff}_d\).
\end{proof}

\begin{lemma}\label{axiom-Xd}
    $\mathsf{Aff}_d$ $\vdash$ 
    \input{axioms/AxiomXd.tikz} = \input{pieces/I.tikz}
\end{lemma}

\begin{proof}
    \begin{gather*}
        \input{axioms/AxiomXd.tikz}
        = \input{axioms/AxiomXd-00.tikz}
        \eqeqref{axiom-mult1} \input{axioms/AxiomXd-01.tikz}
        \eqeqref{axiom-XM-twisted} \input{axioms/AxiomMult1.tikz}
        \eqeqref{axiom-mult1} \input{pieces/I.tikz}
    \end{gather*}
\end{proof}

\begin{lemma}\label{axiom-XM}
    $\forall x \in \mathbb{F}_{d}^{\times}$,
    $\mathsf{Aff}_d$ $\vdash$ 
    \input{axioms/AxiomXMx.tikz} = \input{axioms/AxiomMxX.tikz}
\end{lemma}

\begin{proof}
    \begin{gather*}
        \input{axioms/AxiomXMx.tikz}
        \eqeqref{axiom-Xd} \input{axioms/AxiomXMx-00.tikz}
        = \input{axioms/AxiomXMx-01.tikz}
        \eqeqref{axiom-XM-twisted} \input{axioms/AxiomMxX.tikz}
    \end{gather*}
\end{proof}

\begin{lemma}\label{axiom-d-cnot}
    $\mathsf{Aff}_d$ $\vdash$ 
    \input{axioms/d-cnot.tikz} = \input{two-lines.tikz}
\end{lemma}

\begin{proof}
    \begin{gather*}
        \input{axioms/d-cnot.tikz}
        = \input{axioms/d-cnot-00.tikz}
        \eqeqref{axiom-mult1} \input{axioms/d-cnot-01.tikz}
        \eqeqref{axiom-multcnot2-twisted}\input{axioms/d-cnot-02.tikz}
        \eqeqref{axiom-mult1} \input{two-lines.tikz}
    \end{gather*}
\end{proof}

\begin{lemma}\label{axiom-multcnot2}
    $\forall x \in \mathbb{F}_{d}^{\times}$,
    $\mathsf{Aff}_d$ $\vdash$ 
    \input{axioms/AxiomMultCNOT2A.tikz} = \input{axioms/AxiomMultCNOT2B.tikz}
\end{lemma}

\begin{proof}
    \begin{gather*}
        \input{axioms/AxiomMultCNOT2A.tikz}
        \eqeqref{axiom-multcnot2-twisted} \input{axioms/AxiomMultCNOT2A-00.tikz}
        = \input{axioms/AxiomMultCNOT2A-01.tikz}
        \eqeqref{axiom-d-cnot} \input{axioms/AxiomMultCNOT2B.tikz}
    \end{gather*}
\end{proof}

\begin{lemma}\label{axiom-XCNOT}
    $\mathsf{Aff}_d$ $\vdash$ 
    \input{axioms/AxiomXCX.tikz} = \input{axioms/AxiomCXX.tikz}
\end{lemma}

\begin{proof}
    \begin{gather*}
        \input{axioms/AxiomXCX.tikz}
        \eqdeuxeqref{axiom-Xd}{axiom-d-cnot} \input{axioms/AxiomXCX-00.tikz}
        \eqeqref{axiom-XCNOTtop} \input{axioms/AxiomXCX-01.tikz}\\
        \eqdeuxeqref{axiom-mult1}{axiom-multxy} \input{axioms/AxiomXCX-02.tikz}
        \eqeqref{axiom-multcnot}\input{axioms/AxiomXCX-03.tikz}\\
        \eqeqref{axiom-XM} \input{axioms/AxiomXCX-04.tikz}
        \eqeqref{axiom-XCNOTtop} \input{axioms/AxiomXCX-05.tikz}\\
        \eqeqref{axiom-XM} \input{axioms/AxiomXCX-06.tikz}
        \eqeqref{axiom-multcnot} \input{axioms/AxiomXCX-07.tikz}\\
        \eqdeuxeqref{axiom-mult1}{axiom-multxy}  \input{axioms/AxiomXCX-08.tikz}
        \eqdeuxeqref{axiom-Xd}{axiom-d-cnot} \input{axioms/AxiomCXX.tikz}
    \end{gather*}
\end{proof}      

\begin{lemma}\label{axiom-X-change-wires}
    $\forall s,k \in \mathbb{F}_{d}^{\times}$,
    $\mathsf{Aff}_d$ $\vdash$ 
    \input{axioms/AxiomXChangeWiresA.tikz} = \input{axioms/AxiomXChangeWiresB.tikz}
\end{lemma}

\begin{proof}
    \begin{gather*}
        \input{axioms/AxiomXChangeWiresA.tikz}
        \eqtroiseqref{axiom-multxy}{axiom-multcnot}{axiom-multcnot2}  \input{derivations/axiom-X-change-wires-00.tikz}\\
        \eqeqref{axiom-d-cnot} \input{derivations/axiom-X-change-wires-01.tikz}
        \eqeqref{axiom-I} \input{derivations/axiom-X-change-wires-02.tikz}\\
        \eqtroiseqref{axiom-multxy}{axiom-multcnot}{axiom-multcnot2}  \input{axioms/AxiomXChangeWiresB.tikz}
    \end{gather*}
\end{proof}

\begin{lemma}\label{lemma-XX-comm}
    $\mathsf{Aff}_d$ $\vdash$ 
    \input{axioms/AxiomXXCommA.tikz} = \input{axioms/AxiomXXCommB.tikz}
\end{lemma}
						
\begin{proof}
    \begin{gather*}
        \input{axioms/AxiomXXCommA.tikz} 
        \eqeqref{axiom-d-cnot} \input{derivations/axiom-XX-comm-00.tikz}
        \eqeqref{axiom-X-change-wires} \input{derivations/axiom-XX-comm-01.tikz}\\
        \eqeqref{axiom-d-cnot} \input{derivations/axiom-XX-comm-02.tikz}
        \eqtroiseqref{axiom-multcnot2}{axiom-mult1}{axiom-multxy} \input{derivations/axiom-XX-comm-03.tikz}\\
        \eqeqref{axiom-X-change-wires} \input{derivations/axiom-XX-comm-04.tikz}
        \eqtroiseqref{axiom-multcnot2}{axiom-mult1}{axiom-multxy}  \input{derivations/axiom-XX-comm-05.tikz}
        \eqeqref{axiom-d-cnot} \input{axioms/AxiomXXCommB.tikz} 
    \end{gather*}
\end{proof}

\begin{lemma}\label{lemma-XX2-comm}
    $\mathsf{Aff}_d$ $\vdash$ 
    \input{axioms/AxiomXX2CommA.tikz} = \input{axioms/AxiomXX2CommB.tikz}
\end{lemma}

\begin{proof}
    \begin{gather*}
        \input{axioms/AxiomXX2CommA.tikz} 
        \eqeqref{axiom-X-change-wires} \input{derivations/axiom-XX2-comm-00.tikz}
        \eqeqref{axiom-d-cnot} \input{derivations/axiom-XX2-comm-01.tikz}\\
        \eqtroiseqref{axiom-multcnot2}{axiom-mult1}{axiom-multxy}  \input{derivations/axiom-XX2-comm-02.tikz}
        \eqeqref{axiom-X-change-wires} \input{derivations/axiom-XX2-comm-03.tikz}\\
        \eqdeuxeqref{axiom-mult1}{axiom-multxy}  \input{derivations/axiom-XX2-comm-04.tikz}
        \eqdeuxeqref{lemma-XX-comm}{axiom-d-cnot}\input{axioms/AxiomXX2CommB.tikz} 
    \end{gather*}
\end{proof}

\begin{lemma}\label{axiom-swap}
    $\forall k \in \mathbb{F}_{d}^{\times}$,
    $\mathsf{Aff}_d$ $\vdash$
    $\input{axioms/AxiomSwap.tikz} = \input{axioms/AxiomSwapDecomp.tikz}$
\end{lemma}

\begin{proof}
    \begin{gather*}
        \input{axioms/AxiomSwap.tikz} = \input{derivations/axiom-swap-00.tikz}
        \eqeqref{axiom-swap-fixed} \input{derivations/axiom-swap-01.tikz}\\
        \eqdeuxeqref{axiom-multcnot}{axiom-multcnot2}  \input{derivations/axiom-swap-02.tikz}
        \eqeqref{axiom-multxy} \input{axioms/AxiomSwapDecomp.tikz}
    \end{gather*}
\end{proof}

\begin{lemma}\label{axiom-torus-aux}
    $\forall a \in \mathbb{F}_{d}^{\times}$,
    $\mathsf{Aff}_d$ $\vdash$
    $\input{derivations/axiom-torus-aux-00.tikz} = \input{derivations/axiom-torus-aux-06.tikz}$
\end{lemma}

\begin{proof}
    \begin{gather*}
        \input{derivations/axiom-torus-aux-00.tikz}
        \eqdeuxeqref{axiom-multxy}{axiom-mult1} \input{derivations/axiom-torus-aux-01.tikz}\\
        \eqeqref{axiom-swap} \input{derivations/axiom-torus-aux-02.tikz}
        \eqdeuxeqref{axiom-multxy}{axiom-mult1}  \input{derivations/axiom-torus-aux-03.tikz}\\
        \eqeqref{axiom-swap} \input{derivations/axiom-torus-aux-04.tikz}
        \eqeqref{axiom-multxy} \input{derivations/axiom-torus-aux-05.tikz}
        = \input{derivations/axiom-torus-aux-06.tikz}
    \end{gather*}
\end{proof}

\begin{lemma}\label{axiom-torus-derived}
    $\forall a,b \in \mathbb{F}_{d}^{\times}$,
    $\mathsf{Aff}_d$ $\vdash$
    \begin{gather*}
    \input{derivations/axiom-torus-drv-00.tikz}
    = \input{derivations/axiom-torus-drv-03.tikz}
    \end{gather*}
\end{lemma}

\begin{proof}
    \begin{gather*}
        \input{derivations/axiom-torus-drv-00.tikz}\\
        \eqeqref{axiom-torus-aux} \input{derivations/axiom-torus-drv-01.tikz}
        \eqeqref{axiom-multxy} \input{derivations/axiom-torus-drv-02.tikz}
        \eqeqref{axiom-torus-aux}  \input{derivations/axiom-torus-drv-03.tikz}
    \end{gather*}
\end{proof}

\begin{lemma}\label{axiom-weyl-derived}
    $\mathsf{Aff}_d$ $\vdash$
    $\input{derivations/axiom-weyl-drv-00.tikz} = \input{derivations/axiom-weyl-drv-04.tikz}$
\end{lemma}

\begin{proof}
    \begin{gather*}
        \input{derivations/axiom-weyl-drv-00.tikz}
        \eqdeuxeqref{axiom-multxy}{axiom-mult1} \input{derivations/axiom-weyl-drv-01.tikz}\\
        \eqeqref{axiom-torus-aux} \input{derivations/axiom-weyl-drv-02.tikz}
        =  \input{derivations/axiom-weyl-drv-03.tikz}
        \eqtroiseqref{axiom-multxy}{axiom-mult1}{axiom-multcnot2}  \input{derivations/axiom-weyl-drv-04.tikz}
    \end{gather*}
\end{proof}

The phase certificates follow dependency order.
They first establish the linear and quadratic transport package, then the cubic diagonal-commutation package, and finally the remaining cubic one-step transport cases; the finite tables at the end index the local cases used by \cref{lem:phase-diagonal-commutation,lem:phase-one-step-transport}.

\subsection{Linear Transport Seeds}\label{derivation:linear-transport-seeds}

\begin{lemma}\label{ax:ZC-control}
    $\et{LinPhase}_d$ $\vdash$ 
    \input{./axioms/ZCNOTbot.tikz} = \input{./axioms/CNOTZbot.tikz}
\end{lemma}

\begin{proof}
    \begingroup\small
    \begin{gather*}
        \input{./axioms/ZCNOTbot.tikz} 
        \eqeqref{ax:Z-order-d} \input{./axioms/ZCNOTbot-00.tikz} 
        \eqeqref{ax:ZC-target} \input{./axioms/ZCNOTbot-01.tikz} \\
        \eqdeuxeqref{ax:Z-order-d}{axiom-d-cnot} \input{./axioms/ZCNOTbot-02.tikz}
        \eqdeuxeqref{axiom-mult1}{axiom-multxy} \input{./axioms/ZCNOTbot-03.tikz} \\
        \eqdeuxeqref{ax:ZM}{ax:Z-order-d} \input{./axioms/ZCNOTbot-04.tikz}
        \eqeqref{axiom-multcnot} \input{./axioms/ZCNOTbot-05.tikz}\\
        \eqeqref{ax:ZC-target} \input{./axioms/ZCNOTbot-06.tikz} 
        \eqtroiseqref{axiom-multcnot}{axiom-mult1}{axiom-multxy} \input{./axioms/ZCNOTbot-07.tikz} \\
        \eqdeuxeqref{ax:Z-order-d}{axiom-d-cnot} \input{./axioms/CNOTZbot.tikz}
    \end{gather*}
\end{proof}

\subsection{Quadratic Commutation and Transport Seeds}\label{derivation:quadratic-seeds}


\begin{lemma}\label{ax:SC-target}
    QuadPhase$_d$ $\vdash$ 
    \input{./axioms/SCNOTtop.tikz} = \input{./axioms/CNOTStop.tikz}
\end{lemma}

\begin{proof}
    \begin{gather*}
        \input{./axioms/SCNOTtop.tikz} 
        = \input{./axioms/SCNOTtop-00.tikz} 
        \eqeqref{ax:S-order-d} \input{./axioms/SCNOTtop-01.tikz}
        \eqeqref{axiom-d-cnot} \input{./axioms/CNOTStop.tikz}
    \end{gather*}
\end{proof}

\begin{lemma}\label{ax:SCZ}
    QuadPhase$_d$ $\vdash$ 
    \input{./axioms/SCZ.tikz} = \input{./axioms/CZS.tikz}
\end{lemma}

\begin{proof}
    \begin{gather*}
        \input{./axioms/SCZ.tikz}
        = \input{./axioms/SCZ-00.tikz} 
        \eqeqref{ax:SC-control} \input{./axioms/SCZ-01.tikz} 
        = \input{./axioms/CZS.tikz}
    \end{gather*}
\end{proof}

\begin{lemma}\label{ax:ZCZ}
    QuadPhase$_d$ $\vdash$ 
    \input{./axioms/CZZ.tikz} = \input{./axioms/ZCZ.tikz}
\end{lemma}

\begin{proof}
    \begin{gather*}
        \input{./axioms/CZZ.tikz}
        = \input{./axioms/ZCZ-00.tikz} 
        \eqeqref{ax:ZC-control} \input{./axioms/ZCZ-07.tikz} 
        \eqeqref{ax:SZ} \input{./axioms/ZCZ.tikz}
    \end{gather*}
\end{proof}

\begin{lemma}\label{ax:PhaseGadgetS}
    QuadPhase$_d$ $\vdash$ 
    \input{./axioms/PhaseGadgetA.tikz} = \input{./axioms/PhaseGadgetB.tikz}
\end{lemma}

\begin{proof}
    \begin{gather*}
        \input{./axioms/PhaseGadgetA.tikz}
        \eqeqref{axiom-d-cnot} \input{./axioms/PhaseGadgetA-00.tikz} \\
        \eqdeuxeqref{axiom-mult1}{axiom-multxy} \input{./axioms/PhaseGadgetA-01.tikz} \\
        \eqeqref{axiom-swap} \input{./axioms/PhaseGadgetA-02.tikz} 
        = \input{./axioms/PhaseGadgetA-03.tikz} \\
        \eqdeuxeqref{axiom-mult1}{axiom-multxy} \input{./axioms/PhaseGadgetA-04.tikz} 
        \eqeqref{ax:SC-control} \input{./axioms/PhaseGadgetA-05.tikz} 
        \eqeqref{axiom-d-cnot} \input{./axioms/PhaseGadgetB.tikz}
    \end{gather*}
\end{proof}

\begin{lemma}\label{ax:CZ-SWAP}
    QuadPhase$_d$ $\vdash$ 
    \input{./axioms/CZSWAP.tikz} = \input{./axioms/SWAPCZ.tikz}
\end{lemma}

\begin{proof}
    \begin{gather*}
        \input{./axioms/SWAPCZ.tikz}
        = \input{./axioms/CZSWAP-00.tikz}
        = \input{./axioms/CZSWAP-01.tikz}\\
        \eqeqref{ax:PhaseGadgetS} \input{./axioms/CZSWAP-02.tikz}
        = \input{./axioms/CZSWAP.tikz}
    \end{gather*}
\end{proof}

\begin{lemma}\label{ax:CZsum}
QuadPhase$_d$ $\vdash$ 

          \input{./axioms/CZsumL.tikz} = \input{./axioms/CZsumR.tikz}
\end{lemma}

\begin{proof}
    \begin{gather*}
        \input{./axioms/CZsumL.tikz} 
        \eqdeuxeqref{ax:SC-control}{ax:S-order-d} \input{./axioms/CZsumL-00.tikz}\\
        \eqeqref{ax:PhaseGadgetS} \input{./axioms/CZsumL-01.tikz}
        \eqdeuxeqref{ax:SC-control}{ax:S-order-d}\input{./axioms/CZsumL-02.tikz}\\
        \eqeqref{ax:PhaseGadgetS}\input{./axioms/CZsumL-03.tikz}
        \eqeqref{axiom-d-cnot} \input{./axioms/CZsumR.tikz}
    \end{gather*}
\end{proof}

\begin{corollary}\label{ax:CZexp}
    QuadPhase$_d$ $\vdash$ 
          \input{./pieces/CZk.tikz} = \input{./axioms/CZkexp.tikz}
\end{corollary}

\begin{lemma}\label{ax:CZ-d}
    QuadPhase$_d$ $\vdash$ 
    \input{./pieces/CZd.tikz} = \input{two-lines.tikz}
\end{lemma}

\begin{proof}
    \begin{gather*}
        \input{./pieces/CZd.tikz}
        \eqeqref{ax:CZexp}  \input{./axioms/CZd-03.tikz}
        \eqeqref{ax:S-order-d}  \input{./axioms/CZd-02.tikz}
        \eqeqref{axiom-d-cnot} \input{two-lines.tikz}
    \end{gather*}
\end{proof}

\medskip\noindent\textbf{Quadratic affine-transport seeds.}

\begin{lemma}\label{ax:CZ-X}
    QuadPhase$_d$ $\vdash$ 
    \input{./axioms/CZX.tikz} = \input{./axioms/XCZ.tikz}
\end{lemma}

\begin{proof}
    \begin{gather*}
        \input{./axioms/CZX.tikz}
        = \input{./axioms/CZX-00.tikz}\\
        \eqeqref{axiom-XCNOT} \input{./axioms/CZX-01.tikz}
        \eqeqref{ax:SX} \input{./axioms/CZX-02.tikz}\\
        \eqeqref{axiom-XCNOT} \input{./axioms/CZX-03.tikz}
        \eqdeuxeqref{ax:SX}{ax:SZ} \input{./axioms/CZX-04.tikz}\\
        \eqdeuxeqref{axiom-mult1}{axiom-multxy} \input{./axioms/CZX-05.tikz}
        \eqeqref{axiom-multcnot} \input{./axioms/CZX-06.tikz}\\
        \eqeqref{ax:ZC-target}  \input{./axioms/CZX-07.tikz}
        \eqeqref{ax:Z-order-d} \input{./axioms/CZX-08.tikz}\\
        \eqdeuxeqref{ax:ZM}{ax:Z-order-d} \input{./axioms/CZX-09.tikz}
        \eqeqref{axiom-multcnot} \input{./axioms/CZX-10.tikz}\\
        \eqdeuxeqref{axiom-mult1}{axiom-multxy} \input{./axioms/CZX-11.tikz}
        \eqeqref{ax:ZC-control} \input{./axioms/CZX-12.tikz}\\
        \eqeqref{ax:SZ} \input{./axioms/CZX-13.tikz}
        = \input{./axioms/XCZ.tikz}
    \end{gather*}
\end{proof}

\begin{lemma}\label{ax:CZ-CX}
    QuadPhase$_d$ $\vdash$ 
          \input{./axioms/CZCNOT.tikz} = \input{./axioms/CNOTCZ.tikz}
\end{lemma}

\begin{proof}
    \begin{gather*}
        \input{./axioms/CZCNOT.tikz} 
        = \input{./axioms/CZCNOT-00.tikz}\\
        \eqdeuxeqref{ax:SZ}{ax:S-order-d} \input{./axioms/CZCNOT-01.tikz}
        \eqdeuxeqref{axiom-mult1}{axiom-multxy} \input{./axioms/CZCNOT-02.tikz}\\
        \eqeqref{ax:SM} \input{./axioms/CZCNOT-03.tikz}
        \eqeqref{axiom-multcnot}\input{./axioms/CZCNOT-04.tikz}\\
        \eqeqref{ax:SC-control} \input{./axioms/CZCNOT-05.tikz}
        \eqeqref{ax:PhaseGadgetS}\input{./axioms/CZCNOT-06.tikz}\\
        \eqeqref{ax:SC-control} \input{./axioms/CZCNOT-07.tikz}
        \eqeqref{ax:PhaseGadgetS} \input{./axioms/CZCNOT-08.tikz}\\
        \eqeqref{ax:CZexp}\input{./axioms/CZCNOT-09.tikz}
        \eqeqref{ax:CZ-neg} \input{./axioms/CNOTCZ.tikz}
    \end{gather*}
\end{proof}

\begin{lemma}\label{ax:CZ-CX-power}
    For every \(r\in\F_d\), QuadPhase$_d$ $\vdash$
    \input{./axioms/CZCNOTpowL.tikz} = \input{./axioms/CZCNOTpowR.tikz}
\end{lemma}

\begin{proof}
The case \(r=0\) is the unit law.  For the step \(r\mapsto r+1\), expand
the last controlled phase and use the induction hypothesis:
\begingroup
\newcommand{\czpowfig}[1]{\scalebox{1}{\input{#1.tikz}}}
\begin{gather*}
    \czpowfig{./axioms/CZCNOTpowStep-00}
    \ih
    \czpowfig{./axioms/CZCNOTpowStep-01}\\
    \eqeqref{ax:CZ-CX}
    \czpowfig{./axioms/CZCNOTpowStep-02}\\
    \eqquatreeqref{ax:CZsum}{ax:SCZ}{ax:ZCZ}{ax:SZ}
    \czpowfig{./axioms/CZCNOTpowStep-03}.
\end{gather*}
\endgroup
\end{proof}

\begin{lemma}\label{ax:CZ-S-gadget}
    QuadPhase$_d$ $\vdash$
    \input{./axioms/CZSgadgetL.tikz} = \input{./axioms/CZSgadgetR.tikz}
\end{lemma}

\begin{proof}
Let \(B_r=\binom{r}{2}\).  The case \(r=0\) is immediate.  After
splitting the \((r+1)\)-weighted CNOTs into weights \(1,r,-1\), the
induction step is:
\begingroup
\newcommand{\czsfig}[1]{\scalebox{1}{\input{#1.tikz}}}
\begin{gather*}
    \czsfig{./axioms/CZSgadgetStep-IH}
    \ih
    \czsfig{./axioms/CZSgadgetStep-00}\\
    \eqeqref{ax:PhaseGadgetS}
    \czsfig{./axioms/CZSgadgetStep-01}\\
    \eqdeuxeqref{ax:SC-control}{ax:ZC-control}
    \czsfig{./axioms/CZSgadgetStep-02}\\
    \eqdeuxeqref{ax:CZ-CX-power}{axiom-d-cnot}
    \czsfig{./axioms/CZSgadgetStep-03}\\
    \eqquatreeqref{ax:CZsum}{ax:SCZ}{ax:ZCZ}{ax:SZ}
    \czsfig{./axioms/CZSgadgetStep-04}.
\end{gather*}
\endgroup
Since \(B_r+r=\binom{r+1}{2}\), the last figure is the right-hand side
with \(x=r+1\).
\end{proof}

\begin{lemma}\label{ax:CZ-M}
    QuadPhase$_d$ $\vdash$
    \input{./axioms/CZM.tikz} = \input{./axioms/MCZ.tikz}
\end{lemma}

\begin{proof}
Let \(b=\binom{x}{2}\).
\begingroup
\newcommand{\czmfig}[1]{\scalebox{1}{\input{#1.tikz}}}
\begin{gather*}
    \czmfig{./axioms/CZM}
    \eqdeuxeqref{axiom-mult1}{axiom-multxy}
    \czmfig{./axioms/CZM-00}\\
    \eqeqref{ax:CZexp}
    \czmfig{./axioms/CZM-01}\\
    \eqeqref{ax:SM}
    \czmfig{./axioms/CZM-02B}\\
    \eqdeuxeqref{ax:PhaseGadgetS}{axiom-multcnot}
    \czmfig{./axioms/CZM-03}\\
    \eqeqref{ax:CZ-S-gadget}
    \czmfig{./axioms/CZM-04}\\
    \eqtroiseqref{ax:SCZ}{ax:ZCZ}{ax:SZ}
    \czmfig{./axioms/MCZ}
    \end{gather*}
\endgroup
\end{proof}

\begin{lemma}\label{ax:CZCZ}
    QuadPhase$_d$ $\vdash$
    \input{./axioms/CZCZA.tikz} = \input{./axioms/CZCZB.tikz}
\end{lemma}

\begin{proof}
    \begin{gather*}
        \input{./axioms/CZCZA.tikz}
        \eqeqref{axiom-d-cnot} \input{./axioms/CZCZA-00.tikz}
        \eqeqref{ax:CC-control} \input{./axioms/CZCZA-01.tikz}\\
        \eqeqref{ax:CZ-d} \input{./axioms/CZCZA-02.tikz}
        \eqdeuxeqref{axiom-mult1}{axiom-multxy} \input{./axioms/CZCZA-03.tikz}\\
        \eqdeuxeqref{ax:CZ-d}{ax:CZ-M} \input{./axioms/CZCZA-04.tikz}
        \eqeqref{ax:CC-control} \input{./axioms/CZCZA-05.tikz}\\
        \eqdeuxeqref{ax:CZ-d}{ax:CZ-M} \input{./axioms/CZCZA-06.tikz}
        \eqdeuxeqref{axiom-mult1}{axiom-multxy} \input{./axioms/CZCZA-07.tikz}\\
        \eqeqref{ax:CZ-d} \input{./axioms/CZCZA-08.tikz}
        \eqeqref{axiom-d-cnot} \input{./axioms/CZCZB.tikz}
    \end{gather*}
\end{proof}

\begin{lemma}\label{ax:CC-target}
    QuadPhase$_d$ $\vdash$
    \input{./axioms/CZCNOTtop.tikz} = \input{./axioms/CNOTCZtop.tikz}
\end{lemma}

\begin{proof}
    \begin{gather*}
        \input{./axioms/CZCNOTtop.tikz}
        \eqeqref{ax:CZ-d} \input{./axioms/CZCNOTtop-00.tikz}\\
        \eqeqref{ax:CC-control} \input{./axioms/CZCNOTtop-01.tikz}
        \eqeqref{axiom-d-cnot} \input{./axioms/CZCNOTtop-02.tikz}\\
        \eqdeuxeqref{axiom-mult1}{axiom-multxy} \input{./axioms/CZCNOTtop-03.tikz}
        \eqeqref{ax:CZ-M} \input{./axioms/CZCNOTtop-04.tikz}\\
        \eqeqref{axiom-multcnot2} \input{./axioms/CZCNOTtop-05.tikz}
        \eqeqref{ax:CZCZ} \input{./axioms/CZCNOTtop-06.tikz}\\
        \eqeqref{ax:CC-control} \input{./axioms/CZCNOTtop-07.tikz}
        \eqeqref{ax:CZ-M} \input{./axioms/CZCNOTtop-08.tikz}\\
        \eqeqref{axiom-multcnot2} \input{./axioms/CZCNOTtop-09.tikz}
        \eqdeuxeqref{axiom-mult1}{axiom-multxy} \input{./axioms/CZCNOTtop-10.tikz}\\
        \eqeqref{ax:CZ-d} \input{./axioms/CZCNOTtop-11.tikz}
        \eqeqref{axiom-d-cnot} \input{./axioms/CNOTCZtop.tikz}
    \end{gather*}
\end{proof}

\medskip\noindent\textbf{Quadratic transport package.}

\begin{lemma}\label{lem:quad-diag-affine-transport}
In \(\cat{QuadPhase}_d\), every non-scalar family in \(\mathcal{D}_{\Quad}\) has all basic affine transport rules.
Each transports through \(\gX\), \(\gCX\), and \(\gM\), for every compatible choice of local supports.
\end{lemma}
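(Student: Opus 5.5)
The plan is a finite case table: the non-scalar families of \(\mathcal{D}_{\Quad}\) are \(\gZ\), \(\gS\), \(\gCZ\), and the affine generators are \(\gX\), \(\gCX\), \(\gM\). For each pair I enumerate the relative placements: disjoint supports; overlap on one wire; and, for \(\gCX\), diagonal on the control, on the target, or (for \(\gCZ\)) on both wires in either orientation. I then cite or assemble a certificate for each, using the already derived equations of \cref{derivation:linear-transport-seeds,derivation:quadratic-seeds}. Disjoint-support cases are PROP interchange laws. For placements obtained from a displayed one by a wire flip, I conjugate by the symmetry and use \cref{axiom-swap} together with \cref{ax:CZ-SWAP} to reorient \(\gCZ\).

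For \(\gZ\), the cases are covered directly. \(\gX\) is \cref{ax:ZX}, \(\gM\) is \cref{ax:ZM}, the target side of \(\gCX\) is \cref{ax:ZC-target}, and the control side is \cref{ax:ZC-control}. For \(\gS\), the cases are \cref{ax:SX}, \cref{ax:SM}, \cref{ax:SC-control} and \cref{ax:SC-target}. Each of these realises the substitutions \(\binom{x+1}{2}\), \(\binom{kx}{2}\) and \(\binom{x+y}{2}\) of \cref{lem:binom-identities}, with the \(xy\) term appearing as a \(\gCZ\). For \(\gCZ\), translation on either wire is \cref{ax:CZ-X} up to the flip \cref{ax:CZ-SWAP}, and scaling is \cref{ax:CZ-M}. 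A \(\gCX\) sharing one wire with the \(\gCZ\) is \cref{ax:CC-control} or \cref{ax:CC-target}. A \(\gCX\) acting on both wires of the \(\gCZ\) is \cref{ax:CZ-CX}, with powers handled by \cref{ax:CZ-CX-power}; the reversed orientation is obtained by conjugating with the swap decomposition of \cref{axiom-swap}.

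Some placements are not displayed directly: the case where the \(\gCZ\) and \(\gCX\) share one wire but have a third wire in between, or where the \(\gCX\) direction is reversed. I would reduce these to the displayed ones using the symmetry laws and the \(\gCZ\) flip. Inverse multipliers \(M_{a^{-1}}\) and negative powers are expanded as \((d-k)\)-fold iterates and handled by repeated application, followed by collection with \cref{ax:CZsum}, \cref{ax:CZ-d} and the finite-order axioms.

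I expect the main obstacle to be \(\gCZ\) against a \(\gCX\) whose control is on the \(\gCZ\) wire and whose target is a fresh third wire, in both orientations. If this configuration is not literally covered by \cref{ax:CC-control} or \cref{ax:CC-target}, I would first try to derive it by expanding \(\gCZ\) via its definition as an \(S\)-gadget, \cref{def:QuadPhase-derived-diagonals}. I would then push the \(\gS\) factors through with the \(\gS\) cases already established, recollect with \cref{ax:PhaseGadgetS} and \cref{ax:CZsum}, and check the resulting coefficients against the substitution identity \(x(y+z)\mapsto xy+xz\).
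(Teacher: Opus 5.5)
Your proposal is correct and follows the paper's own argument: the proof is a finite lookup of the one-wire certificates \eqref{ax:ZX}, \eqref{ax:SX}, \eqref{ax:ZC-target}, \eqref{ax:ZC-control}, \eqref{ax:SC-target}, \eqref{ax:SC-control}, \eqref{ax:ZM}, \eqref{ax:SM} and the \(\gCZ\) certificates \eqref{ax:CZ-X}, \eqref{ax:CZ-CX}, \eqref{ax:CZ-M}, with PROP symmetry and interchange laws supplying the remaining placements. The one-shared-wire \(\gCZ\)/\(\gCX\) configurations you flagged as the main obstacle are already covered by the primitive axiom \eqref{ax:CC-control} and the derived \cref{ax:CC-target}, as recorded in \cref{tab:diag-through-affine}, so your fallback gadget expansion is not needed; likewise, reversing a \(\gCX\) only needs conjugation by the PROP symmetry together with \cref{ax:CZ-SWAP}, not the decomposition in \cref{axiom-swap}.
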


\begin{proof}
The one-wire entries are \eqref{ax:ZX}, \eqref{ax:SX}, \eqref{ax:ZC-target}, \eqref{ax:ZC-control}, \eqref{ax:SC-target}, \eqref{ax:SC-control}, \eqref{ax:ZM}, and \eqref{ax:SM}.
The \(\gCZ\) entries are \eqref{ax:CZ-X}, \eqref{ax:CZ-CX}, and \eqref{ax:CZ-M}; symmetry and disjoint-support instances use the PROP laws.
\end{proof}

\subsection{Cubic Diagonal Interaction Certificates}\label{derivation:cubic-diagonal-interactions}

These certificates begin the cubic diagonal-commutation package, covering one-wire/controlled overlaps, finite-order facts for mixed cubic gates, and symmetry reductions for \(\gCCZ\).


\begin{lemma}\label{ax:TCZ}
    CubicPhase$_d$ $\vdash$ 
    \input{./axioms/TCZ.tikz} = \input{./axioms/CZT.tikz}
\end{lemma}

\begin{proof}
    \begin{gather*}
        \input{./axioms/TCZ.tikz}
        = \input{./axioms/TCZ-00.tikz}
        \eqeqref{ax:TS} \input{./axioms/TCZ-01.tikz}\\
        \eqeqref{ax:TC-control} \input{./axioms/TCZ-02.tikz}
        = \input{./axioms/CZT.tikz}
    \end{gather*}
\end{proof}

\begin{lemma}\label{ax:PhaseGadgetT}
    CubicPhase$_d$ $\vdash$ 
    \input{./axioms/PhaseGadgetTleftA.tikz} = \input{./axioms/PhaseGadgetTleftB.tikz}
\end{lemma}

\begin{proof}
    \begin{gather*}
        \input{./axioms/PhaseGadgetTleftA.tikz}\\
        \eqeqref{axiom-d-cnot} \input{./axioms/PhaseGadgetTleftA-00.tikz} \\
        \eqdeuxeqref{axiom-mult1}{axiom-multxy} \input{./axioms/PhaseGadgetTleftA-01.tikz} \\
        \eqeqref{axiom-swap} \input{./axioms/PhaseGadgetTleftA-02.tikz} \\
        = \input{./axioms/PhaseGadgetTleftA-03.tikz} \\
        \eqdeuxeqref{axiom-mult1}{axiom-multxy} \input{./axioms/PhaseGadgetTleftA-04.tikz} 
        \eqeqref{axiom-multcnot} \input{./axioms/PhaseGadgetTleftA-05.tikz}\\
        \eqeqref{ax:TC-control}  \input{./axioms/PhaseGadgetTleftA-06.tikz}
        \eqeqref{axiom-d-cnot} \input{./axioms/PhaseGadgetTleftB.tikz}
    \end{gather*}
\end{proof}

\begin{lemma}\label{ax:PhaseGadgetComm}
    CubicPhase$_d$ $\vdash$ 

    \input{./axioms/PhaseGadgetComm.tikz}
    = \input{./axioms/PhaseGadgetCommR.tikz}
\end{lemma}

\begin{proof}
    \begin{gather*}
        \input{./axioms/PhaseGadgetComm.tikz}
        = \input{./axioms/PhaseGadgetComm-00.tikz} \\
        \eqeqref{axiom-d-cnot} \input{./axioms/PhaseGadgetComm-01.tikz} 
        = \input{./axioms/PhaseGadgetComm-02.tikz} \\
        \eqeqref{ax:PhaseGadgetT} \input{./axioms/PhaseGadgetComm-03.tikz} \\
        \eqdeuxeqref{ax:SC-control}{ax:TC-control} \input{./axioms/PhaseGadgetComm-04.tikz}\\
        \eqeqref{ax:PhaseGadgetT} \input{./axioms/PhaseGadgetComm-05.tikz} 
        \eqeqref{axiom-d-cnot} \input{./axioms/PhaseGadgetCommR.tikz}
    \end{gather*}
\end{proof}

For the \(T\)-fragment certificates from \cref{ax:TCStop} onward, fix \(t=2^{-1}\in\F_d\), the parameter used in the definition of the derived gate \(\gCS\).

\begin{lemma}\label{ax:TCStop}
    CubicPhase$_d$ $\vdash$ 
    \input{./axioms/TCStop.tikz} = \input{./axioms/CSTtop.tikz}
\end{lemma}

\begin{proof}
    \begin{gather*}
        \input{./axioms/TCStop.tikz}
        = \input{./axioms/TCStop-00.tikz} \\
        \eqeqref{ax:TCZ} \input{./axioms/TCStop-01.tikz} \\
        \eqdeuxeqref{ax:TS}{ax:TZ} \input{./axioms/TCStop-02.tikz} \\
        \eqeqref{ax:TC-control} \input{./axioms/TCStop-03.tikz}
        = \input{./axioms/CSTtop.tikz}
    \end{gather*}
\end{proof}

\begin{lemma}\label{ax:SCStop}
    CubicPhase$_d$ $\vdash$ 
    \input{./axioms/SCStop.tikz} = \input{./axioms/CSStop.tikz}
\end{lemma}

\begin{proof}
    \begin{gather*}
        \input{./axioms/SCStop.tikz}
        = \input{./axioms/SCStop-00.tikz} \\
        \eqeqref{ax:SCZ} \input{./axioms/SCStop-01.tikz} \\
        \eqdeuxeqref{ax:TS}{ax:SZ} \input{./axioms/SCStop-02.tikz} \\
        \eqeqref{ax:SC-control} \input{./axioms/SCStop-03.tikz} 
        = \input{./axioms/CSStop.tikz}
    \end{gather*}
\end{proof}

\begin{lemma}\label{ax:ZCStop}
    CubicPhase$_d$ $\vdash$ 
    \input{./axioms/ZCStop.tikz} = \input{./axioms/CSZtop.tikz}
\end{lemma}

\begin{proof}
    \begin{gather*}
        \input{./axioms/ZCStop.tikz}
        = \input{./axioms/ZCStop-00.tikz} \\
        \eqeqref{ax:ZCZ} \input{./axioms/ZCStop-01.tikz}\\ 
        \eqdeuxeqref{ax:SZ}{ax:TZ} \input{./axioms/ZCStop-02.tikz} \\
        \eqeqref{ax:ZC-control} \input{./axioms/ZCStop-03.tikz}\
        = \input{./axioms/CSZtop.tikz}
    \end{gather*}
\end{proof}

\begin{lemma}\label{ax:TCSbot}
    CubicPhase$_d$ $\vdash$ 
    \input{./axioms/TCSbot.tikz} = \input{./axioms/CSTbot.tikz}
\end{lemma}

\begin{proof}
    \begin{gather*}
        \input{./axioms/TCSbot.tikz}
        = \input{./axioms/TCSbot-00.tikz} \\
        \eqeqref{ax:PhaseGadgetT} \input{./axioms/TCSbot-01.tikz} \\
        \eqeqref{ax:TCZ} \input{./axioms/TCSbot-02.tikz} \\
        \eqeqref{ax:TC-control} \input{./axioms/TCSbot-03.tikz} \\
        \eqeqref{ax:PhaseGadgetT} \input{./axioms/TCSbot-04.tikz} 
        = \input{./axioms/CSTbot.tikz}
    \end{gather*}
\end{proof}

\begin{lemma}\label{ax:SCSbot}
    CubicPhase$_d$ $\vdash$ 
    \input{./axioms/SCSbot.tikz} = \input{./axioms/CSSbot.tikz}
\end{lemma}

\begin{proof}
    \begin{gather*}
        \input{./axioms/SCSbot.tikz}
        = \input{./axioms/SCSbot-00.tikz} \\
        \eqeqref{ax:PhaseGadgetT} \input{./axioms/SCSbot-01.tikz} \\
        \eqeqref{ax:SCZ} \input{./axioms/SCSbot-02.tikz} \\
        \eqeqref{ax:SC-control} \input{./axioms/SCSbot-03.tikz} \\
        \eqeqref{ax:PhaseGadgetT} \input{./axioms/SCSbot-04.tikz} 
        = \input{./axioms/CSSbot.tikz}
    \end{gather*}
\end{proof}

\begin{lemma}\label{ax:ZCSbot}
    CubicPhase$_d$ $\vdash$ 
    \input{./axioms/ZCSbot.tikz} = \input{./axioms/CSZbot.tikz}
\end{lemma}

\begin{proof}
    \begin{gather*}
        \input{./axioms/ZCSbot.tikz}
        = \input{./axioms/ZCSbot-00.tikz} \\
        \eqeqref{ax:PhaseGadgetT} \input{./axioms/ZCSbot-01.tikz} \\
        \eqeqref{ax:ZCZ} \input{./axioms/ZCSbot-02.tikz}\\ 
        \eqeqref{ax:ZC-control} \input{./axioms/ZCSbot-03.tikz} \\
        \eqeqref{ax:PhaseGadgetT} \input{./axioms/ZCSbot-04.tikz} 
        = \input{./axioms/CSZbot.tikz}
    \end{gather*}
\end{proof}

\medskip\noindent\textbf{Oriented \texorpdfstring{\(CS/SC\)}{CS/SC} sorting and support changes.}

\begin{lemma}\label{ax:CS-CNOT-top}
    CubicPhase$_d$ $\vdash$ 
    \input{./axioms/CNOTCStop.tikz} = \input{./axioms/CSCNOTtop.tikz}
\end{lemma}

\begin{proof}
    \begin{gather*}
        \input{./axioms/CNOTCStop.tikz}
        = \input{./axioms/CNOTCStop-00.tikz}\\
        \eqeqref{lemma-XX2-comm} \input{./axioms/CNOTCStop-01.tikz} \\
        \eqtroiseqref{ax:TC-control}{ax:SC-control}{ax:ZC-control}   \input{./axioms/CNOTCStop-02.tikz} \\
        \eqeqref{ax:CC-target} \input{./axioms/CNOTCStop-03.tikz}
        = \input{./axioms/CSCNOTtop.tikz}
    \end{gather*}
\end{proof}

\begin{lemma}\label{ax:CS-CNOT-top2}
    CubicPhase$_d$ $\vdash$ 
    \input{./axioms/CNOTCStop2.tikz} = \input{./axioms/CSCNOTtop2.tikz}
\end{lemma}

\begin{proof}
    \begin{gather*}
        \input{./axioms/CNOTCStop2.tikz}
        = \input{./axioms/CNOTCStop2-00.tikz}\\
        \eqeqref{ax:PhaseGadgetT} \input{./axioms/CNOTCStop2-01.tikz}\\
        \eqeqref{lemma-XX2-comm} \input{./axioms/CNOTCStop2-02.tikz} \\
        \eqeqref{ax:CC-target}   \input{./axioms/CNOTCStop2-03.tikz} \\
        \eqeqref{ax:PhaseGadgetT} \input{./axioms/CNOTCStop2-04.tikz}
        = \input{./axioms/CSCNOTtop2.tikz}
    \end{gather*}
\end{proof}

\begin{lemma}\label{ax:CSCZbot}
    CubicPhase$_d$ $\vdash$ 
    \input{./axioms/CSCZbot.tikz} = \input{./axioms/CZCSbot.tikz}
\end{lemma}

\begin{proof}
    \begin{gather*}
        \input{./axioms/CSCZbot.tikz} 
        = \input{./axioms/CSCZbot-00.tikz} 
        \eqeqref{ax:PhaseGadgetS} \input{./axioms/CSCZbot-01.tikz} \\
        \eqeqref{ax:CS-CNOT-top} \input{./axioms/CSCZbot-02.tikz} 
        \eqeqref{ax:SCStop} \input{./axioms/CSCZbot-03.tikz} 
         \eqeqref{ax:PhaseGadgetS} \input{./axioms/CZCSbot.tikz}
    \end{gather*}
\end{proof}

\begin{lemma}\label{ax:CSCZtop}
    CubicPhase$_d$ $\vdash$ 
    \input{./axioms/CSCZtop.tikz} = \input{./axioms/CZCStop.tikz}
\end{lemma}

\begin{proof}
    \begin{gather*}
        \input{./axioms/CSCZtop.tikz} 
        = \input{./axioms/CSCZtop-00.tikz} \\
        \eqeqref{ax:CS-CNOT-top} \input{./axioms/CSCZtop-01.tikz} 
        \eqeqref{ax:SCStop} \input{./axioms/CSCZtop-02.tikz} 
        = \input{./axioms/CZCStop.tikz}
    \end{gather*}
\end{proof}

\begin{lemma}\label{ax:CCZ-SWAP}
    CubicPhase$_d$ $\vdash$ 
    \input{./axioms/CCZSWAP.tikz} = \input{./axioms/SWAPCCZ.tikz}
\end{lemma}

\begin{proof}
    \begin{gather*}
        \input{./axioms/CCZSWAP.tikz}
        = \input{./axioms/CCZSWAP-00.tikz} \\
        = \input{./axioms/CCZSWAP-01.tikz} \\
        \eqeqref{ax:PhaseGadgetT} \input{./axioms/CCZSWAP-02.tikz} \\
        \eqeqref{lemma-XX2-comm} \input{./axioms/CCZSWAP-03.tikz} \\
        \eqdeuxeqref{ax:PhaseGadgetT}{ax:PhaseGadgetT} \input{./axioms/CCZSWAP-05.tikz} \\
        \eqeqref{lemma-XX2-comm} \input{./axioms/CCZSWAP-06.tikz} \\
        \eqeqref{ax:PhaseGadgetT} \input{./axioms/CCZSWAP-07.tikz} \\
        \eqeqref{axiom-I} \input{./axioms/CCZSWAP-08.tikz} \\
        \eqeqref{axiom-d-cnot} \input{./axioms/CCZSWAP-09.tikz} \\
        \eqeqref{ax:PhaseGadgetT} \input{./axioms/CCZSWAP-10.tikz} \\
        = \input{./axioms/SWAPCCZ.tikz}
    \end{gather*}
\end{proof}

\begin{lemma}\label{ax:CCZ-SWAP2}
    CubicPhase$_d$ $\vdash$ 
    \input{./axioms/CCZSWAP2.tikz} = \input{./axioms/SWAPCCZ2.tikz}
\end{lemma}

\begin{proof}
    \begin{gather*}
        \input{./axioms/CCZSWAP2.tikz}
        = \input{./axioms/CCZSWAP2-00.tikz} \\
        = \input{./axioms/CCZSWAP2-01.tikz} \\
        \eqeqref{ax:PhaseGadgetT} \input{./axioms/CCZSWAP2-02.tikz} \\
        \eqeqref{lemma-XX2-comm} \input{./axioms/CCZSWAP2-03.tikz} \\
        \eqeqref{ax:PhaseGadgetT} \input{./axioms/CCZSWAP2-04.tikz} \\
        \eqeqref{lemma-XX2-comm} \input{./axioms/CCZSWAP2-05.tikz} \\
        \eqeqref{ax:PhaseGadgetT} \input{./axioms/CCZSWAP2-06.tikz} \\
        \eqeqref{axiom-I} \input{./axioms/CCZSWAP2-07.tikz} \\
        \eqeqref{axiom-d-cnot} \input{./axioms/CCZSWAP2-08.tikz} \\
        \eqeqref{ax:PhaseGadgetT}  \input{./axioms/CCZSWAP2-09.tikz} \\
        = \input{./axioms/SWAPCCZ.tikz}
    \end{gather*}
\end{proof}

\begin{lemma}\label{ax:CCZ-CNOTbot}
    CubicPhase$_d$ $\vdash$ 
    \input{./axioms/CCZCNOTbot.tikz} = \input{./axioms/CNOTCCZbot.tikz}
\end{lemma}

\begin{proof}
    \begin{gather*}
        \input{./axioms/CCZCNOTbot.tikz}\\
        = \input{./axioms/CCZCNOTbot-00.tikz} \\
        \eqeqref{ax:TC-control} \input{./axioms/CCZCNOTbot-01.tikz} \\
        \eqeqref{lemma-XX2-comm} \input{./axioms/CCZCNOTbot-02.tikz} \\
        = \input{./axioms/CNOTCCZbot.tikz}
    \end{gather*}
\end{proof}

\begin{lemma}\label{ax:CSCS3}
    CubicPhase$_d$ $\vdash$ 
    \input{./axioms/CSCSA3.tikz} = \input{./axioms/CSCSB3.tikz}
\end{lemma}

\begin{proof}
    \begin{gather*}
        \input{./axioms/CSCSA3.tikz}\\
        = \input{./axioms/CSCS3-00.tikz} \\
        \eqeqref{ax:CS-CNOT-top} \input{./axioms/CSCS3-01.tikz} \\
        \eqtroiseqref{ax:TCStop}{ax:SCStop}{ax:ZCStop} \input{./axioms/CSCS3-02.tikz} \\
        \eqeqref{ax:CSCZbot} \input{./axioms/CSCS3-03.tikz} \\
        = \input{./axioms/CSCSB3.tikz}
    \end{gather*}
\end{proof}

\begin{lemma}\label{ax:CSCS}
    CubicPhase$_d$ $\vdash$ 
    \input{./axioms/CSCSA.tikz} = \input{./axioms/CSCSB.tikz}
\end{lemma}

\begin{proof}
    \begin{gather*}
        \input{./axioms/CSCSA.tikz}\\
        = \input{./axioms/CSCS-00.tikz} \\
        \eqeqref{ax:CS-CNOT-top2} \input{./axioms/CSCS-01.tikz} \\
        \eqtroiseqref{ax:TCSbot}{ax:SCSbot}{ax:ZCSbot} \input{./axioms/CSCS-02.tikz} \\
        \eqeqref{ax:CSCZtop} \input{./axioms/CSCS-03.tikz} \\
        = \input{./axioms/CSCSB.tikz}
    \end{gather*}
\end{proof}

\begin{lemma}\label{ax:CSCS2}
    CubicPhase$_d$ $\vdash$ 
    \input{./axioms/CSCSA2.tikz} = \input{./axioms/CSCSB2.tikz}
\end{lemma}

\begin{proof}
    \begin{gather*}
        \input{./axioms/CSCSA2.tikz}\\
        = \input{./axioms/CSCS2-00.tikz} \\
        \eqeqref{ax:PhaseGadgetT}  \input{./axioms/CSCS2-01.tikz} \\
        \eqeqref{ax:CS-CNOT-top2} \input{./axioms/CSCS2-02.tikz} \\
        \eqeqref{ax:CSCZbot} \input{./axioms/CSCS2-03.tikz} \\
        \eqeqref{ax:PhaseGadgetT}  \input{./axioms/CSCS2-04.tikz} \\
        = \input{./axioms/CSCSB2.tikz}
    \end{gather*}
\end{proof}

\medskip\noindent\textbf{\texorpdfstring{\(CCZ\)}{CCZ}-centred overlap sorting.}

\begin{lemma}\label{ax:TCCZ}
    CubicPhase$_d$ $\vdash$ 
    \input{./axioms/TCCZ.tikz} = \input{./axioms/CCZT.tikz}
\end{lemma}

\begin{proof}
    \begin{gather*}
        \input{./axioms/TCCZ.tikz}\\
        = \input{./axioms/TCCZ-00.tikz} \\
        \eqeqref{ax:TC-control} \input{./axioms/TCCZ-01.tikz} \\
        = \input{./axioms/CCZT.tikz}
    \end{gather*}
\end{proof}

\begin{lemma}\label{ax:SCCZ}
    CubicPhase$_d$ $\vdash$ 
    \input{./axioms/SCCZ.tikz} = \input{./axioms/CCZS.tikz}
\end{lemma}

\begin{proof}
    \begin{gather*}
        \input{./axioms/SCCZ.tikz}\\
        = \input{./axioms/SCCZ-00.tikz} \\
        \eqeqref{ax:SC-control} \input{./axioms/SCCZ-01.tikz} \\
        \eqeqref{ax:TS} \input{./axioms/SCCZ-02.tikz} \\
        = \input{./axioms/CCZS.tikz}
    \end{gather*}
\end{proof}

\begin{lemma}\label{ax:ZCCZ}
    CubicPhase$_d$ $\vdash$ 
    \input{./axioms/ZCCZ.tikz} = \input{./axioms/CCZZ.tikz}
\end{lemma}

\begin{proof}
    \begin{gather*}
        \input{./axioms/ZCCZ.tikz}\\
        = \input{./axioms/ZCCZ-00.tikz} \\
        \eqeqref{ax:ZC-control} \input{./axioms/ZCCZ-01.tikz} \\
        \eqeqref{ax:TZ} \input{./axioms/ZCCZ-02.tikz} \\
        = \input{./axioms/CCZZ.tikz}
    \end{gather*}
\end{proof}

\begin{lemma}\label{ax:CCZCZtop}
    CubicPhase$_d$ $\vdash$ 
    \input{./axioms/CCZCZtop.tikz} = \input{./axioms/CZCCZtop.tikz}
\end{lemma}

\begin{proof}
    \begin{gather*}
        \input{./axioms/CCZCZtop.tikz}
        = \input{./axioms/CCZCZtop-00.tikz} 
        \eqeqref{ax:PhaseGadgetS} \input{./axioms/CCZCZtop-01.tikz} \\
        \eqeqref{ax:CCZ-CNOTbot} \input{./axioms/CCZCZtop-02.tikz} 
        \eqeqref{ax:SCCZ} \input{./axioms/CCZCZtop-03.tikz} 
        = \input{./axioms/CZCCZtop.tikz}
    \end{gather*}
\end{proof}

\begin{lemma}\label{ax:CCZ-CS-1}
    CubicPhase$_d$ $\vdash$ 
    \input{./axioms/CCZCS1.tikz} = \input{./axioms/CSCCZ1.tikz}
\end{lemma}

\begin{proof}
    \begin{gather*}
        \input{./axioms/CCZCS1.tikz}
        = \input{./axioms/CCZCS1-00.tikz} \\
        \eqeqref{ax:CCZCZtop} \input{./axioms/CCZCS1-02.tikz} \\
        \eqtroiseqref{ax:TCCZ}{ax:SCCZ}{ax:ZCCZ} \input{./axioms/CCZCS1-03.tikz} \\
        \eqeqref{ax:CCZ-CNOTbot} \input{./axioms/CCZCS1-04.tikz}
        = \input{./axioms/CSCCZ1.tikz}
    \end{gather*}
\end{proof}

\begin{lemma}\label{ax:CCZ-CS-2}
    CubicPhase$_d$ $\vdash$ 
    \input{./axioms/CCZCS2.tikz} = \input{./axioms/CSCCZ2.tikz}
\end{lemma}

\begin{proof}
    \begin{gather*}
        \input{./axioms/CCZCS2.tikz}
        = \input{./axioms/CCZCS2-00.tikz} \\
        \eqeqref{ax:CCZCZtop} \input{./axioms/CCZCS2-01.tikz} \\
        \eqeqref{ax:PhaseGadgetT} \input{./axioms/CCZCS2-02.tikz} \\
        \eqeqref{ax:CCZ-CNOTbot} \input{./axioms/CCZCS2-03.tikz} \\
        \eqeqref{ax:PhaseGadgetT} \input{./axioms/CCZCS2-04.tikz} 
        = \input{./axioms/CSCCZ2.tikz}
    \end{gather*}
\end{proof}

\begin{lemma}\label{ax:CCZCCZ}
    CubicPhase$_d$ $\vdash$ 
    \input{./axioms/CCZCCZA.tikz} = \input{./axioms/CCZCCZB.tikz}
\end{lemma}

\begin{proof}
    \begin{gather*}
        \input{./axioms/CCZCCZA.tikz}\\
        = \input{./axioms/CCZCCZ-00.tikz} \\
        \eqeqref{ax:CCZ-CNOTbot} \input{./axioms/CCZCCZ-01.tikz} \\
        \eqeqref{ax:TCCZ} \input{./axioms/CCZCCZ-02.tikz} \\
        = \input{./axioms/CCZCCZB.tikz}
    \end{gather*}
\end{proof}

\medskip\noindent\textbf{Finite-order reductions for controlled cubic phases.}

\begin{lemma}\label{ax:CS-order-d}
    CubicPhase$_d$ $\vdash$ 
    \input{./pieces/CSd.tikz} = \input{two-lines.tikz}
\end{lemma}

\begin{proof}
    \begin{gather*}
        \input{./pieces/CSd.tikz}
        = \left(\input{./axioms/CSdef.tikz}\right)^{\circ d}\\
        \eqdeuxeqref{ax:GadgetCZmin}{ax:CCZ-CS-3-lem-CZ}\left(\input{./axioms/CSd-00-A.tikz}\right)^{\circ d} \circ  \left(\input{./axioms/CSd-00-B.tikz}\right)^{\circ d}\\
        \eqeqref{ax:CZ-d} \left(\input{./axioms/CSd-00-B.tikz}\right)^{\circ d}\\
        \eqtroiseqref{ax:TC-control}{ax:SC-control}{ax:ZC-control} \left(\input{./axioms/CSd-01-A.tikz}\right)^{\circ d} \circ  \left(\input{./axioms/CSd-01-B.tikz}\right)^{\circ d}\\
        \eqtroiseqref{ax:TS}{ax:TZ}{ax:SZ} \input{./axioms/CSd-02-A.tikz} \circ  \left(\input{./axioms/CSd-01-B.tikz}\right)^{\circ d}\\
        \eqtroiseqref{ax:T-order-d}{ax:S-order-d}{ax:Z-order-d} \left(\input{./axioms/CSd-01-B.tikz}\right)^{\circ d}
        \eqdeuxeqref{ax:PhaseGadgetComm}{axiom-d-cnot} \input{./axioms/CSd-02-B.tikz}\\
        \eqeqref{ax:T-order-d} \input{./axioms/CSd-03-B.tikz}
        \eqeqref{axiom-d-cnot} \input{two-lines.tikz}
    \end{gather*}
\end{proof}

\begin{corollary}\label{ax:SC-order-d}
    CubicPhase$_d$ $\vdash$
    $\left(\gSC\right)^{\circ d} = \input{two-lines.tikz}$
\end{corollary}

\begin{proof}
Write \(\sigma=\gSWAP\).  By \cref{def:CubicPhase-derived-diagonals}, \(\gSC\) is \(\gCS\) conjugated by the two-wire symmetry.
Thus the \(\gSC\) order relation is the following same-arity rewrite sequence:
\begin{gather*}
    \left(\gSC\right)^{\circ d}
    = \left(\sigma\circ\gCS\circ\sigma\right)^{\circ d}\\
    \overset{\mathrm{PROP}}{=} \sigma\circ\left(\gCS\right)^{\circ d}\circ\sigma\\
    \eqeqref{ax:CS-order-d} \sigma\circ\id_2\circ\sigma\\
    \overset{\mathrm{PROP}}{=} \id_2 .
\end{gather*}
The first equality unfolds the definition of \(\gSC\); the PROP steps use associativity, the unit laws, and the \(d-1\) internal cancellations of adjacent symmetries.  No tensor factor is added or cancelled.
\end{proof}

\begin{lemma}\label{ax:CCZ-d-lem1}
    CubicPhase$_d$ $\vdash$ 

    \input{./axioms/CCZdlem1.tikz}
    
     = \input{./axioms/CCZdlem1R.tikz}
\end{lemma}

\begin{proof}
    \begin{gather*}
         \input{./axioms/CCZdlem1.tikz} \\
        \eqeqref{ax:PhaseGadgetT}\input{./axioms/CCZdlem1-00.tikz} \\
        \eqeqref{ax:TC-control}\input{./axioms/CCZdlem1-01.tikz} \\
        \eqeqref{lemma-XX2-comm}\input{./axioms/CCZdlem1-02.tikz} \\
        \eqeqref{lemma-XX-comm}\input{./axioms/CCZdlem1-03.tikz} \\
        \eqeqref{ax:PhaseGadgetT}\input{./axioms/CCZdlem1-04.tikz} \\
        \eqeqref{axiom-I}\input{./axioms/CCZdlem1-05.tikz} \\
        \eqeqref{axiom-d-cnot}\input{./axioms/CCZdlem1-06.tikz} \\
        \eqdeuxeqref{lemma-XX2-comm}{ax:TC-control}\input{./axioms/CCZdlem1-07.tikz} \\
        \eqeqref{axiom-d-cnot}\input{./axioms/CCZdlem1-08.tikz} \\
        \eqeqref{axiom-I}\input{./axioms/CCZdlem1-09.tikz} \\
        \eqeqref{ax:PhaseGadgetT}\input{./axioms/CCZdlem1-10.tikz} \\
        \eqeqref{lemma-XX-comm}\input{./axioms/CCZdlem1-11.tikz} \\
        \eqeqref{ax:PhaseGadgetT}\input{./axioms/CCZdlem1-12.tikz} \\
        \eqeqref{axiom-d-cnot}\input{./axioms/CCZdlem1-13.tikz} \\
        \eqeqref{ax:TC-control}\input{./axioms/CCZdlem1-14.tikz} \\
        \eqdeuxeqref{axiom-d-cnot}{axiom-d-cnot}\input{./axioms/CCZdlem1R.tikz}
    \end{gather*}
\end{proof}

\begin{lemma}\label{ax:CCZ-d-lem2}
    CubicPhase$_d$ $\vdash$ 

    \input{./axioms/CCZdlem2.tikz}

    = \input{./axioms/CCZdlem2R.tikz}
\end{lemma}

\begin{proof}
    \begin{gather*}
         \input{./axioms/CCZdlem2.tikz} \\
        \eqeqref{ax:TC-control}\input{./axioms/CCZdlem2-00.tikz} \\
        \eqeqref{lemma-XX2-comm}\input{./axioms/CCZdlem2-01.tikz} \\
        \eqeqref{ax:PhaseGadgetT}\input{./axioms/CCZdlem2-02.tikz} \\
        \eqeqref{ax:TC-control}\input{./axioms/CCZdlem2-03.tikz} \\
        \eqeqref{lemma-XX2-comm}\input{./axioms/CCZdlem2-04.tikz} \\
        \eqeqref{ax:PhaseGadgetT}\input{./axioms/CCZdlem2-05.tikz} \\
        \eqeqref{axiom-d-cnot}\input{./axioms/CCZdlem2R.tikz}\\
    \end{gather*}
\end{proof}

\begin{lemma}\label{ax:CCZ-d-lem3}
    CubicPhase$_d$ $\vdash$ 

    \input{./axioms/CCZdlem3.tikz}

    = \input{./axioms/CCZdlem3R.tikz}
\end{lemma}

\begin{proof}
    \begin{gather*}
         \input{./axioms/CCZdlem3.tikz} \\
        \eqeqref{ax:TC-control}\input{./axioms/CCZdlem3-00.tikz} \\
        \eqeqref{ax:PhaseGadgetT}\input{./axioms/CCZdlem3-01.tikz} \\
        \eqeqref{ax:TC-control}\input{./axioms/CCZdlem3-02.tikz} \\
        \eqeqref{lemma-XX2-comm}\input{./axioms/CCZdlem3-03.tikz} \\
        \eqeqref{ax:PhaseGadgetT}\input{./axioms/CCZdlem3-04.tikz} \\
        \eqeqref{axiom-d-cnot}\input{./axioms/CCZdlem3R.tikz}\\
    \end{gather*}
\end{proof}

\begin{lemma}\label{ax:CCZ-d-lem4}
    CubicPhase$_d$ $\vdash$
    \input{./axioms/CCZdlem4.tikz} = \input{./axioms/CCZdlem4R.tikz}
\end{lemma}

\begin{proof}
    \begin{gather*}
         \input{./axioms/CCZdlem4.tikz} \\
        \eqeqref{ax:TC-control}\input{./axioms/CCZdlem4-00.tikz} \\
        \eqeqref{ax:PhaseGadgetT}\input{./axioms/CCZdlem4-01.tikz} \\
        \eqeqref{ax:TC-control}\input{./axioms/CCZdlem4-02.tikz} \\
        \eqeqref{ax:PhaseGadgetT}\input{./axioms/CCZdlem4-03.tikz} \\
        \eqdeuxeqref{axiom-d-cnot}{ax:T-order-d}\input{./axioms/CCZdlem4R.tikz}\\
    \end{gather*}
\end{proof}

\begin{lemma}\label{ax:CCZ-d}
    CubicPhase$_d$ $\vdash$
    \input{./pieces/CCZd.tikz} = \input{three-lines.tikz}
\end{lemma}

\begin{proof}
    \begin{gather*}
        \input{./pieces/CCZd.tikz}
        = \left(\input{./axioms/CCZdef.tikz}\right)^{\circ d}\\
        \eqeqref{ax:CCZ-d-lem1} \input{./axioms/CCZd-00-A.tikz} \circ \left(\input{./axioms/CCZd-00-B.tikz}\right)^{\circ d}\\
        \eqdeuxeqref{ax:T-order-d}{axiom-d-cnot} \left(\input{./axioms/CCZd-00-B.tikz}\right)^{\circ d}\\
        \eqeqref{ax:CCZ-d-lem2} \input{./axioms/CCZd-01-A.tikz} \circ \left(\input{./axioms/CCZd-01-B.tikz}\right)^{\circ d}\\
        \eqdeuxeqref{ax:T-order-d}{axiom-d-cnot} (\input{./axioms/CCZd-01-B.tikz})^{\circ d}\\
        \eqeqref{ax:CCZ-d-lem3} \input{./axioms/CCZd-02-A.tikz} \circ \left(\input{./axioms/CCZd-02-B.tikz}\right)^{\circ d}\\
        \eqdeuxeqref{ax:T-order-d}{axiom-d-cnot} \left(\input{./axioms/CCZd-02-B.tikz}\right)^{\circ d}\\
        \eqeqref{ax:T-order-d} \input{three-lines.tikz}
    \end{gather*}
\end{proof}

\medskip\noindent\textbf{Mixed \texorpdfstring{\(CCZ/CZ/CS\)}{CCZ/CZ/CS} sorting.}

\begin{lemma}\label{ax:CCZCZ}
    CubicPhase$_d$ $\vdash$
    \input{./axioms/CCZCZ.tikz} = \input{./axioms/CZCCZ.tikz}
\end{lemma}

\begin{proof}
    \begin{gather*}
        \input{./axioms/CCZCZ.tikz}\\
        = \scalebox{0.8}{\input{./axioms/CCZCZ-00.tikz}}\\
        \eqdeuxeqref{ax:TC-control}{ax:SC-control} \scalebox{0.8}{\input{./axioms/CCZCZ-01.tikz}}\\
        \eqeqref{axiom-I} \scalebox{0.8}{\input{./axioms/CCZCZ-02.tikz}}\\
        \eqeqref{axiom-d-cnot} \scalebox{0.8}{\input{./axioms/CCZCZ-03.tikz}}\\
        \eqdeuxeqref{ax:PhaseGadgetT}{ax:PhaseGadgetS} \scalebox{0.8}{\input{./axioms/CCZCZ-04.tikz}}\\
        \eqdeuxeqref{ax:TC-control}{ax:SC-control} \scalebox{0.8}{\input{./axioms/CCZCZ-05.tikz}}\\
        \eqeqref{ax:PhaseGadgetS} \scalebox{0.8}{\input{./axioms/CCZCZ-06.tikz}}\\
        \eqeqref{lemma-XX2-comm} \scalebox{0.8}{\input{./axioms/CCZCZ-07.tikz}}\\
        \eqeqref{ax:PhaseGadgetS} \scalebox{0.8}{\input{./axioms/CCZCZ-08.tikz}}\\
        \eqeqref{lemma-XX2-comm} \scalebox{0.8}{\input{./axioms/CCZCZ-09.tikz}}\\
        \eqeqref{axiom-d-cnot} \scalebox{0.8}{\input{./axioms/CCZCZ-10.tikz}}\\
        \eqeqref{ax:TS} \scalebox{0.8}{\input{./axioms/CCZCZ-11.tikz}}\\
        \eqeqref{axiom-d-cnot} \scalebox{0.8}{\input{./axioms/CCZCZ-12.tikz}}\\
        \eqeqref{ax:PhaseGadgetS} \scalebox{0.8}{\input{./axioms/CCZCZ-13.tikz}}\\
        \eqeqref{lemma-XX2-comm} \scalebox{0.8}{\input{./axioms/CCZCZ-14.tikz}}\\
        \eqeqref{ax:PhaseGadgetS} \scalebox{0.8}{\input{./axioms/CCZCZ-15.tikz}}\\
        \eqeqref{lemma-XX2-comm} \scalebox{0.8}{\input{./axioms/CCZCZ-16.tikz}}\\
        \eqeqref{ax:SC-control} \scalebox{0.8}{\input{./axioms/CCZCZ-17.tikz}}\\
        \eqeqref{ax:PhaseGadgetS} \scalebox{0.8}{\input{./axioms/CCZCZ-18.tikz}}\\
        \eqeqref{lemma-XX2-comm} \scalebox{0.8}{\input{./axioms/CCZCZ-19.tikz}}\\
        \eqeqref{ax:SC-control} \scalebox{0.8}{\input{./axioms/CCZCZ-20.tikz}}\\
        \eqeqref{axiom-d-cnot}   \scalebox{0.8}{\input{./axioms/CCZCZ-21.tikz}}\\
        \eqeqref{axiom-I} \scalebox{0.8}{\input{./axioms/CCZCZ-22.tikz}}\\
        \eqeqref{ax:PhaseGadgetT} \scalebox{0.8}{\input{./axioms/CCZCZ-23.tikz}}\\
        \input{./axioms/CZCCZ.tikz}
    \end{gather*}
\end{proof}

\begin{lemma}\label{ax:CCZCCZ2-lem}
    CubicPhase$_d$ $\vdash$ 
    \input{./axioms/CCZCCZA2-lem.tikz} = \input{./axioms/CCZCCZB2-lem.tikz}
\end{lemma}

\begin{proof}
    \begin{gather*}
        \input{./axioms/CCZCCZA2-lem.tikz} \\
        = \scalebox{0.9}{\input{./axioms/CCZCCZA2-lem-00.tikz}}\\
        \eqeqref{axiom-I} \scalebox{0.9}{\input{./axioms/CCZCCZA2-lem-01.tikz}}\\
        \eqeqref{axiom-d-cnot} \scalebox{0.9}{\input{./axioms/CCZCCZA2-lem-02.tikz}}\\
        \eqeqref{ax:PhaseGadgetT} \scalebox{0.9}{\input{./axioms/CCZCCZA2-lem-03.tikz}}\\
        \eqeqref{ax:TC-control} \scalebox{0.9}{\input{./axioms/CCZCCZA2-lem-04.tikz}}\\
        \eqeqref{ax:PhaseGadgetT} \scalebox{0.9}{\input{./axioms/CCZCCZA2-lem-05.tikz}}\\
        \eqeqref{axiom-I} \scalebox{0.9}{\input{./axioms/CCZCCZA2-lem-06.tikz}}\\
        \eqeqref{axiom-d-cnot} \scalebox{0.9}{\input{./axioms/CCZCCZA2-lem-07.tikz}}\\
        \eqeqref{ax:PhaseGadgetT}\scalebox{0.9}{\input{./axioms/CCZCCZA2-lem-08.tikz}}\\
        \eqeqref{lemma-XX2-comm} \scalebox{0.9}{\input{./axioms/CCZCCZA2-lem-09.tikz}}\\
        \eqeqref{ax:TC-control} \scalebox{0.9}{\input{./axioms/CCZCCZA2-lem-10.tikz}}\\
        \eqeqref{ax:PhaseGadgetT}\scalebox{0.9}{\input{./axioms/CCZCCZA2-lem-11.tikz}}\\
        \eqeqref{ax:TC-control}  \scalebox{0.9}{\input{./axioms/CCZCCZA2-lem-12.tikz}}\\
        \eqeqref{lemma-XX2-comm} \scalebox{0.9}{\input{./axioms/CCZCCZA2-lem-13.tikz}}\\
        \eqeqref{ax:PhaseGadgetT} \scalebox{0.9}{\input{./axioms/CCZCCZA2-lem-14.tikz}}\\
        = \input{./axioms/CCZCCZB2-lem.tikz}
    \end{gather*}
\end{proof}

\begin{lemma}\label{ax:CCZCCZ2}
    CubicPhase$_d$ $\vdash$ 
    \input{./axioms/CCZCCZA2.tikz} = \input{./axioms/CCZCCZB2.tikz}
\end{lemma}

\begin{proof}
    \begin{gather*}
        \input{./axioms/CCZCCZA2.tikz}\\
        = \input{./axioms/CCZCCZA2-00.tikz}\\
        \eqeqref{ax:TCCZ} \input{./axioms/CCZCCZA2-01.tikz}\\
        \eqeqref{ax:CCZCZtop} \input{./axioms/CCZCCZA2-02.tikz}\\
        \eqeqref{ax:CCZCCZ2-lem} \input{./axioms/CCZCCZA2-03.tikz}\\
        \eqeqref{axiom-I} \input{./axioms/CCZCCZA2-04.tikz}\\
        \eqeqref{ax:CCZCZtop} \input{./axioms/CCZCCZA2-05.tikz}\\
        \eqeqref{axiom-I} \input{./axioms/CCZCCZA2-06.tikz}\\
        \input{./axioms/CCZCCZB2.tikz}
    \end{gather*}
\end{proof}

\begin{lemma}\label{ax:CSCZ}
    CubicPhase$_d$ $\vdash$ 
    \input{./axioms/CSCZ.tikz} = \input{./axioms/CZCS.tikz}
\end{lemma}

\begin{proof}
    \begin{gather*}
        \input{./axioms/CSCZ.tikz}\\
        = \input{./axioms/CSCZ-00.tikz}\\
        \eqeqref{ax:PhaseGadgetComm} \input{./axioms/CSCZ-01.tikz}\\
        \eqeqref{axiom-d-cnot} \input{./axioms/CSCZ-01b.tikz}\\
        \eqeqref{ax:TS} \input{./axioms/CSCZ-01c.tikz}\\
        \eqeqref{axiom-d-cnot} \input{./axioms/CSCZ-01d.tikz}\\
        \eqtroiseqref{ax:SC-control}{ax:TC-control}{ax:ZC-control} \input{./axioms/CSCZ-02.tikz}\\
        \eqeqref{ax:PhaseGadgetT} \input{./axioms/CSCZ-03.tikz}\\
        \eqeqref{ax:SC-control} \input{./axioms/CSCZ-04.tikz}\\
        = \input{./axioms/CSCZ-05.tikz}\\
        \eqeqref{ax:PhaseGadgetT} \input{./axioms/CSCZ-06.tikz}\\
        = \input{./axioms/CSCZ-07.tikz}\\
        = \input{./axioms/CZCS.tikz}
    \end{gather*}
\end{proof}

\begin{lemma}\label{ax:CCZ-CS-3-lem-CZ}
    CubicPhase$_d$ $\vdash$ 
    \input{./axioms/CCZCS3-lem-CZ.tikz} = \input{./axioms/CCZCS3-lem-CZ-R.tikz}
\end{lemma}

\begin{proof}
    \begin{gather*}
        \input{./axioms/CCZCS3-lem-CZ.tikz}
        = \input{./axioms/CCZCS3-lem-CZ-00.tikz}\\
        \eqeqref{axiom-d-cnot} \input{./axioms/CCZCS3-lem-CZ-01.tikz}
        \eqeqref{ax:SC-control} \input{./axioms/CCZCS3-lem-CZ-02.tikz}\\
        \eqeqref{ax:TS} \input{./axioms/CCZCS3-lem-CZ-03.tikz}
        \eqeqref{axiom-d-cnot} \input{./axioms/CCZCS3-lem-CZ-04.tikz}\\
        \eqeqref{ax:PhaseGadgetT} \input{./axioms/CCZCS3-lem-CZ-05.tikz}
        \eqeqref{ax:SC-control} \input{./axioms/CCZCS3-lem-CZ-06.tikz}\\
        \eqeqref{ax:PhaseGadgetT}\input{./axioms/CCZCS3-lem-CZ-07.tikz}
        = \input{./axioms/CCZCS3-lem-CZ-R.tikz}
    \end{gather*}
\end{proof}

\begin{lemma}\label{ax:CCZ-CS-3-lem}
    CubicPhase$_d$ $\vdash$ 
    \input{./axioms/CCZCS3-lem.tikz} = \input{./axioms/CCZCS3-lem-R.tikz}
\end{lemma}

\begin{proof}
    \begin{gather*}
        \input{./axioms/CCZCS3-lem.tikz}\\
        = \input{./axioms/CCZCS3-lem-00.tikz}\\
        \eqeqref{ax:CCZ-CS-3-lem-CZ} \input{./axioms/CCZCS3-lem-01.tikz}\\
        \eqtroiseqref{ax:ZC-control}{ax:SC-control}{ax:TC-control} \input{./axioms/CCZCS3-lem-02.tikz}\\
        \eqeqref{axiom-d-cnot} \input{./axioms/CCZCS3-lem-03.tikz}\\
        = \input{./axioms/CCZCS3-lem-04.tikz}\\
        \eqeqref{axiom-d-cnot} \input{./axioms/CCZCS3-lem-05.tikz}\\
        \eqeqref{ax:PhaseGadgetComm} \input{./axioms/CCZCS3-lem-06.tikz}\\
        = \input{./axioms/CCZCS3-lem-R.tikz}
    \end{gather*}
\end{proof}

\begin{lemma}\label{ax:CCZ-CS-3}
    CubicPhase$_d$ $\vdash$ 
    \input{./axioms/CCZCS3.tikz} = \input{./axioms/CSCCZ3.tikz}
\end{lemma}

\begin{proof}
    \begin{gather*}
        \input{./axioms/CCZCS3.tikz}\\
        = \input{./axioms/CCZCS3-00.tikz}\\
        \eqdeuxeqref{ax:TCStop}{ax:TCSbot} \input{./axioms/CCZCS3-01.tikz}\\
        \eqeqref{ax:CS-CNOT-bot} \input{./axioms/CCZCS3-02.tikz}\\
        \eqeqref{ax:CS-CNOT-top} \input{./axioms/CCZCS3-03.tikz}\\
        \eqeqref{ax:CCZ-CS-3-lem} \input{./axioms/CCZCS3-04.tikz}\\
        \eqeqref{axiom-I} \input{./axioms/CCZCS3-05.tikz}\\
        \eqeqref{axiom-d-cnot} \input{./axioms/CCZCS3-06.tikz}\\
        \eqdeuxeqref{ax:CS-CNOT-bot}{ax:CS-CNOT-top} \input{./axioms/CCZCS3-07.tikz}\\
        \eqeqref{axiom-d-cnot} \input{./axioms/CCZCS3-08.tikz}\\
        \eqeqref{axiom-I} \input{./axioms/CCZCS3-09.tikz}\\
        = \input{./axioms/CSCCZ3.tikz}
    \end{gather*}
\end{proof}

\medskip\noindent\textbf{Cubic gadget cleanup and commutation package.}

\begin{lemma}\label{ax:GadgetCZmin}
    CubicPhase$_d$ $\vdash$ 
    \input{./axioms/GadgetCZmin.tikz} = \input{./axioms/GadgetCZminR.tikz}
\end{lemma}

\begin{proof}
    \begin{gather*}
        \input{./axioms/GadgetCZmin.tikz}\\
        = \input{./axioms/GadgetCZmin-00.tikz}\\
        \eqeqref{ax:PhaseGadgetComm} \input{./axioms/GadgetCZmin-01.tikz}\\
        \eqeqref{ax:SC-control} \input{./axioms/GadgetCZmin-02.tikz}\\
        \eqeqref{ax:PhaseGadgetT} \input{./axioms/GadgetCZmin-03.tikz}\\
        \eqeqref{ax:SC-control} \input{./axioms/GadgetCZmin-04.tikz}\\
        \eqeqref{ax:PhaseGadgetT} \input{./axioms/GadgetCZmin-05.tikz}\\
        = \input{./axioms/GadgetCZminR.tikz}
    \end{gather*}
\end{proof}

\begin{lemma}\label{ax:CSSC}
    CubicPhase$_d$ $\vdash$ 
    \input{./axioms/CSSC.tikz} = \input{./axioms/SCCS.tikz}
\end{lemma}

\begin{proof}
    \begin{gather*}
        \input{./axioms/CSSC.tikz}\\
        = \scalebox{0.8}{\input{./axioms/SCCS-00.tikz}}\\
        \eqtroiseqref{ax:TC-control}{ax:SC-control}{ax:ZC-control} \scalebox{0.8}{\input{./axioms/SCCS-01.tikz}}\\
        \eqtroiseqref{ax:TCZ}{ax:SCZ}{ax:ZCZ} \scalebox{0.8}{\input{./axioms/SCCS-02.tikz}}\\
        \eqdeuxeqref{ax:GadgetCZmin}{ax:CCZ-CS-3-lem-CZ} \scalebox{0.8}{\input{./axioms/SCCS-03.tikz}}\\
        \eqeqref{ax:PhaseGadgetT} \scalebox{0.8}{\input{./axioms/SCCS-04.tikz}}\\
        \eqtroiseqref{ax:TC-control}{ax:SC-control}{ax:ZC-control} \scalebox{0.8}{\input{./axioms/SCCS-05.tikz}}\\
        \eqeqref{ax:PhaseGadgetComm} \scalebox{0.8}{\input{./axioms/SCCS-06.tikz}}\\
        \eqeqref{ax:PhaseGadgetT} \scalebox{0.8}{\input{./axioms/SCCS-07.tikz}}\\
        \eqtroiseqref{ax:TC-control}{ax:SC-control}{ax:ZC-control} \scalebox{0.8}{\input{./axioms/SCCS-08.tikz}}\\
        \eqeqref{ax:PhaseGadgetT} \scalebox{0.8}{\input{./axioms/SCCS-09.tikz}}\\
        \eqeqref{axiom-d-cnot} \scalebox{0.8}{\input{./axioms/SCCS-10.tikz}}\\
        \eqeqref{ax:TM}\scalebox{0.8}{\input{./axioms/SCCS-11.tikz}}\\
        \eqdeuxeqref{axiom-mult1}{axiom-multxy} \scalebox{0.8}{\input{./axioms/SCCS-12.tikz}}\\
        \eqdeuxeqref{ax:TS}{ax:TZ}\scalebox{0.8}{ \input{./axioms/SCCS-13.tikz}}\\
        \eqdeuxeqref{axiom-mult1}{axiom-multxy}\scalebox{0.8}{ \input{./axioms/SCCS-14.tikz}}\\
        \eqeqref{ax:TM} \scalebox{0.8}{\input{./axioms/SCCS-15.tikz}}\\
        \eqeqref{axiom-d-cnot} \scalebox{0.8}{ \input{./axioms/SCCS-16.tikz}}\\
        \eqeqref{ax:PhaseGadgetT} \scalebox{0.8}{\input{./axioms/SCCS-17.tikz}}\\
        \eqeqref{ax:PhaseGadgetComm} \scalebox{0.8}{\input{./axioms/SCCS-18.tikz}}\\
        \eqdeuxeqref{ax:GadgetCZmin}{ax:CCZ-CS-3-lem-CZ}  \scalebox{0.8}{\input{./axioms/SCCS-19.tikz}}\\
        \eqtroiseqref{ax:TC-control}{ax:SC-control}{ax:ZC-control} \scalebox{0.8}{\input{./axioms/SCCS-20.tikz}}\\
        = \input{./axioms/SCCS.tikz}
    \end{gather*}
\end{proof}

\medskip\noindent\textbf{Commutation package.}
The preceding certificates prove the finite overlap cases for \cref{lem:phase-diagonal-commutation}.

\begin{proof}[Diagonal-commutation certificates for \cref{lem:phase-diagonal-commutation}]
\Cref{tab:diag-comm-lemmas} lists the non-scalar cubic family pairs; lower fragments are obtained by restriction.
Scalars, equal-family cases, disjoint supports, and surrounding symmetries are handled by the monoidal and PROP laws.
Cells with several references record distinct overlap or orientation cases, and lower-triangular cases are read in the reverse direction.
For \(\gCZ\) and \(\gCCZ\), the listed swap lemmas reduce wire-order cases; \(\gCS\) and \(\gSC\) remain oriented.
\end{proof}

The cubic transport certificates below cite this commutation package without further comment.

\subsection{Cubic Affine-Transport Certificates}\label{derivation:cubic-affine-transport}

The remaining certificates establish the cubic transport cases not covered by the quadratic package, citing \cref{lem:phase-diagonal-commutation,lem:quad-diag-affine-transport} directly.

\begin{lemma}\label{ax:TC-target}
    CubicPhase$_d$ $\vdash$ 
    \input{./axioms/TCNOTtop.tikz} = \input{./axioms/CNOTTtop.tikz}
\end{lemma}

\begin{proof}
    \begin{gather*}
        \input{./axioms/TCNOTtop.tikz}\\
        = \scalebox{0.8}{\input{./axioms/TCNOTtop-00.tikz}}\\
        \eqtroiseqref{ax:TC-control}{lem:phase-diagonal-commutation}{ax:T-order-d} \scalebox{0.8}{\input{./axioms/TCNOTtop-01.tikz}}\\
        \eqeqref{ax:GadgetCZmin}\scalebox{0.8}{ \input{./axioms/TCNOTtop-02.tikz}}\\
        \eqeqref{ax:CCZ-CS-3-lem-CZ} \scalebox{0.8}{\input{./axioms/TCNOTtop-03.tikz}}\\
        \eqeqref{lem:phase-diagonal-commutation} \scalebox{0.8}{\input{./axioms/TCNOTtop-04.tikz}}\\
        \eqeqref{ax:CZ-d} \scalebox{0.8}{\input{./axioms/TCNOTtop-05.tikz}}\\
        \eqeqref{ax:PhaseGadgetT} \scalebox{0.8}{\input{./axioms/TCNOTtop-06.tikz}}\\
        \eqeqref{lem:quad-diag-affine-transport} \scalebox{0.8}{\input{./axioms/TCNOTtop-07.tikz}}\\
        \eqeqref{ax:PhaseGadgetComm} \scalebox{0.8}{\input{./axioms/TCNOTtop-08.tikz}}\\
        \eqeqref{axiom-d-cnot} \scalebox{0.8}{\input{./axioms/TCNOTtop-09.tikz}}\\
        \eqeqref{ax:TM} \scalebox{0.8}{\input{./axioms/TCNOTtop-10.tikz}}\\
        \eqeqref{ax:S-order-d} \scalebox{0.8}{\input{./axioms/TCNOTtop-11.tikz}}\\
        \eqdeuxeqref{axiom-mult1}{axiom-multxy} \input{./axioms/TCNOTtop-12.tikz}\\
        \eqdeuxeqref{lem:phase-diagonal-commutation}{ax:T-order-d} \input{./axioms/TCNOTtop-13.tikz}\\
        \eqdeuxeqref{axiom-mult1}{axiom-multxy} \input{./axioms/TCNOTtop-14.tikz}\\
        \eqeqref{axiom-multcnot} \input{./axioms/TCNOTtop-15.tikz}\\
        \eqeqref{lem:quad-diag-affine-transport} \input{./axioms/TCNOTtop-16.tikz}\\
        \eqeqref{lem:quad-diag-affine-transport} \input{./axioms/TCNOTtop-17.tikz}\\
        \eqdeuxeqref{lem:quad-diag-affine-transport}{lem:phase-diagonal-commutation} \input{./axioms/TCNOTtop-18.tikz}\\
        \eqeqref{ax:Z-order-d} \input{./axioms/TCNOTtop-19.tikz}\\
        \eqeqref{ax:PhaseGadgetT} \input{./axioms/TCNOTtop-20.tikz}\\
        \eqeqref{lem:quad-diag-affine-transport} \input{./axioms/TCNOTtop-21.tikz}\\
        \eqdeuxeqref{ax:Z-order-d}{lem:phase-diagonal-commutation}\input{./axioms/TCNOTtop-22.tikz}\\
        \eqeqref{ax:S-order-d} \input{./axioms/TCNOTtop-23.tikz}\\
        \eqeqref{ax:CZ-SWAP} \input{./axioms/TCNOTtop-24.tikz}\\
        \eqeqref{lem:quad-diag-affine-transport}\input{./axioms/TCNOTtop-25.tikz}\\
        \eqeqref{ax:CCZ-CS-3-lem-CZ} \input{./axioms/TCNOTtop-26.tikz}\\
        \eqdeuxeqref{lem:phase-diagonal-commutation}{ax:CZ-d} \input{./axioms/TCNOTtop-27.tikz}\\
        \eqeqref{lem:quad-diag-affine-transport} \input{./axioms/TCNOTtop-28.tikz}\\
        \eqdeuxeqref{axiom-mult1}{axiom-multxy} \input{./axioms/TCNOTtop-29.tikz}\\
        \eqdeuxeqref{lem:quad-diag-affine-transport}{ax:S-order-d} \input{./axioms/TCNOTtop-30.tikz}\\
        \eqeqref{ax:PhaseGadgetT} \input{./axioms/TCNOTtop-31.tikz}\\
        \eqdeuxeqref{lem:quad-diag-affine-transport}{ax:S-order-d}  \input{./axioms/TCNOTtop-32.tikz}\\
        \eqdeuxeqref{lem:quad-diag-affine-transport}{ax:Z-order-d} \input{./axioms/TCNOTtop-33.tikz}\\
        \eqeqref{axiom-d-cnot} \input{./axioms/TCNOTtop-34.tikz}\\
        \eqeqref{ax:T-order-d} \input{./axioms/TCNOTtop-35.tikz}
        \eqeqref{ax:PhaseGadgetT} \input{./axioms/TCNOTtop-36.tikz}\\
        \eqdeuxeqref{ax:TC-control}{ax:T-order-d}  \input{./axioms/TCNOTtop-37.tikz}
        \eqeqref{axiom-d-cnot} \input{./axioms/CNOTTtop.tikz}
    \end{gather*}
\end{proof}

\medskip\noindent\textbf{One-wire cubic transport package.}

\begin{lemma}\label{lem:one-wire-cubic-diag-affine-transport}
In \(\cat{CubicPhase}_d\), every one-wire diagonal family \(G\in\{\gZ,\gS,\gT\}\) transports through \(\gX\), \(\gCX\), and \(\gM\), for every compatible choice of local supports.
\end{lemma}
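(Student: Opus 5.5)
The lemma is essentially a bookkeeping statement: the claim is a finite table of local cases (three one-wire families against three affine generator families, with placements), so I will prove it by exhibiting, for each case, either an axiom or an already derived certificate. Disjoint-support placements commute by the PROP interchange law. Placements related by a wire symmetry follow by conjugating with \(\gSWAP\). So only the overlapping cases need an argument.

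For \(\gZ\) and \(\gS\), all cases are already covered, since \cref{lem:quad-diag-affine-transport} holds in \(\cat{QuadPhase}_d\) and \(\cat{CubicPhase}_d\) extends that presentation.
\begin{itemize}
\item Translation: \eqref{ax:ZX} and \eqref{ax:SX}.
\item Multiplier: \eqref{ax:ZM} and \eqref{ax:SM}.
\item Controlled addition: \eqref{ax:ZC-target}, \cref{ax:ZC-control}, \cref{ax:SC-target}, and \eqref{ax:SC-control}, covering the control and target positions of \(\gCX\).
\end{itemize}
The derived diagonals on the right-hand sides (\(\gCZ\) from the \(S\)-target case) are the correct coordinates of the substituted phase by \cref{lem:binom-identities}.

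For \(\gT\), the cases are as follows.
\begin{itemize}
\item Translation: the axiom \eqref{ax:TX}, which is the circuit form of \eqref{eq:binom3-shift}.
\item Multiplier: the axiom \eqref{ax:TM}, which is the circuit form of \eqref{eq:binom3-scale}.
\item \(\gT\) on the control wire of \(\gCX\): the control label is unchanged, so this is the axiom \eqref{ax:TC-control}.
\item \(\gT\) on the target wire of \(\gCX\): the substitution \(y\mapsto x+y\) produces the mixed terms \(\binom{x}{2}y\) and \(x\binom{y}{2}\) of \eqref{eq:binom3-add}, which are represented by \(\gSC\) and \(\gCS\). This is exactly \cref{ax:TC-target}, just derived.
\end{itemize}
The iterated exponents \(\gT^{k}\) follow by applying the single-copy rule \(k\) times and then sorting and collecting with \cref{lem:phase-diagonal-commutation} and \cref{lem:phase-finite-order}. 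The \(d-1\) negative labels are treated as ordinary iterates.

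The main obstacle is the \(\gT\)-target case. Its right-hand side involves \(\gCS\) and \(\gSC\), which are defined through gadgets containing \(t=2^{-1}\) and \(\gCZ\), so matching the two sides is delicate. Since \cref{ax:TC-target} has already been certified, the only remaining work is to check that its orientation conventions match the placements required here. Any placement with the roles of the two wires swapped is obtained by conjugating \cref{ax:TC-target} with \(\gSWAP\) and exchanging \(\gCS\) with \(\gSC\), which is allowed by \cref{def:CubicPhase-derived-diagonals}.

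Soundness of each entry is a check of the identities in \cref{lem:binom3-identities}, via \cref{lem:LinPhase-soundness}, so no further semantic verification is needed.
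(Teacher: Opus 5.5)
Your proposal is correct and matches the paper's proof. The paper also inherits the $Z$ and $S$ cases from \cref{lem:quad-diag-affine-transport} and takes the $T$ cases to be \eqref{ax:TX}, \eqref{ax:TC-control}, \cref{ax:TC-target}, and \eqref{ax:TM}, with all placements supplied by the PROP laws. Your extra remarks on iterates and on swap-conjugated orientations agree with how the paper handles placements and finite powers elsewhere (see \cref{lem:phase-affine-layer-transport}).
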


\begin{proof}
For \(\gZ\) and \(\gS\), use \cref{lem:quad-diag-affine-transport}.
For \(\gT\), use \eqref{ax:TX}, \eqref{ax:TC-control}, \cref{ax:TC-target}, and \eqref{ax:TM}; tensor-context, symmetry, and disjoint-support instances are supplied by the PROP laws.
\end{proof}

\medskip\noindent\textbf{Multiplier transport rules.}
The multiplier transports \eqref{ax:CS-M-top}, \eqref{ax:CS-M-bot}, and
\eqref{ax:CCZ-M} are primitive cubic equations in Figure~\ref*{fig:CubicPhase_rules}.
They supply the multiplier entries of \cref{tab:diag-through-affine}.

\medskip\noindent\textbf{Translation transport for controlled cubic phases.}

\begin{lemma}\label{ax:CS-X-bot}
    CubicPhase$_d$ $\vdash$ 
    \input{./axioms/CSXbot.tikz} = \input{./axioms/XCSbot.tikz}
\end{lemma}

\begin{proof}
    \begin{gather*}
        \input{./axioms/CSXbot.tikz}
        = \input{./axioms/CSXbot-00.tikz}\\
        \eqeqref{lem:quad-diag-affine-transport} \input{./axioms/CSXbot-01.tikz}\\
        \eqeqref{ax:X-CX-target} \input{./axioms/CSXbot-02.tikz}\\
        \eqdeuxeqref{lem:one-wire-cubic-diag-affine-transport}{lem:phase-diagonal-commutation}\input{./axioms/CSXbot-03.tikz}\\
        \eqeqref{ax:X-CX-target} \input{./axioms/CSXbot-04.tikz}\\
        \eqdeuxeqref{lem:one-wire-cubic-diag-affine-transport}{lem:phase-diagonal-commutation} \input{./axioms/CSXbot-05.tikz}\\
        \eqeqref{ax:X-CX-target} \input{./axioms/CSXbot-06.tikz}\\
        \eqeqref{axiom-d-cnot} \input{./axioms/CSXbot-07.tikz}\\
        \eqeqref{ax:PhaseGadgetComm} \input{./axioms/CSXbot-08.tikz}\\
        = \input{./axioms/CSXbot-09.tikz}\\
        \eqeqref{ax:S-order-d} \input{./axioms/CSXbot-09b.tikz}\\
        \eqeqref{ax:CZexp} \input{./axioms/CSXbot-10.tikz}\\
        \eqdeuxeqref{axiom-mult1}{axiom-multxy} \input{./axioms/CSXbot-11.tikz}\\
        \eqeqref{axiom-multcnot} \input{./axioms/CSXbot-12.tikz}\\
        \eqeqref{lem:one-wire-cubic-diag-affine-transport} \input{./axioms/CSXbot-13.tikz}\\
        \eqeqref{axiom-d-cnot} \input{./axioms/CSXbot-14.tikz}
        \eqeqref{lem:quad-diag-affine-transport} \input{./axioms/CSXbot-15.tikz}\\
        \eqeqref{lem:phase-diagonal-commutation} \input{./axioms/CSXbot-16.tikz}
        \eqeqref{ax:S-order-d} \input{./axioms/CSXbot-17.tikz}\\
        \eqeqref{lem:one-wire-cubic-diag-affine-transport} \input{./axioms/CSXbot-18.tikz}
        \eqdeuxeqref{axiom-mult1}{axiom-multxy}  \input{./axioms/CSXbot-19.tikz}\\
        \eqeqref{ax:S-order-d} \input{./axioms/CSXbot-20.tikz}
        \eqdeuxeqref{ax:Z-order-d}{lem:phase-diagonal-commutation}  \input{./axioms/XCSbot.tikz}
    \end{gather*}
\end{proof}

\begin{lemma}\label{ax:CS-X-top}
    CubicPhase$_d$ $\vdash$ 
    \input{./axioms/CSXtop.tikz} = \input{./axioms/XCStop.tikz}
\end{lemma}

\begin{proof}
    \begin{gather*}
        \input{./axioms/CSXtop.tikz}\\
        = \input{./axioms/CSXtop-00.tikz}\\
        \eqeqref{lem:quad-diag-affine-transport} \input{./axioms/CSXtop-01.tikz}\\
        \eqeqref{lem:one-wire-cubic-diag-affine-transport} \input{./axioms/CSXtop-02.tikz}\\
        \eqdeuxeqref{lem:one-wire-cubic-diag-affine-transport}{lem:phase-diagonal-commutation} \input{./axioms/CSXtop-03.tikz}\\
        \eqdeuxeqref{lem:one-wire-cubic-diag-affine-transport}{lem:phase-diagonal-commutation} \input{./axioms/CSXtop-04.tikz}\\
        \eqeqref{axiom-XCNOTtop} \input{./axioms/CSXtop-05.tikz}\\
        \eqdeuxeqref{lem:one-wire-cubic-diag-affine-transport}{lem:phase-diagonal-commutation} \input{./axioms/CSXtop-06.tikz}\\
        \eqtroiseqref{axiom-mult1}{axiom-multxy}{axiom-multcnot2} \input{./axioms/CSXtop-07.tikz}\\
        \eqeqref{axiom-XCNOTtop} \input{./axioms/CSXtop-08.tikz}\\
        \eqeqref{axiom-XM} \input{./axioms/CSXtop-09.tikz}\\
        \eqtroiseqref{axiom-mult1}{axiom-multxy}{axiom-multcnot2}  \input{./axioms/CSXtop-10.tikz}\\
        \eqdeuxeqref{axiom-XCNOT}{axiom-Xd}  \input{./axioms/CSXtop-11.tikz}\\
        \eqeqref{axiom-XCNOTtop}  \input{./axioms/CSXtop-12.tikz}\\
        \eqtroiseqref{axiom-mult1}{axiom-multxy}{axiom-XM} \scalebox{0.8}{\input{./axioms/CSXtop-13.tikz}}\\
        \eqeqref{lem:one-wire-cubic-diag-affine-transport} \scalebox{0.8}{\input{./axioms/CSXtop-14.tikz}}\\
        \eqeqref{ax:S-order-d} \scalebox{0.8}{\input{./axioms/CSXtop-15.tikz}}\\
        \eqdeuxeqref{lem:one-wire-cubic-diag-affine-transport}{lem:phase-diagonal-commutation}\scalebox{0.8}{ \input{./axioms/CSXtop-16.tikz}}\\
        \eqdeuxeqref{lem:one-wire-cubic-diag-affine-transport}{ax:W-order-d} \scalebox{0.8}{\input{./axioms/CSXtop-17.tikz}}\\
        \eqeqref{lem:one-wire-cubic-diag-affine-transport} \scalebox{0.8}{\input{./axioms/CSXtop-18.tikz}}\\
        \eqeqref{lem:phase-diagonal-commutation} \scalebox{0.8}{\input{./axioms/CSXtop-19.tikz}}\\
        \eqeqref{lem:one-wire-cubic-diag-affine-transport} \scalebox{0.8}{\input{./axioms/CSXtop-20.tikz}}\\
        \eqeqref{ax:S-order-d} \scalebox{0.8}{\input{./axioms/CSXtop-21.tikz}}\\
        \eqeqref{lem:one-wire-cubic-diag-affine-transport} \scalebox{0.8}{\input{./axioms/CSXtop-22.tikz}}\\
        \eqeqref{lem:one-wire-cubic-diag-affine-transport} \scalebox{0.8}{\input{./axioms/CSXtop-23.tikz}}\\
        \eqeqref{ax:Z-order-d} \scalebox{0.8}{\input{./axioms/CSXtop-24.tikz}}\\
        \eqeqref{ax:S-order-d} \scalebox{0.8}{\input{./axioms/CSXtop-25.tikz}}\\
        \eqtroiseqref{axiom-mult1}{axiom-multxy}{axiom-XM} \input{./axioms/CSXtop-26.tikz}\\
        \eqeqref{axiom-XCNOTtop} \input{./axioms/CSXtop-27.tikz}\\
        \eqdeuxeqref{axiom-XCNOT}{axiom-Xd} \input{./axioms/CSXtop-28.tikz}\\
        \eqdeuxeqref{lem:phase-diagonal-commutation}{lem:one-wire-cubic-diag-affine-transport} \input{./axioms/CSXtop-29.tikz}\\
        \eqeqref{lem:one-wire-cubic-diag-affine-transport} \input{./axioms/CSXtop-30.tikz}\\
        \eqeqref{lem:phase-diagonal-commutation} \input{./axioms/CSXtop-31.tikz}\\
        \eqeqref{axiom-d-cnot}\input{./axioms/CSXtop-32.tikz}\\
        \eqtroiseqref{lem:phase-diagonal-commutation}{lem:one-wire-cubic-diag-affine-transport}{ax:S-order-d}  \scalebox{0.8}{\input{./axioms/CSXtop-33.tikz}}\\
        \eqeqref{ax:S-order-d} \scalebox{0.8}{\input{./axioms/CSXtop-34.tikz}}\\
        \eqeqref{ax:CZexp} \scalebox{0.9}{\input{./axioms/CSXtop-35.tikz}}\\
        \eqdeuxeqref{axiom-mult1}{axiom-multxy} \scalebox{0.9}{\input{./axioms/CSXtop-36.tikz}}\\
        \eqeqref{axiom-multcnot} \scalebox{0.9}{\input{./axioms/CSXtop-37.tikz}}\\
        \eqeqref{lem:one-wire-cubic-diag-affine-transport} \scalebox{0.9}{\input{./axioms/CSXtop-38.tikz}}\\
        \eqeqref{ax:CZexp} \scalebox{0.9}{\input{./axioms/CSXtop-39.tikz}}\\
        \eqeqref{lem:one-wire-cubic-diag-affine-transport} \scalebox{0.9}{\input{./axioms/CSXtop-40.tikz}}\\
        \eqeqref{lem:quad-diag-affine-transport}\scalebox{0.9}{\input{./axioms/CSXtop-41.tikz}}\\
        \eqdeuxeqref{lem:phase-diagonal-commutation}{ax:CZ-d}\scalebox{0.9}{\input{./axioms/CSXtop-42.tikz}}\\
        \eqeqref{ax:S-order-d} \scalebox{0.9}{\input{./axioms/CSXtop-43.tikz}}\\
        \eqdeuxeqref{lem:one-wire-cubic-diag-affine-transport}{ax:Z-order-d}\scalebox{0.9}{\input{./axioms/CSXtop-44.tikz}}\\
        \eqeqref{ax:PhaseGadgetT} \scalebox{0.9}{\input{./axioms/CSXtop-45.tikz}}\\
        \eqtroiseqref{lem:one-wire-cubic-diag-affine-transport}{ax:Z-order-d}{lem:phase-diagonal-commutation} \scalebox{0.9}{\input{./axioms/CSXtop-46.tikz}}\\
        \eqeqref{lem:one-wire-cubic-diag-affine-transport}  \scalebox{0.9}{\input{./axioms/CSXtop-47.tikz}}\\
        \eqeqref{ax:PhaseGadgetT} \scalebox{0.9}{\input{./axioms/CSXtop-48.tikz}}\\
        = \scalebox{0.9}{\input{./axioms/XCStop.tikz}}\\
    \end{gather*}
    \endgroup
\end{proof}

\begin{lemma}\label{ax:CCZ-X}
    CubicPhase$_d$ $\vdash$ 
    \input{./axioms/XCCZ.tikz} = \input{./axioms/CCZX.tikz}
\end{lemma}

\begin{proof}
    \begin{gather*}
        \input{./axioms/XCCZ.tikz}\\
        = \input{./axioms/CCZX-00.tikz}\\
        \eqdeuxeqref{lem:one-wire-cubic-diag-affine-transport}{axiom-XCNOTtop}  \scalebox{0.8}{\input{./axioms/CCZX-01.tikz}}\\
        \eqeqref{axiom-d-cnot} \scalebox{0.8}{\input{./axioms/CCZX-02.tikz}}\\
        \eqeqref{ax:PhaseGadgetS} \scalebox{0.8}{\input{./axioms/CCZX-03.tikz}}\\
        \eqeqref{lemma-XX2-comm} \scalebox{0.8}{\input{./axioms/CCZX-04.tikz}}\\
        \eqeqref{ax:PhaseGadgetS} \scalebox{0.8}{\input{./axioms/CCZX-05.tikz}}\\
        \eqeqref{lemma-XX2-comm}  \scalebox{0.8}{\input{./axioms/CCZX-06.tikz}}\\
        \eqeqref{axiom-d-cnot} \scalebox{0.8}{\input{./axioms/CCZX-07.tikz}}\\
        \eqeqref{lem:one-wire-cubic-diag-affine-transport} \scalebox{0.8}{\input{./axioms/CCZX-08.tikz}}\\
        \eqeqref{axiom-d-cnot} \scalebox{0.8}{\input{./axioms/CCZX-09.tikz}}\\
        \eqeqref{axiom-I} \scalebox{0.8}{\input{./axioms/CCZX-10.tikz}}\\
        \eqeqref{axiom-d-cnot} \scalebox{0.8}{\input{./axioms/CCZX-11.tikz}}\\
        \eqeqref{ax:PhaseGadgetS} \scalebox{0.8}{\input{./axioms/CCZX-12.tikz}}\\
        \eqeqref{lem:one-wire-cubic-diag-affine-transport} \scalebox{0.8}{\input{./axioms/CCZX-13.tikz}}\\
        \eqeqref{ax:PhaseGadgetS} \scalebox{0.8}{\input{./axioms/CCZX-15.tikz}}\\
        \eqdeuxeqref{axiom-d-cnot}{axiom-d-cnot} \scalebox{0.8}{\input{./axioms/CCZX-17.tikz}}\\
        \eqeqref{axiom-I}\scalebox{0.8}{\input{./axioms/CCZX-18.tikz}}\\
        \eqeqref{lemma-XX2-comm}\scalebox{0.8}{\input{./axioms/CCZX-19.tikz}}\\
        \eqdeuxeqref{ax:PhaseGadgetS}{ax:PhaseGadgetT}  \scalebox{0.8}{\input{./axioms/CCZX-20.tikz}}\\
        \eqeqref{lemma-XX2-comm}  \scalebox{0.8}{\input{./axioms/CCZX-21.tikz}}\\
        \eqdeuxeqref{ax:PhaseGadgetS}{ax:PhaseGadgetT} \scalebox{0.8}{\input{./axioms/CCZX-22.tikz}}\\
        \eqeqref{lemma-XX2-comm}  \scalebox{0.8}{\input{./axioms/CCZX-23.tikz}}\\
        \eqeqref{ax:T-order-d} \scalebox{0.8}{\input{./axioms/CCZX-24.tikz}}\\
        = \input{./axioms/CCZX-25.tikz}\\
        \eqeqref{axiom-d-cnot} \input{./axioms/CCZX-26.tikz}\\
        \eqeqref{ax:S-order-d} \input{./axioms/CCZX-27.tikz}\\
        \eqeqref{lem:quad-diag-affine-transport} \input{./axioms/CCZX-28.tikz}\\
        \eqeqref{axiom-d-cnot} \input{./axioms/CCZX-29.tikz}\\
        \eqdeuxeqref{lem:phase-diagonal-commutation}{lem:one-wire-cubic-diag-affine-transport} \input{./axioms/CCZX-30.tikz}\\
        = \input{./axioms/CCZX-31.tikz}\\
        \eqdeuxeqref{ax:S-order-d}{axiom-d-cnot} \input{./axioms/CCZX-32.tikz}\\
        \eqdeuxeqref{ax:T-order-d}{lem:phase-diagonal-commutation} \input{./axioms/CCZX-33.tikz}
        = \input{./axioms/CCZX.tikz}
    \end{gather*}
\end{proof}

\medskip\noindent\textbf{Controlled affine transport for controlled cubic phases.}

\begin{lemma}\label{ax:CS-CNOT-rev}
    CubicPhase$_d$ $\vdash$
    \input{./axioms/CSCNOTrev.tikz}=\input{./axioms/CNOTCSrev.tikz}
\end{lemma}

\begin{proof}
    \begin{gather*}
        \input{./axioms/CSCNOTrev.tikz}\\
        = \input{./axioms/CSCNOTrev-00.tikz}\\
        \eqdeuxeqref{lem:phase-diagonal-commutation}{lem:quad-diag-affine-transport} \input{./axioms/CSCNOTrev-01.tikz}\\
        \eqeqref{lem:one-wire-cubic-diag-affine-transport} \input{./axioms/CSCNOTrev-02.tikz}\\
        \eqeqref{ax:Z-order-d} \input{./axioms/CSCNOTrev-03.tikz}\\
        \eqeqref{lem:one-wire-cubic-diag-affine-transport} \input{./axioms/CSCNOTrev-04.tikz}\\
        \eqeqref{lem:one-wire-cubic-diag-affine-transport}  \input{./axioms/CSCNOTrev-05.tikz}\\
        \eqeqref{ax:CZ-d} \input{./axioms/CSCNOTrev-06.tikz}\\
        \eqeqref{ax:S-order-d} \input{./axioms/CSCNOTrev-07.tikz}\\
        \eqeqref{ax:PhaseGadgetT} \input{./axioms/CSCNOTrev-08.tikz}\\
        \eqeqref{axiom-multcnot} \input{./axioms/CSCNOTrev-09.tikz}\\
        \eqeqref{lem:one-wire-cubic-diag-affine-transport} \input{./axioms/CSCNOTrev-10.tikz}\\
        \eqeqref{axiom-multcnot} \input{./axioms/CSCNOTrev-11.tikz}\\
        \eqeqref{axiom-d-cnot} \input{./axioms/CSCNOTrev-12.tikz}\\
        \eqeqref{ax:CS-M-bot} \input{./axioms/CSCNOTrev-13.tikz}\\
        \eqeqref{ax:CZ-d} \input{./axioms/CSCNOTrev-14.tikz}\\
        \eqeqref{ax:CS-order-d} \input{./axioms/CSCNOTrev-15.tikz}\\
        \eqeqref{ax:CS-M-top} \input{./axioms/CSCNOTrev-16.tikz}\\
        \eqeqref{ax:CS-order-d} \input{./axioms/CSCNOTrev-17.tikz}\\
        \eqeqref{ax:T-order-d} \input{./axioms/CSCNOTrev-18.tikz}\\
        \eqeqref{lem:one-wire-cubic-diag-affine-transport} \input{./axioms/CSCNOTrev-19.tikz}\\
        \eqeqref{ax:T-order-d} \input{./axioms/CSCNOTrev-20.tikz}\\
        \eqeqref{lem:one-wire-cubic-diag-affine-transport} \input{./axioms/CSCNOTrev-21.tikz}\\
        \eqdeuxeqref{axiom-multxy}{axiom-mult1} \input{./axioms/CSCNOTrev-22.tikz}\\
        \eqeqref{ax:T-order-d} \input{./axioms/CSCNOTrev-23.tikz}
        \eqeqref{ax:S-order-d} \input{./axioms/CNOTCSrev.tikz}
    \end{gather*}
\end{proof}

\begin{lemma}\label{ax:CS-CNOT}
    CubicPhase$_d$ $\vdash$
    \input{./axioms/CSCNOT.tikz} = \input{./axioms/CNOTCS.tikz}
\end{lemma}

\begin{proof}
    \begin{gather*}
        \input{./axioms/CSCNOT.tikz}\\
        = \input{./axioms/CSCNOT-00.tikz}\\
        \eqeqref{lem:quad-diag-affine-transport} \input{./axioms/CSCNOT-01.tikz}\\
        \eqeqref{lem:one-wire-cubic-diag-affine-transport} \input{./axioms/CSCNOT-02.tikz}\\
        \eqeqref{ax:Z-order-d} \input{./axioms/CSCNOT-03.tikz}\\
        \eqeqref{ax:S-order-d} \input{./axioms/CSCNOT-04.tikz}\\
        \eqdeuxeqref{axiom-multcnot}{axiom-multcnot} \input{./axioms/CSCNOT-06.tikz}\\
        \eqeqref{lem:one-wire-cubic-diag-affine-transport} \input{./axioms/CSCNOT-07.tikz}\\
        \eqeqref{axiom-d-cnot} \input{./axioms/CSCNOT-08.tikz}\\
        \eqeqref{ax:T-order-d} \input{./axioms/CSCNOT-09.tikz}\\
        \eqeqref{ax:CS-M-bot} \input{./axioms/CSCNOT-10.tikz}\\
        \eqeqref{ax:CS-order-d} \input{./axioms/CSCNOT-11.tikz}\\
        \eqeqref{ax:CS-M-top} \input{./axioms/CSCNOT-12.tikz}\\
        \eqeqref{ax:CZ-d} \input{./axioms/CSCNOT-13.tikz}\\
        \eqeqref{ax:CS-order-d} \input{./axioms/CSCNOT-14.tikz}\\
        \eqeqref{lem:one-wire-cubic-diag-affine-transport} \input{./axioms/CSCNOT-15.tikz}\\
        \eqdeuxeqref{axiom-mult1}{axiom-multxy} \input{./axioms/CSCNOT-16.tikz}
        \eqdeuxeqref{ax:T-order-d}{ax:S-order-d} \input{./axioms/CNOTCS.tikz}
    \end{gather*}
\end{proof}

\begin{lemma}\label{ax:CCZ-CNOT}
    CubicPhase$_d$ $\vdash$
    \input{./axioms/CCZCNOT.tikz} = \input{./axioms/CNOTCCZ.tikz}
\end{lemma}

\begin{proof}
    \begin{gather*}
        \input{./axioms/CCZCNOT.tikz}\\
        = \input{./axioms/CCZCNOT-00.tikz}\\
        \eqdeuxeqref{axiom-d-cnot}{lemma-XX2-comm} \input{./axioms/CCZCNOT-01.tikz}\\
        \eqeqref{lem:one-wire-cubic-diag-affine-transport} \input{./axioms/CCZCNOT-02.tikz}\\
        \eqeqref{lem:one-wire-cubic-diag-affine-transport} \scalebox{0.9}{\input{./axioms/CCZCNOT-03.tikz}}\\
        \eqeqref{axiom-d-cnot} \input{./axioms/CCZCNOT-04.tikz}\\
        \eqeqref{lem:one-wire-cubic-diag-affine-transport} \input{./axioms/CCZCNOT-05.tikz}\\
        \eqeqref{lem:one-wire-cubic-diag-affine-transport} \scalebox{0.9}{\input{./axioms/CCZCNOT-06.tikz}}\\
        \eqeqref{axiom-d-cnot} \input{./axioms/CCZCNOT-07.tikz}\\
        \eqdeuxeqref{lem:one-wire-cubic-diag-affine-transport}{ax:T-order-d} \input{./axioms/CCZCNOT-08.tikz}\\
        \eqeqref{lem:one-wire-cubic-diag-affine-transport} \scalebox{0.9}{\input{./axioms/CCZCNOT-09.tikz}}\\
        \eqeqref{axiom-d-cnot} \input{./axioms/CCZCNOT-10.tikz}\\
        \eqeqref{ax:CS-CNOT-bot2} \input{./axioms/CCZCNOT-11.tikz}\\
        \eqeqref{ax:CS-CNOT-bot} \scalebox{0.9}{\input{./axioms/CCZCNOT-12.tikz}}\\
        \eqeqref{lem:one-wire-cubic-diag-affine-transport} \scalebox{0.85}{\input{./axioms/CCZCNOT-13.tikz}}\\
        \eqeqref{axiom-d-cnot} \scalebox{0.9}{\input{./axioms/CCZCNOT-14.tikz}}\\
        \eqdeuxeqref{lem:one-wire-cubic-diag-affine-transport}{ax:T-order-d} \input{./axioms/CCZCNOT-15.tikz}\\
        \eqdeuxeqref{ax:CS-order-d}{ax:CS-order-d} \input{./axioms/CCZCNOT-17.tikz}\\
        \eqeqref{lem:one-wire-cubic-diag-affine-transport} \input{./axioms/CCZCNOT-18.tikz}\\
        \eqeqref{axiom-d-cnot} \input{./axioms/CCZCNOT-19.tikz}\\
        \eqeqref{ax:T-order-d} \input{./axioms/CCZCNOT-20.tikz}\\
        \eqtroiseqref{axiom-mult1}{axiom-multxy}{axiom-multcnot} \input{./axioms/CCZCNOT-21.tikz}\\
        \eqeqref{ax:CS-CNOT-bot2} \input{./axioms/CCZCNOT-22.tikz}\\
        \eqeqref{ax:CS-CNOT-bot} \input{./axioms/CCZCNOT-23.tikz}\\
        \eqeqref{axiom-d-cnot} \input{./axioms/CCZCNOT-24.tikz}\\
        \eqeqref{ax:CCZ-M} \input{./axioms/CCZCNOT-25.tikz}\\
        \eqeqref{ax:CS-M-bot} \input{./axioms/CCZCNOT-26.tikz}\\
        \eqeqref{ax:CS-M-top}  \input{./axioms/CCZCNOT-27.tikz}\\
        \eqdeuxeqref{axiom-mult1}{axiom-multxy} \input{./axioms/CCZCNOT-28.tikz}\\
        \eqeqref{ax:CCZ-d} \input{./axioms/CCZCNOT-29.tikz}\\
        \eqtroiseqref{ax:CS-order-d}{ax:CS-order-d}{ax:CS-order-d} \input{./axioms/CNOTCCZ.tikz}
    \end{gather*}
\end{proof}

\subsection{Finite Case Tables for Local Certificates}\label{derivation:transport-tables}

\Cref{tab:diag-comm-lemmas,tab:diag-through-affine} index the finite local certificates used in \cref{lem:phase-diagonal-commutation,lem:phase-one-step-transport}.

\begin{proof}[Transport certificates for \cref{lem:phase-one-step-transport}]
\Cref{tab:diag-through-affine} lists the non-scalar cubic families against \(\gX\), \(\gCX\), and \(\gM\); lower fragments are obtained by restriction, and \(\gW\) is omitted because scalar transport is centrality.
Entries with several references record the required support or orientation cases.
Entries in the \(\gSC\) column that cite \(\gCS\)-labelled lemmas are read after unfolding \(\gSC\) as the conjugate of \(\gCS\) by the two-wire symmetry.
\end{proof}

\newcommand{\tblcell}[1]{\begin{tabular}[c]{@{}c@{}}#1\end{tabular}}

\begin{table}[p]
\centering
\scriptsize
\renewcommand{\arraystretch}{1.35}
\begin{tabular}{c|*{7}{>{\centering\arraybackslash}p{1.1cm}}}
  & $\gZ$ & $\gS$ & $\gT$ & $\gCZ$ & $\gCS$ & $\gSC$ & $\gCCZ$ \\
\hline
$\gZ$
  & --- 
  & $\eqref{ax:SZ}$
  & $\eqref{ax:TZ}$
  & $\eqref{ax:ZCZ}$
  & \tblcell{$\eqref{ax:ZCStop}$\\$\eqref{ax:ZCSbot}$}
  & \tblcell{$\eqref{ax:ZCStop}$\\$\eqref{ax:ZCSbot}$}
  & $\eqref{ax:ZCCZ}$
\\
\hline
$\gS$
  & 
  & --- 
  & $\eqref{ax:TS}$
  & $\eqref{ax:SCZ}$
  & \tblcell{$\eqref{ax:SCStop}$\\$\eqref{ax:SCSbot}$}
  & \tblcell{$\eqref{ax:SCStop}$\\$\eqref{ax:SCSbot}$}
  & $\eqref{ax:SCCZ}$
\\
\hline
$\gT$
  & 
  & 
  & --- 
  & $\eqref{ax:TCZ}$
  & \tblcell{$\eqref{ax:TCStop}$\\$\eqref{ax:TCSbot}$}
  & \tblcell{$\eqref{ax:TCStop}$\\$\eqref{ax:TCSbot}$}
  & $\eqref{ax:TCCZ}$
\\
\hline
$\gCZ$
  & 
  & 
  & 
  & \tblcell{$\eqref{ax:CZCZ}$}
  & \tblcell{$\eqref{ax:CSCZ}$\\$\eqref{ax:CSCZtop}$\\$\eqref{ax:CSCZbot}$}
  & \tblcell{$\eqref{ax:CSCZ}$\\$\eqref{ax:CSCZtop}$\\$\eqref{ax:CSCZbot}$}
  & \tblcell{$\eqref{ax:CCZCZ}$\\$\eqref{ax:CCZCZtop}$}
\\
\hline
$\gCS$
  & 
  & 
  & 
  & 
  & \tblcell{$\eqref{ax:CSCS}$\\$\eqref{ax:CSCS2}$\\$\eqref{ax:CSCS3}$}
  & \tblcell{$\eqref{ax:CSSC}$\\$\eqref{ax:CSCS}$\\$\eqref{ax:CSCS2}$\\$\eqref{ax:CSCS3}$}
  & \tblcell{$\eqref{ax:CCZ-CS-1}$\\$\eqref{ax:CCZ-CS-2}$\\$\eqref{ax:CCZ-CS-3}$}
\\
\hline
$\gSC$
  & 
  & 
  & 
  & 
  & 
  & \tblcell{$\eqref{ax:CSCS}$\\$\eqref{ax:CSCS2}$\\$\eqref{ax:CSCS3}$}
  & \tblcell{$\eqref{ax:CCZ-CS-1}$\\$\eqref{ax:CCZ-CS-2}$\\$\eqref{ax:CCZ-CS-3}$}
\\
\hline
$\gCCZ$
  & 
  & 
  & 
  & 
  & 
  & 
  & \tblcell{$\eqref{ax:CCZCCZ}$\\$\eqref{ax:CCZCCZ2}$}
\end{tabular}
\caption{Finite case table for diagonal commutation. Entries cite the displayed axiom or derived lemma for each non-scalar cubic family pair; multiple entries record distinct overlap or orientation cases.}
\label{tab:diag-comm-lemmas}
\end{table}

\begin{table}[p]
\centering
\scriptsize
\renewcommand{\arraystretch}{1.35}
\begin{tabular}{c|*{7}{>{\centering\arraybackslash}p{1.1cm}}}
  & $\gZ$ & $\gS$ & $\gT$ & $\gCZ$ & $\gCS$ & $\gSC$ & $\gCCZ$ \\
\hline
$\gX$
  & $\eqref{ax:ZX}$
  & $\eqref{ax:SX}$
  & $\eqref{ax:TX}$
  & $\eqref{ax:CZ-X}$
  & \tblcell{$\eqref{ax:CS-X-top}$\\$\eqref{ax:CS-X-bot}$}
  & \tblcell{$\eqref{ax:CS-X-top}$\\$\eqref{ax:CS-X-bot}$}
  & $\eqref{ax:CCZ-X}$
\\
\hline
$\gCX$
  & \tblcell{$\eqref{ax:ZC-target}$\\$\eqref{ax:ZC-control}$}
  & \tblcell{$\eqref{ax:SC-control}$\\$\eqref{ax:SC-target}$}
  & \tblcell{$\eqref{ax:TC-control}$\\$\eqref{ax:TC-target}$}
  & \tblcell{$\eqref{ax:CZ-CX}$\\$\eqref{ax:CC-control}$\\$\eqref{ax:CC-target}$}
  & \tblcell{$\eqref{ax:CS-CNOT}$\\$\eqref{ax:CS-CNOT-rev}$\\$\eqref{ax:CS-CNOT-bot}$\\$\eqref{ax:CS-CNOT-bot2}$\\$\eqref{ax:CS-CNOT-top}$\\$\eqref{ax:CS-CNOT-top2}$}
  & \tblcell{$\eqref{ax:CS-CNOT}$\\$\eqref{ax:CS-CNOT-rev}$\\$\eqref{ax:CS-CNOT-bot}$\\$\eqref{ax:CS-CNOT-bot2}$\\$\eqref{ax:CS-CNOT-top}$\\$\eqref{ax:CS-CNOT-top2}$}
  & \tblcell{$\eqref{ax:CCZ-CNOT}$\\$\eqref{ax:CCZ-CNOTtop}$\\$\eqref{ax:CCZ-CNOTbot}$}
\\
\hline
$\gM$
  & $\eqref{ax:ZM}$
  & $\eqref{ax:SM}$
  & $\eqref{ax:TM}$
  & $\eqref{ax:CZ-M}$
  & \tblcell{$\eqref{ax:CS-M-top}$\\$\eqref{ax:CS-M-bot}$}
  & \tblcell{$\eqref{ax:CS-M-top}$\\$\eqref{ax:CS-M-bot}$}
  & $\eqref{ax:CCZ-M}$
\end{tabular}
\caption{Finite case table for diagonal-affine transport. Entries cite the displayed axiom or derived lemma for commuting a non-scalar cubic family past one placed affine generator; multiple entries record target/control, top/bottom, or orientation cases.}
\label{tab:diag-through-affine}
\end{table}

\clearpage

\section{Group theoretic presentations for reversible linear and affine transformations in prime dimension \texorpdfstring{\(d\)}{d}}
\label{sec:cnot-prime-d}

Fix a prime \(d\).
The affine circuit presentation rests on a group-theoretic calculation over \(\F_d\).
The relation family in \cref{def:lin-aff-relations} is stated group-theoretically; its instances are the circuit rules and diagrammatic axioms used by the affine fragment.
From this list we obtain complete presentations of \(\SL_n(\F_d)\), \(\GL_n(\F_d)\), and \(\AGL_n(\F_d)\) by restricting generators and relations.

On computational basis labels \(x\in\F_d^n\), an affine reversible circuit acts by \(x\mapsto Ax+b\) with \(A\in\GL_n(\F_d)\) and \(b\in\F_d^n\).
The linear part \(A\) lies in \(\SL_n(\F_d)\) when \(\det(A)=1\), and in \(\GL_n(\F_d)\) in general; adjoining translations gives \(\AGL_n(\F_d)\cong \F_d^n\rtimes \GL_n(\F_d)\).
When \(d=2\), one has \(\F_2^\times=\{1\}\), hence \(\GL_n(\F_2)=\SL_n(\F_2)\).
For odd prime \(d\), diagonal scalings must be included explicitly.

\subsection{Conventions and notation}
\label{subsec:lin-affine-conventions}

Identify \(\F_d\) with \(\Z/d\Z\), and write \(\F_d^\times\) for its multiplicative group.
Vectors in \(\F_d^n\) are column vectors and matrices act on the left.
Write \(I_n\) for the \(n\times n\) identity matrix.
For \(1\le i,j\le n\), let \(e_{i,j}\) be the matrix unit with a single \(1\) in row \(i\), column \(j\).

\begin{remark}\label{rem:exp-convention}
If \(g\) satisfies \(g^d=\id\) and \(t\in\F_d\), we write \(g^t:=g^{\tilde t}\) where \(\tilde t\in\{0,\dots,d-1\}\) is the chosen integer representative.
For \(a,b\in\F_d\) one has \(g^{a+b}=g^ag^b\) since exponents are computed modulo \(d\).
We state relations of the form \(g^d=\id\) explicitly because the diagrammatic translation needs named finite-order rules for exponent reduction.
\end{remark}

\subsection{Generators in matrices}

\begin{definition}\label{def:CX}
For distinct indices \(i\neq j\), let \(\CNOT_{ij}\in\GL_n(\F_d)\) be the elementary transvection that sends \(x_j\mapsto x_j+x_i\) and fixes \(x_k\) for \(k\neq j\).
Equivalently, \(\CNOT_{ij}=I_n+e_{j,i}\).
\end{definition}

\begin{definition}\label{def:M}
For \(k\in\F_d^\times\), let \(\M_1(k)=\mathrm{diag}(k,1,\dots,1)\in\GL_n(\F_d)\).
\end{definition}

\begin{definition}\label{def:X-gate}
For \(1\le i\le n\), let \(X_i\in\AGL_n(\F_d)\) be translation in the \(i\)th coordinate, \(x_i\mapsto x_i+1\), fixing the other coordinates.
Equivalently, \(X_i=(I_n,e_i)\) where \(e_i\) is the \(i\)th standard basis vector.
\end{definition}

\subsection{A single list of relations}

Fix \(n\ge 1\).
Let \(\{x_{ij}\mid 1\le i\neq j\le n\}\), \(\{m(k)\mid k\in\F_d^\times\}\), and \(\{t_i\mid 1\le i\le n\}\) be abstract generators.
For the relation list, all parameters \(t,u,b_i\) are in \(\F_d\), and all products \(ab\), \(k\ell\) are computed in \(\F_d\).

\begin{definition}\label{def:lin-aff-relations}
Write \(\mathsf{R}_{\SL}(n,d)\) for the following relations on the generators \(x_{ij}\):
\begin{align}
  x_{ij}^{d} &= \id && (i\neq j),
  \tag{Stein1}\label{ax:add}\\
  x_{ik}^{tu}\,x_{ij}^{t}\,x_{jk}^{u} &= x_{jk}^{u}\,x_{ij}^{t}
  && (i,j,k \text{ distinct}),
  \tag{Stein2}\label{ax:stein1}\\
  x_{ij}\,x_{k\ell} &= x_{k\ell}\,x_{ij}
  && (i\neq \ell,\ j\neq k).
  \tag{Stein3}\label{ax:stein2}
\end{align}
When \(n=2\), the commutator relation \eqref{ax:stein1} is unavailable because it requires three distinct indices, and \eqref{ax:stein2} does not relate the two opposite root subgroups generated by \(x_{12}\) and \(x_{21}\).
We therefore add the usual rank-\(1\) torus and Weyl relations.
For \(a\in\F_d^\times\), define the words \(w(a)\coloneqq x_{12}^{-a^{-1}}\,x_{21}^{a}\,x_{12}^{-a^{-1}}\) and \(h(a)\coloneqq w(a)\,w(1)^{-1}\).
Under \(\Phi^\mathrm{SL}\) these are the matrices
\begin{equation*}
  w(a)=
  \begin{psmallmatrix}
  0 & a\\
  -a^{-1} & 0
  \end{psmallmatrix},
  \qquad
  h(a)=
  \begin{psmallmatrix}
  a & 0\\
  0 & a^{-1}
  \end{psmallmatrix}.
\end{equation*}
We impose
\begin{align}
  h(a)\,h(b) &= h(ab) && (a,b\in\F_d^\times),
  \tag{Torus}\label{ax:torus}\\
  w(1)\,x_{21}\,w(1)^{-1} &= x_{12}^{-1}.
  \tag{Weyl}\label{ax:weyl}
\end{align}

Write \(\mathsf{R}_{\GL}(n,d)\) for \(\mathsf{R}_{\SL}(n,d)\) together with the relations involving \(m(\cdot)\):
\begin{align}
  m(1) &= \id,
  \tag{mult1}\label{ax:mult1}\\
  m(k)\,m(\ell) &= m(k\ell)
  && (k,\ell\in \F_{d}^\times),
  \tag{multM}\label{ax:mult}\\
  m(x)\,x_{i1} &= x_{i1}^{x}\,m(x)
  && (i\ne 1,\ x\in \F_{d}^\times),
  \tag{MCx}\label{ax:diag-on-target}\\
  m(x)\,x_{1j}^{x} &= x_{1j}\,m(x)
  && (j\ne 1,\ x\in \F_{d}^\times),
  \tag{MxC}\label{ax:diag-on-control}\\
  m(x)\,x_{ij} &= x_{ij}\,m(x)
  && (1\notin\{i,j\},\ x\in \F_{d}^\times).
  \tag{MCsep}\label{ax:diag-on-other}
\end{align}

Write \(\mathsf{R}_{\AGL}(n,d)\) for \(\mathsf{R}_{\GL}(n,d)\) together with the relations involving \(t_i\):
\begin{align}
  t_i^d &= \id && (1\le i\le n),
  \tag{Xd}\label{ax:X-order-d}\\
  t_i\,t_j &= t_j\,t_i && (i\neq j),
  \tag{Xcom}\label{ax:X-commute}\\
  m(k)\,t_1 &= t_1^{k}\,m(k) && (k\in\F_d^\times),
  \tag{XM}\label{ax:X-M-same-wire}\\
  m(k)\,t_i &= t_i\,m(k) && (i\neq 1,\ k\in\F_d^\times),
  \tag{XMsep}\label{ax:X-M-other-wire}\\
  t_j\,x_{ij} &= x_{ij}\,t_j && (i\neq j),
  \tag{XCx}\label{ax:X-CX-target}\\
  x_{ij}\,t_i &= t_i\,t_j\,x_{ij} && (i\neq j),
  \tag{XxC}\label{ax:X-CX-control}\\
  t_s\,x_{ij} &= x_{ij}\,t_s && (s\notin\{i,j\}).
  \tag{XCxsep}\label{ax:X-CX-other-wire}
\end{align}
\end{definition}

\subsection{Three abstract groups and three comparison maps}

\begin{definition}\label{def:SLn-presentation-map}
Let \(\St_n(\F_d)\) be the group on generators \(\{x_{ij}\mid i\neq j\}\) modulo \(\mathsf{R}_{\SL}(n,d)\).
Define \(\Phi^\mathrm{SL}:\St_n(\F_d)\to \SL_n(\F_d)\) by \(\Phi^\mathrm{SL}(x_{ij})=\CNOT_{ij}\).
\end{definition}

\begin{definition}\label{def:GLn-presentation-map}
Let \(G_n^{\mathrm{lin}}\) be the group on generators \(\{x_{ij}\mid i\neq j\}\cup\{m(k)\mid k\in\F_d^\times\}\) modulo \(\mathsf{R}_{\GL}(n,d)\).
Define \(\Phi^\mathrm{GL}:G_n^{\mathrm{lin}}\to \GL_n(\F_d)\) by \(\Phi^\mathrm{GL}(x_{ij})=\CNOT_{ij}\) and \(\Phi^\mathrm{GL}(m(k))=\M_1(k)\).
\end{definition}

\begin{definition}\label{def:AGLn-presentation}
Let \(G_n^{\mathrm{aff}}\) be the group on generators
\(
\{x_{ij}\mid i\neq j\}\cup\{m(k)\mid k\in\F_d^\times\}\cup\{t_i\mid 1\le i\le n\}
\)
modulo \(\mathsf{R}_{\AGL}(n,d)\).
Define \(\Phi^\mathrm{AGL}:G_n^{\mathrm{aff}}\to \AGL_n(\F_d)\) by
\(\Phi^\mathrm{AGL}(x_{ij}) = (\CNOT_{ij},0)\), \(\Phi^\mathrm{AGL}(m(k)) = (\M_1(k),0)\), and \(\Phi^\mathrm{AGL}(t_i) = (I_n,e_i)\).
\end{definition}

\subsection{Presentations}

\begin{theorem}\label{thm:SLn-presentation-uniform}
The homomorphism \(\Phi^\mathrm{SL}\) is an isomorphism.
\end{theorem}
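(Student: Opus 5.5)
The plan has three parts: show \(\Phi^\mathrm{SL}\) is well defined, show it is surjective, and show it is injective. The case \(n=1\) is trivial, since both groups are trivial. The case \(n=2\) is treated separately from \(n\ge3\), because the defining relations differ.

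\emph{Well-definedness and surjectivity.} First check that the images \(\CNOT_{ij}=I_n+e_{j,i}\) satisfy every relation of \(\mathsf{R}_{\SL}(n,d)\). Relation \eqref{ax:add} holds because \((I+e_{j,i})^d=I+d\,e_{j,i}=I\). Relation \eqref{ax:stein1} is the usual commutator identity for elementary transvections, and \eqref{ax:stein2} holds because \(e_{j,i}e_{\ell,k}=0=e_{\ell,k}e_{j,i}\) under the stated index conditions. For \(n=2\), \eqref{ax:torus} and \eqref{ax:weyl} are \(2\times2\) matrix multiplications using the displayed forms of \(w(a)\) and \(h(a)\). Surjectivity is the standard fact that elementary transvections generate \(\SL_n\) over a field: row-reduce \(A\in\SL_n(\F_d)\) to a diagonal matrix of determinant \(1\), then write \(\mathrm{diag}(a,a^{-1})\) as \(w(a)w(1)^{-1}\) inside each \(2\times2\) block.

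\emph{Injectivity for \(n\ge3\).} Over the prime field every root element \(x_{ij}(t)\) equals \(x_{ij}^t\) by \cref{rem:exp-convention}. Consequently \eqref{ax:stein1} and \eqref{ax:stein2}, together with \(x_{ij}(t)x_{ij}(u)=x_{ij}(t+u)\) (which follows from \eqref{ax:add}), are exactly Steinberg's relations. Hence \(\St_n(\F_d)\) is the Steinberg group. Steinberg's theorem gives that \(\Phi^\mathrm{SL}\) is a central extension with kernel \(K_2(\F_d)\), and \(K_2\) of a finite field vanishes (Milnor; Matsumoto's presentation by symbols). Therefore \(\Phi^\mathrm{SL}\) is injective. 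If a self-contained argument is preferred, the same Bruhat counting as in the \(n=2\) case below also works for \(n\ge3\), with monomial matrices in place of \(\{1,w\}\).

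\emph{Injectivity for \(n=2\), the main obstacle.} Here I would avoid \(K_2\) and argue by counting. Let \(U=\langle x_{12}\rangle\), a quotient of \(\Z/d\) by \eqref{ax:add}, and \(H=\{h(a)\}\), a quotient of \(\F_d^\times\) by \eqref{ax:torus}. The target is to show that every element of \(\St_2(\F_d)\) can be written either as \(h(a)x_{12}^t\) or as \(x_{12}^s\,h(a)w(1)\,x_{12}^t\). This gives at most
\[
d(d-1)+d^2(d-1)=d^3-d=|\SL_2(\F_d)|
\]
elements, and together with surjectivity this forces injectivity. To prove the normal-form claim, show that the set of such words is closed under left multiplication by \(x_{12}\) and \(x_{21}\) (and hence by their inverses, via \eqref{ax:add}). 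Closure under \(x_{12}\) follows once we know \(h(a)x_{12}^th(a)^{-1}=x_{12}^{a^2t}\). For \(x_{21}\), use \eqref{ax:weyl} to write \(x_{21}=w(1)^{-1}x_{12}^{-1}w(1)\). Then establish the key rank-one identity
\[
w(1)\,x_{12}^{u}\,w(1)^{-1}\in x_{12}^{-u^{-1}}\,h(\cdot)\,w(1)\,x_{12}^{-u^{-1}}\qquad(u\neq0),
\]
together with \(w(1)^2=h(-1)\). These follow by unwinding the definition of \(w(a)\) and applying \eqref{ax:torus}. The delicate step is deriving these conjugation formulas, in particular the action of \(h(a)\) on \(x_{12}\) and the squaring rule for \(w(1)\), purely from \eqref{ax:torus}, \eqref{ax:weyl} and the definitions of \(w(a)\) and \(h(a)\). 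I would try first to expand \(h(a)h(b)=h(ab)\) with \(b=a^{-1}\) and \(b=-1\), and compare with the Weyl relation conjugated by \(w(1)\), to extract \(w(1)^2=h(-1)\) and the conjugation action of \(H\) on \(U\).
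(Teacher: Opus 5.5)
Your well-definedness and surjectivity checks, and your $n\ge3$ injectivity argument (identify the group with the Steinberg group, whose central kernel is $K_2(\mathbb{F}_d)=0$), follow the paper's proof. For $n=2$ you take a genuinely different route. The paper cites Steinberg's rank-one presentation theorem; you instead bound $|\mathrm{St}_2(\mathbb{F}_d)|\le d^3-d$ with a Bruhat normal form. That route is sound, but as written its only nontrivial step is announced rather than carried out. That step goes through along the lines you suggest, as follows.

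Put $c=x_{12}^{-1}$ and $S=w(1)=c\,x_{21}\,c$. Then \eqref{ax:weyl} is equivalent to the braid relation $x_{21}cx_{21}=cx_{21}c$. So conjugation by $S$ exchanges $x_{21}$ and $c$, $S^2$ is central, and for $u\neq0$ one has $Sx_{12}^{u}S^{-1}=x_{21}^{-u}=x_{12}^{-u^{-1}}h(-u)\,S\,x_{12}^{-u^{-1}}$. This is your key identity.

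From the definition, $w(-t)=w(t)^{-1}$, so $h(-1)=S^{-2}$. Now apply \eqref{ax:torus} three times:
at $(-1,-1)$ it gives $S^4=1$, that is $S^2=h(-1)$;
at $(-1,t)$ it gives $w(t)^2=S^2$;
at $(t,t^{-1})$, after conjugating $w(t^{-1})$ by $S$, it gives $w(t)\,x_{21}^{t}x_{12}^{-t^{-1}}x_{21}^{t}=S^2$, hence $w(t)=x_{21}^{t}x_{12}^{-t^{-1}}x_{21}^{t}$.

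Comparing this with the defining word $x_{12}^{-t^{-1}}x_{21}^{t}x_{12}^{-t^{-1}}$ gives $w(t)x_{21}^{t}w(t)^{-1}=x_{12}^{-t^{-1}}$. Taking the $t^{-1}$-th power via \eqref{ax:add} gives $w(t)x_{21}w(t)^{-1}=x_{12}^{-t^{-2}}$ for every $t\in\mathbb{F}_d^\times$.

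The count uses two consequences of this.

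First, $h(t)x_{12}h(t)^{-1}=x_{12}^{t^{-2}}$. Your exponent $a^2$ is inverted: it is $h(a)^{-1}x_{12}h(a)$ that equals $x_{12}^{a^2}$, consistent with $h(a)\mapsto\mathrm{diag}(a,a^{-1})$ and $x_{12}\mapsto I+e_{2,1}$. This is harmless for closure.

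Second, $Sh(t)S^{-1}=h(t^{-1})$, which follows from $w(t)S^{-1}w(t^{-1})=S$ and $w(t)^2=S^2$. This is missing from your list, but you need it whenever $w(1)$ is pushed past the torus factor.

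With these, the words $h(\alpha)x_{12}^{t}$ and $x_{12}^{s}h(\alpha)Sx_{12}^{t}$ form a set closed under left multiplication by $x_{12}$ and $S$. These two generate the group, since $x_{21}=Sx_{12}^{-1}S^{-1}$, so your count finishes the proof.

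Compared with the paper, your route gives a self-contained $n=2$ case. It also makes explicit something the citation leaves implicit: the standard rank-one Weyl relation is imposed for every $w(t)$, whereas \eqref{ax:weyl} imposes only $t=1$, and the chain above is what reconciles them.

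Your aside that the same Bruhat counting ``also works for $n\ge3$'' is not a real shortcut. For $n\ge3$, $\mathsf{R}_{\mathrm{SL}}(n,d)$ contains no torus relation. Deriving multiplicativity of the $h_{ij}(t)$ from \eqref{ax:add}--\eqref{ax:stein2} amounts to showing that the Steinberg symbols vanish, which is exactly $K_2(\mathbb{F}_d)=0$. So keep the $K_2$ citation as your main argument there.
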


\begin{proof}
The defining relations \(\mathsf{R}_{\SL}(n,d)\) hold in \(\SL_n(\F_d)\), hence \(\Phi^\mathrm{SL}\) is well-defined.
We split the proof according to rank.

If \(n=1\), the generator set is empty and \(\SL_1(\F_d)=\{1\}\), so \(\Phi^\mathrm{SL}\) is an isomorphism.
If \(n=2\), this is the type-\(A_1\) Chevalley presentation: \eqref{ax:add} is root-subgroup additivity, \eqref{ax:weyl} is the rank-one Weyl conjugation relation, and \eqref{ax:torus} is the torus multiplicativity relation.
Steinberg's rank-one presentation theorem gives the simply connected Chevalley group of type \(A_1\), which is \(\SL_2(\F_d)\); see \cite[\S6, Lem.~37 and Thm.~8--9]{steinberg_lectures_2016} and \cite{matsumoto1969}.
Under \(\Phi^\mathrm{SL}\), the generators \(x_{12}\) and \(x_{21}\) are sent to the corresponding elementary transvections, so \(\Phi^\mathrm{SL}\) is exactly this standard comparison map and is an isomorphism.

Assume now that \(n\ge3\).
Its image contains all transvections \(\CNOT_{ij}\), hence contains \(E_n(\F_d)\), and \(E_n(\F_d)=\SL_n(\F_d)\) by \cite[\S7, p.~91]{steinberg_lectures_2016}, so \(\Phi^\mathrm{SL}\) is surjective.
The relations \(\mathsf{R}_{\SL}(n,d)\) are Steinberg's type-\(A_{n-1}\) relations for the simply connected Chevalley group.
By the Steinberg--Chevalley analysis, \(\ker(\Phi^\mathrm{SL})\) is central and governed by \(K_2(\F_d)\); see \cite[\S6, Thm.~8--9]{steinberg_lectures_2016} and \cite{matsumoto1969}.
For a finite field, \(K_2(\F_d)=0\) by Quillen \cite{quillen1973higherk}, hence \(\ker(\Phi^\mathrm{SL})=\{1\}\) and \(\Phi^\mathrm{SL}\) is injective.
\end{proof}

\begin{theorem}\label{thm:SLn-presentation}
If \(n\ge 3\), then \(\mathsf{R}_{\SL}(n,d)\) may use only \eqref{ax:add}--\eqref{ax:stein2}, since \eqref{ax:torus}--\eqref{ax:weyl} follow from \eqref{ax:add}--\eqref{ax:stein2} in type \(A_{n-1}\) \cite[\S6, Thm.~9]{steinberg_lectures_2016}.
\end{theorem}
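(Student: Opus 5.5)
The statement to prove is that for \(n\ge3\) the relations \eqref{ax:torus} and \eqref{ax:weyl} are consequences of \eqref{ax:add}--\eqref{ax:stein2}. The plan is to derive them directly inside the group presented by \eqref{ax:add}--\eqref{ax:stein2}, using a third index \(k\notin\{1,2\}\). This follows Steinberg's argument, specialised to type \(A_{n-1}\).

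\textbf{Step 1: abstract Weyl elements act on root subgroups as in \(\SL_n\).} From \eqref{ax:add}, each \(x_{ij}^{(\cdot)}\) is a homomorphism from \(\F_d\). Define
\[
w_{ij}(a)=x_{ij}^{-a^{-1}}\,x_{ji}^{a}\,x_{ij}^{-a^{-1}},
\]
so that \(w(a)=w_{12}(a)\). Using only \eqref{ax:stein1} and \eqref{ax:stein2}, show that conjugation by \(w_{12}(a)\) sends each root element \(x_{\alpha}^{t}\) with \(\alpha\neq(1,2),(2,1)\) to \(x_{s(\alpha)}^{\epsilon t c}\). Here \(s\) is the transposition of indices \(1,2\), \(\epsilon\) is a sign, and \(c\in\{a^{\pm1},a^{\pm2},1\}\) is the factor read off from the matrix \(w(a)\). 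For example,
\[
w_{12}(a)\,x_{1k}^{t}\,w_{12}(a)^{-1}=x_{2k}^{-a^{-1}t},
\]
and the analogous formulas hold for \(x_{2k}\), \(x_{k1}\) and \(x_{k2}\). Each case is proved by pushing \(x_{1k}^{t}\) through the three factors of \(w_{12}(a)\) in turn. \eqref{ax:stein1} produces one extra factor \(x_{2k}^{\pm}\) at each step. \eqref{ax:stein2} lets the extra factors commute past the remaining pieces, and at the end they cancel.

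\textbf{Step 2: \eqref{ax:weyl}.} Write \(x_{21}\) as a commutator through index \(k\). By \eqref{ax:stein1}, \([x_{2k},x_{k1}]=x_{21}\) up to the sign convention. Conjugate this commutator by \(w(1)\) and apply Step 1 to each factor. The result is a commutator of the form \([x_{1k}^{\pm1},x_{k2}^{\pm1}]\), which \eqref{ax:stein1} evaluates to \(x_{12}^{-1}\). The remaining work is sign bookkeeping.

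\textbf{Step 3: \eqref{ax:torus}.} By Step 1, \(h(a)=w(a)w(1)^{-1}\) acts on every root subgroup involving \(k\) by the same scalars as \(\mathrm{diag}(a,a^{-1},1,\dots)\). Consider \(z=h(a)h(b)h(ab)^{-1}\). It then centralises every \(x_{1k}\), \(x_{k1}\), \(x_{2k}\) and \(x_{k2}\), and every \(x_{ij}\) with \(i,j\notin\{1,2\}\). Through commutators, \eqref{ax:stein1} expresses \(x_{12}\) and \(x_{21}\) in terms of these elements, so \(z\) is central. It remains to show \(z=1\), not merely central; this is the main obstacle. The first attempt is to express \(h_{12}(a)\) via \(k\): show that \(h_{12}(a)=h_{1k}(a)h_{k2}(a)\), where the \(h_{ij}\) are the analogous torus elements. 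Then use the commutation of \(h\)-elements on disjoint index pairs, which follows from Step 1, to reduce multiplicativity to an identity that can be checked by the same pushing computation. If this becomes unwieldy, fall back on the route the statement's citation suggests. In the group defined by \eqref{ax:add}--\eqref{ax:stein2}, Steinberg's Lemma 37 and Theorem 9 show that the Steinberg symbols \(\{a,b\}=h(a)h(b)h(ab)^{-1}\) generate a central subgroup that is a quotient of \(K_2(\F_d)\). Quillen's theorem gives \(K_2(\F_d)=0\), so \(z=1\).
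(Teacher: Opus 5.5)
Your Steps 1--2 are sound, and more explicit than the paper, which simply cites Steinberg for both relations. For $n\ge3$, the action of $w(a)$ by conjugation on root elements other than $x_{12}^{\pm1},x_{21}^{\pm1}$ follows from \eqref{ax:add}--\eqref{ax:stein2}. Writing $x_{21}$ via \eqref{ax:stein1} as a commutator through a third index and conjugating then gives \eqref{ax:weyl}. The sign needs no bookkeeping: the result is some power of $x_{12}$, and its image in $\SL_n(\F_d)$ forces the exponent to be $-1$.

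The gap is your primary route to \eqref{ax:torus} in Step 3. The element $z=h(a)h(b)h(ab)^{-1}$ is a Steinberg symbol, and no computation that is uniform in the coefficients can show it is trivial. The pushing arguments of Step 1 are of exactly that kind.

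To see this, take $a=b=-1$. Since $w(-1)=w(1)^{-1}$, one has $h(-1)=w(1)^{-2}$, so \eqref{ax:torus} asserts $w(1)^4=1$. But $w(1)^{-4}=(x_{12}(1)x_{21}(-1)x_{12}(1))^4$ is the element the paper's last subsection identifies as the generator of $K_2(\mathbb{Z})\cong\mathbb{Z}/2$. It stays nontrivial in $\St_n(\mathbb{R})$, where the Steinberg relations hold verbatim. A coefficient-uniform derivation of \eqref{ax:torus} would therefore also prove it over $\mathbb{R}$, which is false.

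The reduction through $h_{12}(a)=h_{1k}(a)h_{k2}(a)$ is circular as well. Commutators of torus elements on overlapping index pairs are again symbols. Torus elements on disjoint pairs commute by \eqref{ax:stein2} and carry no information.

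So arithmetic specific to $\F_d$ is indispensable, and your ``fallback'' is not optional: it is the proof. It coincides with the paper's route, namely Steinberg's analysis of the symbols together with $K_2(\F_d)=0$.

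If you prefer not to quote Quillen, the finite-field step is elementary once Steinberg's symbol identities are available: the symbols are central, bimultiplicative and skew-symmetric, and satisfy $\{u,1-u\}=1$.
\begin{itemize}
\item Let $x$ generate the cyclic group $\F_d^\times$. Skew-symmetry gives $\{x,x\}^2=1$.
\item For odd $d$, counting shows there is a $u$ with $u$ and $1-u$ both non-squares. Writing $u=x^i$ and $1-u=x^j$ with $i,j$ odd, you get $1=\{x,x\}^{ij}$ with $ij$ odd, hence $\{x,x\}=1$, so every symbol vanishes.
\item For $d=2$ the torus is trivial.
\end{itemize}
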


\begin{corollary}\label{thm:GLn-presentation}
The homomorphism \(\Phi^\mathrm{GL}\) is an isomorphism.
In particular, \(\mathsf{R}_{\GL}(n,d)\) presents \(\GL_n(\F_d)\).
\end{corollary}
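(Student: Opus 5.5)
The plan is to realise \(\GL_n(\F_d)\) as the semidirect product \(\SL_n(\F_d)\rtimes\F_d^\times\), with \(\F_d^\times\) embedded as \(k\mapsto \M_1(k)\), and to match this decomposition inside \(G_n^{\mathrm{lin}}\). First, \(\Phi^\mathrm{GL}\) is well defined: each relation in \(\mathsf{R}_{\GL}(n,d)\) holds for the matrices. Conjugating \(\CNOT_{ij}\) by \(\M_1(x)\) scales the \((j,i)\) entry by \(x\) when \(j=1\), by \(x^{-1}\) when \(i=1\), and leaves it unchanged otherwise. These three cases are exactly \eqref{ax:diag-on-target}, \eqref{ax:diag-on-control} and \eqref{ax:diag-on-other}. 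Surjectivity is also direct. Given \(A\), the matrix \(A\,\M_1(\det A)^{-1}\) lies in \(\SL_n(\F_d)\), which lies in the image by \cref{thm:SLn-presentation-uniform}.

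For injectivity, let \(N\) be the subgroup of \(G_n^{\mathrm{lin}}\) generated by the \(x_{ij}\), and let \(M\) be the subgroup generated by the \(m(k)\).

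Step 1: \(N\) is normal and \(G_n^{\mathrm{lin}}=NM\). By \eqref{ax:diag-on-target}--\eqref{ax:diag-on-other}, conjugation by each \(m(x)\) sends every \(x_{ij}\) to a power of some \(x_{ij}\). Since \(m(x)^{-1}=m(x^{-1})\) by \eqref{ax:mult1} and \eqref{ax:mult}, conjugation by \(m(x)^{-1}\) behaves the same way. So \(N\) is normal, and every word can be rewritten as \(n\cdot m(k)\).

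Step 2: \(M\) is a quotient of \(\F_d^\times\). Relations \eqref{ax:mult1} and \eqref{ax:mult} give a surjection \(\F_d^\times\to M\), \(k\mapsto m(k)\).

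Step 3: \(N\) is a quotient of \(\St_n(\F_d)\). Since \(\mathsf{R}_{\SL}(n,d)\subseteq\mathsf{R}_{\GL}(n,d)\), there is a surjection \(\St_n(\F_d)\to N\).

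Now suppose \(\Phi^\mathrm{GL}(n\,m(k))=I\). Taking determinants gives \(k=1\), so \(m(k)=\id\). Then \(\Phi^\mathrm{GL}(n)=I\). Since \(\Phi^\mathrm{GL}\) restricted to \(N\), composed with the surjection \(\St_n(\F_d)\to N\), is \(\Phi^\mathrm{SL}\), and \(\Phi^\mathrm{SL}\) is injective by \cref{thm:SLn-presentation-uniform}, the surjection \(\St_n(\F_d)\to N\) is an isomorphism and \(n=\id\). Hence the kernel is trivial.

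The main obstacle is Step 1, in particular checking that the conjugation action really stays inside \(N\) for every index pair. Relation \eqref{ax:diag-on-control} gives \(m(x)\,x_{1j}^{x}\,m(x)^{-1}=x_{1j}\). To conjugate \(x_{1j}\) itself, raise this to the power \(x^{-1}\); this is legitimate because \(x_{1j}^d=\id\) by \eqref{ax:add}, so exponents are read in \(\F_d\) as in \cref{rem:exp-convention}. Relation \eqref{ax:diag-on-other} handles pairs not involving index \(1\), so the indices are covered uniformly. The case \(n=1\) is degenerate: there are no \(x_{ij}\), \(N\) is trivial, and \(G_1^{\mathrm{lin}}\cong\F_d^\times\cong\GL_1(\F_d)\) directly by Step 2 together with the determinant argument.
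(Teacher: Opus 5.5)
Your proposal is correct and follows essentially the same route as the paper. Surjectivity comes from splitting off \(\mathrm{diag}(\det A,1,\dots,1)\) and invoking \cref{thm:SLn-presentation-uniform}; injectivity comes from moving all \(m(\cdot)\) generators to one end, using the determinant to force \(k=1\), and then applying injectivity of \(\Phi^\mathrm{SL}\). Your Step~1 (normality of \(N\), inverses via \(m(x)^{-1}=m(x^{-1})\), and inverting the exponent in \eqref{ax:diag-on-control}) only spells out what the paper compresses into ``commute all \(m(\cdot)\) generators to the far left.''
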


\begin{proof}
Surjectivity: let \(A\in\GL_n(\F_d)\) and set \(x=\det(A)\).
Then \(S=\mathrm{diag}(x^{-1},1,\dots,1)A\in\SL_n(\F_d)\).
By \cref{thm:SLn-presentation-uniform}, \(S\) is a product of transvections, so \(A=\Phi^\mathrm{GL}(m(x))\Phi^\mathrm{GL}(U)\) for some word \(U\) in the \(x_{ij}\).

Injectivity: using \eqref{ax:diag-on-target}--\eqref{ax:diag-on-other}, commute all \(m(\cdot)\) generators to the far left, and use \eqref{ax:mult} to merge them, so every element of \(G_n^{\mathrm{lin}}\) is represented by a word \(m(x)U\) with \(U\) in the \(x_{ij}\).
If \(\Phi^\mathrm{GL}(m(x)U)=I_n\), taking determinants gives \(x=1\), hence \(\Phi^\mathrm{SL}(U)=I_n\).
By \cref{thm:SLn-presentation-uniform}, \(U=\id\) in \(\St_n(\F_d)\), hence \(m(x)U=\id\) in \(G_n^{\mathrm{lin}}\).
\end{proof}

\begin{corollary}\label{thm:AGLn-presentation}
The homomorphism \(\Phi^\mathrm{AGL}\) is an isomorphism.
In particular, \(\mathsf{R}_{\AGL}(n,d)\) presents \(\AGL_n(\F_d)\).
\end{corollary}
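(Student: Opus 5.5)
The plan is to follow the same pattern as the proof of \cref{thm:GLn-presentation}: a surjectivity argument, then a normal-form argument for injectivity that reduces to the already established linear presentation. The relations \(\mathsf{R}_{\AGL}(n,d)\) hold in \(\AGL_n(\F_d)\) by direct coordinate checks, using the composition law \((A_2,b_2)(A_1,b_1)=(A_2A_1,A_2b_1+b_2)\) of \cref{rem:AffLab-coordinates}. Hence \(\Phi^\mathrm{AGL}\) is a well-defined homomorphism.

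For surjectivity, write an element as \((A,b)=(I_n,b)(A,0)\). The factor \((A,0)\) lies in the image by \cref{thm:GLn-presentation}. The translation \((I_n,b)\) is the image of \(t_1^{b_1}\cdots t_n^{b_n}\), where the exponent convention of \cref{rem:exp-convention} is available because of \eqref{ax:X-order-d}.

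For injectivity, I would first show that every word in \(G_n^{\mathrm{aff}}\) equals one of the form \(t_1^{c_1}\cdots t_n^{c_n}\,W\), with \(W\) a word in the \(x_{ij}\) and \(m(\cdot)\). To do this, move every \(t_s\) to the left past the linear generators. Past an \(x_{ij}\), I use \eqref{ax:X-CX-target}, \eqref{ax:X-CX-other-wire}, and \eqref{ax:X-CX-control} read as \(x_{ij}t_i=t_it_jx_{ij}\). Past an \(m(k)\), I use \eqref{ax:X-M-same-wire} and \eqref{ax:X-M-other-wire}. Each such move replaces a linear generator followed by a translation with a product of translations followed by that same linear generator.

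Termination needs some care, because \eqref{ax:X-CX-control} creates an extra \(t_j\). The argument is an induction on the number of linear generators lying to the left of the rightmost \(t\). Each step pushes the translation block one linear letter further left and never increases the number of linear letters. Once all translations sit on the left, I sort them with \eqref{ax:X-commute} and reduce the exponents modulo \(d\) with \eqref{ax:X-order-d}.

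Now suppose \(\Phi^\mathrm{AGL}(t^cW)=\id\). Then \((I_n,c)\,(\Phi^\mathrm{GL}(W),0)=(\Phi^\mathrm{GL}(W),c)\). This equals \((I_n,0)\) only if \(c=0\) and \(\Phi^\mathrm{GL}(W)=I_n\). By \cref{thm:GLn-presentation}, \(W=\id\) already in \(G_n^{\mathrm{lin}}\). Since \(\mathsf{R}_{\GL}\subseteq\mathsf{R}_{\AGL}\), it follows that \(W=\id\) in \(G_n^{\mathrm{aff}}\), and with \(c=0\) the word reduces to the identity.

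I expect the main obstacle to be the moving step, specifically that the relations as listed cover every configuration. The multiplier acts only on wire \(1\), so moving \(t_i\) past \(m(k)\) is handled by \eqref{ax:X-M-same-wire} when \(i=1\) and by \eqref{ax:X-M-other-wire} otherwise. One might worry that inverse generators \(x_{ij}^{-1}\) need separate rules, but they are positive powers by \eqref{ax:add}, and \(m(k)^{-1}=m(k^{-1})\) by \eqref{ax:mult1} and \eqref{ax:mult}. So only positive letters occur and the listed rules suffice. This establishes the semidirect-product normal form \(G_n^{\mathrm{aff}}=T\cdot G_n^{\mathrm{lin}}\), and the determinant-free injectivity argument above finishes the proof.
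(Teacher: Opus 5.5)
Your proposal is correct and follows essentially the same route as the paper: surjectivity via the factorisation \((A,b)=(I_n,b)(A,0)\), and injectivity by pushing all translations to the left with the translation--linear relations, collecting them with \eqref{ax:X-order-d} and \eqref{ax:X-commute}, and then reducing the linear part to \cref{thm:GLn-presentation}. You also handle several points the paper leaves implicit: well-definedness, termination of the pushing procedure, replacing inverse letters by positive words, and transferring the triviality of \(W\) from \(G_n^{\mathrm{lin}}\) to \(G_n^{\mathrm{aff}}\) along the canonical homomorphism.
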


\begin{proof}
Surjectivity: every \((A,b)\in\AGL_n(\F_d)\) factors as \((I_n,b)\circ(A,0)\).
Write \(b=(b_1,\dots,b_n)\).
Since \((I_n,b)=\Phi^\mathrm{AGL}(t_1^{b_1}\cdots t_n^{b_n})\) and \((A,0)\) lies in the image by \cref{thm:GLn-presentation}, the map \(\Phi^\mathrm{AGL}\) is surjective.

Injectivity: using \eqref{ax:X-M-same-wire}--\eqref{ax:X-CX-other-wire}, commute all \(t_i\) generators to the far left, collecting them with \eqref{ax:X-order-d}--\eqref{ax:X-commute} into a word \(t_1^{b_1}\cdots t_n^{b_n}U\), where \(U\) is a word in \(\{x_{ij},m(k)\}\).
If \(\Phi^\mathrm{AGL}(t_1^{b_1}\cdots t_n^{b_n}U)=\id\), then \(\Phi^\mathrm{AGL}(U)=(A,0)\) and \(\Phi^\mathrm{AGL}(t_1^{b_1}\cdots t_n^{b_n})=(I_n,b)\) satisfy \((A,b)=(I_n,0)\), hence \(A=I_n\) and \(b=0\).
By \cref{thm:GLn-presentation}, \(A=I_n\) forces \(U=\id\).
By \eqref{ax:X-order-d}--\eqref{ax:X-commute}, \(b=0\) forces all \(b_i=0\), hence the translation word is \(\id\).
\end{proof}

\subsection{Beyond finite fields: presentations over \texorpdfstring{\(\Z\)}{Z} and other rings}
\label{subsec:beyond-finite-fields}

The finite-field presentation used in \cref{thm:SLn-presentation-uniform} is the \(R=\F_d\) instance of a more general Steinberg construction over coefficient rings.
For a unital ring \(R\) and \(n\ge 3\), one defines the Steinberg group \(\St_n(R)\) on generators \(x_{ij}(r)\) for \(i\neq j\) and \(r\in R\), subject to the usual Steinberg relations.
The assignment \(x_{ij}(r)\mapsto I_n+re_{j,i}\) induces a surjection \(\St_n(R)\twoheadrightarrow E_n(R)\) onto the elementary subgroup of \(\GL_n(R)\), and its kernel is central and canonically identified with \(K_2(R)\) \cite{weibel_kbook_2013}.
In the finite-field case, \(K_2(\F_d)=0\), which is exactly why the relations in \(\mathsf{R}_{\SL}(n,d)\) present \(\SL_n(\F_d)\).

Over \(R=\Z\), this kernel does not vanish.
One has \(K_2(\Z)\cong \Z/2\).
A generator of this kernel may be represented by the Steinberg symbol \(\{-1,-1\}=(x_{12}(1)x_{21}(-1)x_{12}(1))^4\), which generates \(\ker(\St_n(\Z)\to \SL_n(\Z))\) for all \(n\ge 3\) \cite{weibel_kbook_2013}.
Consequently, a Steinberg-style presentation yields \(\St_n(\Z)\); to obtain a presentation of \(\SL_n(\Z)\) one must add an additional relation killing this central element.

Finite presentations for \(\SL_3(\Z)\) and, more generally, for \(\SL_3\) over several rings of integers are worked out by Conder, Robertson and Williams \cite{conder_robertson_williams_1992}.
These results suggest a route to extending our diagrammatic axiomatisations from prime fields to integer-linear and integer-affine fragments, by retaining the local Steinberg relations and adjoining a small number of additional axioms that account for the relevant \(K_2\)-kernel.

A related extension covers determinant-\(\pm 1\) matrices.
Starting from a presentation of \(\SL_n(\Z)\), one may adjoin a generator \(r\) representing a determinant \(-1\) matrix such as \(\mathrm{diag}(-1,1,\dots,1)\), together with relations describing its conjugation action on elementary transvections, to obtain a presentation of \(\SL_n^{\pm}(\Z)=\GL_n(\Z)\).
Adding translations yields \(\Z^n\rtimes\GL_n(\Z)\); diagrammatically, this replaces the torsion relations \(t_i^d=\id\) by infinite-order translation laws, while keeping the same semidirect-product interaction between translations and linear generators as in \(\mathsf{R}_{\AGL}(n,d)\).

\end{document}